\pgfplotsset{compat=1.16}
\theoremstyle{plain}
\newtheorem{theorem}{Theorem}[section]
\newtheorem{lemma}[theorem]{Lemma}
\newtheorem{proposition}[theorem]{Proposition}
\newtheorem{corollary}[theorem]{Corollary}% reset theorem numbering for each chapter
\theoremstyle{plain}
\newtheorem{definition}{Definition}[section] % definition numbers are dependent on theorem numbers
\newtheorem{example}[definition]{Example}
\newtheorem{assumption}[definition]{Assumption}
\let\oldparagraph\paragraph
\renewcommand{\paragraph}[1]{\oldparagraph{#1.}}
\newcommand{\xhdr}[1]{\vspace{2mm} \noindent{\bf #1}}
\newcommand{\cdw}{{\rm{CDW}}}
\newcommand{\bspa}{{\rm BSPA}}
\newcommand{\brev}{{\rm BRev}}
\newcommand{\srev}{{\rm SRev}}
\newcommand{\vcg}{{\rm VCG}}
\newcommand{\val}{{v}}
\newcommand{\rev}{{\rm OPT}}
\newcommand{\OPT}{{\rm OPT}}
\newcommand{\wel}{{\rm WEL}}
\newcommand{\core}{{\rm CORE}}
\newcommand{\auc}{{\rm AUC}}
\newcommand{\itemsubset}{S}
\newcommand{\mech}{\mathcal{M}}
\newcommand{\const}{\mathcal{C}}
\newcommand{\constna}{\const_{n,\alpha}}
\newcommand{\quant}{q}
\newcommand{\primed}{^\dagger}
\newcommand{\doubleprimed}{^\ddagger}
\newcommand{\event}{\mathcal{E}}
\newcommand{\QUANT}{{\rm QuantCDW}}
\newcommand{\RANDQ}{{\rm RandQCDW}}
\newcommand{\RANDQPlus}{{\rm RandQCDW+}}
\newcommand{\naturals}{\mathbb{N}}
\newcommand{\reals}{\mathbb{R}}
\newcommand{\dist}{F}
\newcommand{\cdf}{F}
\newcommand{\pdf}{f}
\newcommand{\virtual}{\varphi}
\newcommand{\hazard}{h}
\newcommand{\quantMatrix}{Q}
\newcommand{\contribution}{\xi}
\newcommand{\Payoff}[2][]{\text{\bf Payoff}\ifthenelse{\not\equal{}{#1}}{_{#1}}{}\!\left[{\def\givenn{\middle|}#2}\right]}
\newcommand{\revcurve}{R}
\DeclareMathOperator{\argmax}{argmax}
\newcommand{\given}{\,\mid\,}
\newcommand{\prob}[2][]{\text{\bf Pr}\ifthenelse{\not\equal{}{#1}}{_{#1}}{}\!\left[{\def\givenn{\middle|}#2}\right]}
\newcommand{\expect}[2][]{\text{\bf E}\ifthenelse{\not\equal{}{#1}}{_{#1}}{}\!\left[{\def\givenn{\middle|}#2}\right]}
\newcommand{\rbr}[1]{\left(\,#1\,\right)}
\newcommand{\cbr}[1]{\left\{\,#1\,\right\}}
\newcommand{\tparen}{\big}
\newcommand{\tprob}[2][]{\text{\bf Pr}\ifthenelse{\not\equal{}{#1}}{_{#1}}{}\tparen[{\def\given{\tparen|}#2}\tparen]}
\newcommand{\texpect}[2][]{\text{\bf E}\ifthenelse{\not\equal{}{#1}}{_{#1}}{}\tparen[{\def\given{\tparen|}#2}\tparen]}
\newcommand{\sprob}[2][]{\text{\bf Pr}\ifthenelse{\not\equal{}{#1}}{_{#1}}{}[#2]}
\newcommand{\sexpect}[2][]{\text{\bf E}\ifthenelse{\not\equal{}{#1}}{_{#1}}{}[#2]}
\newcommand{\dd}{{\,\mathrm d}}
\newcommand{\abs}[1]{\left| #1 \right |}
\newcommand{\indicator}[2][]{\mathbbm{1}\ifthenelse{\not\equal{}{#1}}{_{#1}}{}\!\left[{\def\givenn{\middle|}#2}\right]}
\newcommand{\plus}[1]{\left({\def\givenn{\middle|}#1}\right)^+}
\begin{document}

\title{Competition Complexity in Multi-Item Auctions:\\
Beyond VCG and Regularity}

\author{Hedyeh Beyhaghi\thanks{University of Massachusetts Amherst. \url{hbeyhaghi@umass.edu}}
\and Linda Cai\thanks{University of California, Berkeley. Email: \url{tcai@berkeley.edu}}
\and Yiding Feng\thanks{Hong Kong University of Science and Technology. Email: \url{ydfeng@ust.hk}}
\and Yingkai Li\thanks{National University of Singapore. Email: \url{yk.li@nus.edu.sg}}
\and S. Matthew Weinberg\thanks{Princeton University. Email: \url{smweinberg@princeton.edu}}
}
\date{}

\maketitle

\begin{abstract}
We quantify the value of the monopoly's bargaining power in terms of competition complexity---that is, the number of additional bidders the monopoly must attract in simple auctions to match the expected revenue of the optimal mechanisms \citep[c.f.,][]{BK-96,EFFTW-17}---within the setting of multi-item auctions. 
We show that for simple auctions that sell items separately, the competition complexity is $\Theta(\frac{n}{\alpha})$ in an environment with $n$ original bidders under the slightly stronger assumption of $\alpha$-strong regularity, in contrast to the standard regularity assumption in the literature, which requires $\Omega(n \cdot \ln \frac{m}{n})$ additional bidders \citep{FFR-18}. This significantly reduces the value of learning the distribution to design the optimal mechanisms, especially in large markets with many items for sale.
For simple auctions that sell items as a grand bundle, we establish a constant competition complexity bound in a single-bidder environment when the number of items is small or when the value distribution has a monotone hazard rate.
Some of our competition complexity results also hold when we compete against the first best benchmark (i.e., optimal social welfare).
\end{abstract}
\newpage

\section{Introduction}
\label{sec:intro}

To maximize profit, a monopoly seller can either run a simple auction among potential buyers or restrict buyer access to gain more control over bargaining and negotiation. In their seminal work, \citet{BK-96} show that for single-item auctions under mild regularity assumptions, the latter approach offers no advantage: the seller can achieve at least the same revenue as the optimal mechanism by simply attracting one additional buyer and running a second-price auction.\footnote{The second-price auction awards the item to the highest bidder and charges them a price equal to the second-highest bid.} This result also suggests that, given limited resources, a firm may benefit more from expanding its market while using simple auctions rather than fine-tuning the selling mechanism based on detailed factors of the environment.

In many practical applications, the seller typically has multiple items to sell, and buyers often have combinatorial values for these items. In such settings, the Bayesian optimal mechanism becomes significantly more complex, even when selling just two items to a single bidder. Specifically, there are known examples where the optimal mechanism may sell each item separately, sell both items as a grand or nested bundle, offer a ``discount price'' for the bundle, sell lotteries, and other possibly counterintuitive structure \citep[see, e.g.,][]{MM-88,MM-92,HN-12,DDT-17,HR-15,HH-21,yan-22}. 
Given the complexity of designing the optimal mechanism, generalizing the result of \citet{BK-96} to combinatorial auctions becomes even more appealing. This is because the cost of attracting additional buyers is likely to be similar in both settings, while optimizing the mechanisms in combinatorial auctions is considerably more challenging.

Building on this motivation, \citet{EFFTW-17} introduce and initiate the study of \emph{competition complexity} in multi-item settings. Specifically, the competition complexity of a simple auction is defined as the number of additional bidders required to ensure that its expected revenue exceeds the Bayesian optimal revenue from the original number of bidders. Under this definition, in the single-item setting, the competition complexity of the second-price auction is 1 for regular bidders \citep{BK-96}. 
In multi-item settings, one classic simple mechanism that does not require detailed knowledge of the environment is the Vickrey-Clarke-Groves (VCG) auction.\footnote{VCG auction is the mechanism that implements the efficient allocation. In the special case with additive values, VCG auction sells the items separately using second price auctions.} 
For regular bidders,\footnote{In multi-item settings, we say a bidder is regular (resp.\ MHR) if her valuation for each item is realized from a regular (resp.\ MHR) valuation distribution. Similarly, we say a item is regular (resp.\ MHR) if its corresponding distribution is regular (resp.\ MHR). See \cref{sub:dist_assumption} for formal definitions.} when the buyers have additive valuations, the tight bound on the competition complexity for the VCG auction has been identified in \citet{BW-19} and \citet{DRWX-24}. Specifically, for regular bidders, the VCG auction has competition complexity $\Theta(n\log({m}/{n}))$ when the number of bidders $n$ is smaller than the number of items $m$.

Compared to the competition complexity in the single-item setting, the competition complexity of the VCG auction in multi-item settings depends on both the number of bidders, $n$, and the number of items, $m$. While these dependencies are theoretically necessary, it is unclear whether they are required in more realistic scenarios (e.g., natural distribution subclasses) or can be avoided by considering other mechanisms. To shed light on this, let us consider the following two examples.

\begin{example}[Competition complexity for exponential distribution]
\label{example:intro:exp}
    Suppose there is a single bidder and $m$ items. The value for each item is realized i.i.d.\ from the exponential distribution~$\dist$ with CDF $\dist(\val) = 1 - e^{-\val}$ for every $\val\in[0, \infty)$. 
    For this instance, the competition complexity of the VCG auction is 3.\footnote{We remark that analytically characterizing the revenue of the Bayesian optimal mechanism or other commonly used mechanisms in the multi-item setting for a specific distribution is known to be challenging and often requires technically non-trivial analysis \citep[e.g.,][]{DDT-17,DRWX-24}. In both \Cref{example:intro:exp,example:intro:eqr}, we numerically evaluate the performance of the VCG auction and the bundle-based second-price auction, which are plotted in \Cref{fig:intro:example}. 
    Except for $\Omega(\log(m))$ in \Cref{example:intro:eqr}, all other competition complexities reported in these two examples are based on this numerical evaluation.}
    % See Appendix XYZ for the detailed discussion of our numerical evaluation.
    In other words, the expected revenue of the VCG auction with 4 bidders, each having values drawn from the exponential distribution, exceeds the expected revenue of the Bayesian optimal mechanism when there is only a single bidder. 
    Moreover, there exists another simple mechanism---selling all items as a grand bundle using the second-price auction---that has the competition complexity of 2.
    Also see \Cref{fig:example-exp} for an illustration.
\end{example}

\begin{example}[Competition complexity for equal-revenue distribution]
\label{example:intro:eqr}
    Suppose there is a single bidder and $m$ items. The value for each item is realized i.i.d.\ from the equal-revenue distribution~$\dist$ with CDF $\dist(\val) = 1 - \frac{1}{\val}$ for every $\val\in[1, \infty)$. 
    For this instance, the competition complexity of the VCG auction is $\Theta(\log(m))$. In contrast, the competition complexity of the second-price auction for the grand bundle is 1.
    In other words, the expected revenue from selling all items as a grand bundle using the second-price auction with 2 bidders, each having values drawn from the equal-revenue distribution, exceeds the expected revenue of the Bayesian optimal mechanism when there is only a single bidder. Also see \Cref{fig:example-eqr} for an illustration.
\end{example}

\begin{figure}
    \centering
    \subfloat[Exponential distribution.]{
\begin{tikzpicture}

\begin{axis}[
    axis lines = middle,  % Draw the axis lines through the middle
    xlabel = {$m$},  % Label for the x-axis
    ylabel = {Rev},  % Label for the y-axis
    % legend style = {at={(0.5,-0.1)}, anchor=north, legend columns=1},  % Legend below the plot
    legend style = {at={(0.15,1)}, anchor=north west, draw=none, fill=none, font=\small},  % Adjusted legend position
    % legend style = {at={(0,1)}, anchor=north west, draw=none, fill=none, font=\small},  % Position and style of legend
    xtick = {0,1,40},  % Set the x-axis ticks manually
    xmin = 0, xmax = 40,  % Set the x-axis range to ensure ticks display correctly
    ytick = {0},
    yticklabels={0},
    width=8cm,  % Set the width of the plot
    height=6cm   % Set the height of the plot
]

% Draw the smooth curve and label it in the legend
\addplot[thick, smooth, black, line width=0.5mm] coordinates {
(1, 1.3682852097302007)
(2, 2.172469460764839)
(3, 3.0957404465251916)
(4, 4.018458798617175)
(5, 5.059263856399517)
(6, 5.997296618166068)
(7, 6.960127986147769)
(8, 8.028511635386536)
(9, 9.024327197268455)
(10, 10.0212144167913)
(11, 11.007457768128951)
(12, 12.024478328847515)
(13, 13.04964453030675)
(14, 14.03558156560873)
(15, 14.961461249550673)
(16, 15.975249541616588)
(17, 16.982974033976987)
(18, 18.053419340908)
(19, 19.076359829762463)
(20, 19.943719672632703)
(21, 20.856930084522002)
(22, 21.959405368124628)
(23, 22.959105508818308)
(24, 24.04648544048241)
(25, 24.94999767696281)
(26, 26.10489987009769)
(27, 26.974810113685653)
(28, 28.012447567467213)
(29, 28.959570281324314)
(30, 30.024468316363503)
(31, 31.01055906473247)
(32, 31.906424279737426)
(33, 32.98145651767206)
(34, 34.05655488699192)
(35, 34.99878747200814)
(36, 35.941567422954954)
(37, 36.961120421129564)
(38, 38.00231487359635)
(39, 39.039816577934644)
(40, 40.07786365643654)};
\addlegendentry{$\cdw_1$}  % Add a legend entry for the curve

% Draw the smooth curve and label it in the legend
\addplot[thick, smooth, blue, line width=0.5mm] coordinates {(1, 1.8539218460276783)
(2, 2.83962172951938)
(3, 3.822591812116968)
(4, 4.820208151274721)
(5, 5.8150972704153565)
(6, 6.843268893912943)
(7, 7.8066028468573645)
(8, 8.82354531845564)
(9, 9.814087278682214)
(10, 10.838193054995106)
(11, 11.801113731433198)
(12, 12.830983566884248)
(13, 13.826610557681349)
(14, 14.839531101902182)
(15, 15.791875405111746)
(16, 16.83381372419662)
(17, 17.80521309769748)
(18, 18.791887679515952)
(19, 19.84708854416819)
(20, 20.799507283556025)
(21, 21.813305554035736)
(22, 22.80312907803259)
(23, 23.780752855777212)
(24, 24.81187754205159)
(25, 25.81195121849545)
(26, 26.84692340877675)
(27, 27.82638197699361)
(28, 28.83234160759875)
(29, 29.748095762254664)
(30, 30.807969264318817)
(31, 31.81544301708546)
(32, 32.77452753414314)
(33, 33.891553463971256)
(34, 34.84092637391482)
(35, 35.84486727456337)
(36, 36.79513541595873)
(37, 37.75993776822781)
(38, 38.8458466527415)
(39, 39.783434288511955)
(40, 40.844127119606306)};
\addlegendentry{$\bspa_3$}  % Add a legend entry for the curve

% Draw the smooth curve and label it in the legend
\addplot[thick, smooth, blue, dashed, line width=0.5mm] coordinates {(1, 1.5108913819380843)
(2, 2.2631999564831737)
(3, 3.066834883098664)
(4, 3.8979441625581397)
(5, 4.769429491017126)
(6, 5.665543198041163)
(7, 6.517349768056321)
(8, 7.4327658335635105)
(9, 8.33947071435049)
(10, 9.261972729989822)
(11, 10.170044273458641)
(12, 11.055984069256311)
(13, 12.017066610817341)
(14, 12.944718759119507)
(15, 13.83041047770058)
(16, 14.763017406211613)
(17, 15.685535824748277)
(18, 16.61193717721951)
(19, 17.60163564681068)
(20, 18.445543785256753)
(21, 19.432914438969277)
(22, 20.33495781455149)
(23, 21.31236155667737)
(24, 22.249512748271428)
(25, 23.194814309643185)
(26, 24.20124444994268)
(27, 25.11206123304587)
(28, 26.079436698454312)
(29, 26.938461138949062)
(30, 27.91197211011441)
(31, 28.89466864807837)
(32, 29.74147029285811)
(33, 30.777366509748042)
(34, 31.74440993659137)
(35, 32.66950161367302)
(36, 33.634640471015516)
(37, 34.558326091529146)
(38, 35.53190950726476)
(39, 36.493505085837505)
(40, 37.48837354497289)};
\addlegendentry{$\bspa_2$}  % Add a legend entry for the curve

% Draw the smooth curve and label it in the legend
\addplot[thick, smooth, red, line width=0.5mm] coordinates {(1, 2.0956587414894927)
(2, 3.1727196838935123)
(3, 4.254231031337463)
(4, 5.332414836279613)
(5, 6.433602663399976)
(6, 7.529883811015701)
(7, 8.58419124445429)
(8, 9.658456171693329)
(9, 10.751641159730502)
(10, 11.84036760876)
(11, 12.886542607963072)
(12, 14.006424086629202)
(13, 15.116509885085401)
(14, 16.196480549097885)
(15, 17.234985815344768)
(16, 18.284963274928245)
(17, 19.439392708398934)
(18, 20.49862518790427)
(19, 21.59700116423063)
(20, 22.653979045202014)
(21, 23.75969969993831)
(22, 24.82372477928131)
(23, 25.85032469450051)
(24, 27.0262648035682)
(25, 28.12704112276007)
(26, 29.177698035432996)
(27, 30.231996519844888)
(28, 31.380917132062784)
(29, 32.383230216374045)
(30, 33.51430316228393)
(31, 34.56263228206498)
(32, 35.649533239661)
(33, 36.788922294595245)
(34, 37.911759937795836)
(35, 38.945113418829486)
(36, 39.95965008207006)
(37, 41.021068020533455)
(38, 42.17255273925109)
(39, 43.231338097563714)
(40, 44.3935623774543)};
\addlegendentry{$\vcg_4$}  % Add a legend entry for the curve

% Draw the smooth curve and label it in the legend
\addplot[thick, smooth, red, dashed, line width=0.5mm] coordinates {(1, 1.8539218460276783)
(2, 2.675215506081067)
(3, 3.500354203945881)
(4, 4.330796765550376)
(5, 5.1635957200906875)
(6, 6.017872965369171)
(7, 6.817183458393404)
(8, 7.669109193811497)
(9, 8.495468508017701)
(10, 9.332782469593745)
(11, 10.15352606311076)
(12, 11.019070185047877)
(13, 11.844490195234796)
(14, 12.69755954889937)
(15, 13.473564522388742)
(16, 14.320089176533976)
(17, 15.14337058667299)
(18, 15.999068557025486)
(19, 16.847416595549777)
(20, 17.668815092770014)
(21, 18.480075183355435)
(22, 19.320951196898378)
(23, 20.127089884796685)
(24, 21.00309870446764)
(25, 21.859214942152438)
(26, 22.695879734855307)
(27, 23.495282749342213)
(28, 24.38804899489288)
(29, 25.09488163240524)
(30, 26.007808841380452)
(31, 26.787441823873934)
(32, 27.641026598319876)
(33, 28.5218457380169)
(34, 29.418831977305086)
(35, 30.189601107346213)
(36, 30.98446679512331)
(37, 31.798684182378604)
(38, 32.68439233392313)
(39, 33.48931541868708)
(40, 34.377977378711556)};
\addlegendentry{$\vcg_3$}  % Add a legend entry for the curve

\end{axis}

\end{tikzpicture}
\label{fig:example-exp}
}~~~~
    \subfloat[Equal-revenue distribution.]{
\begin{tikzpicture}

\begin{axis}[
    axis lines = middle,  % Draw the axis lines through the middle
    xlabel = {$m$},  % Label for the x-axis
    ylabel = {Rev},  % Label for the y-axis
    % legend style = {at={(0.5,-0.1)}, anchor=north, legend columns=1},  % Legend below the plot
    legend style = {at={(0.15,1)}, anchor=north west, draw=none, fill=none, font=\small},  % Adjusted legend position
    % legend style = {at={(0,1)}, anchor=north west, draw=none, fill=none, font=\small},  % Position and style of legend
    xtick = {0,1,40},  % Set the x-axis ticks manually
    xmin = 0, xmax = 40,  % Set the x-axis range to ensure ticks display correctly
    ytick = {0},
    yticklabels={0},
    width=8cm,  % Set the width of the plot
    height=6cm   % Set the height of the plot
]

% Draw the smooth curve and label it in the legend
\addplot[thick, smooth, black, line width=0.5mm] coordinates {(1, 1.0)
(2, 3.012947383862324)
(3, 5.483601519090033)
(4, 8.317515418506876)
(5, 11.420878407350248)
(6, 14.758202866711029)
(7, 18.151767411958993)
(8, 21.813510361237416)
(9, 25.492280624779234)
(10, 29.19104440326029)
(11, 33.43906639818819)
(12, 37.29547812629904)
(13, 41.02385281190555)
(14, 45.47856788630751)
(15, 49.704338401395)
(16, 53.85566247993715)
(17, 58.32759388004927)
(18, 62.87953555029362)
(19, 67.3116806491388)
(20, 72.23996611794003)
(21, 76.43897167021542)
(22, 81.65873569716729)
(23, 85.81135045224627)
(24, 90.68077063403656)
(25, 95.42982224021891)
(26, 99.85016136229808)
(27, 104.99598030720063)
(28, 109.8075086856517)
(29, 115.06339039214416)
(30, 119.69528539880856)
(31, 124.84953307685193)
(32, 129.79473481519108)
(33, 134.89788075604483)
(34, 140.37071487755782)
(35, 145.2273885806756)
(36, 150.3123656861908)
(37, 155.00871588227452)
(38, 160.3426105345174)
(39, 165.5524318370346)
(40, 171.22839482852123)};
\addlegendentry{$\cdw_1$}  % Add a legend entry for the curve

% Draw the smooth curve and label it in the legend
\addplot[thick, smooth, blue, line width=0.5mm] coordinates {(1, 3.012466614637677)
(2, 5.843250929723265)
(3, 9.131933310222879)
(4, 12.701305197847732)
(5, 16.566017539705925)
(6, 20.609883891991426)
(7, 24.680907463311307)
(8, 28.856215035723903)
(9, 33.48410101734121)
(10, 37.68284101396826)
(11, 42.18913878842266)
(12, 46.803752073834985)
(13, 51.549045897766554)
(14, 56.271714173472375)
(15, 61.37575927673981)
(16, 66.36978000586049)
(17, 70.94053302299204)
(18, 76.23634721816038)
(19, 81.27955760932825)
(20, 86.69185602759669)
(21, 91.60328276461311)
(22, 97.32186727034758)
(23, 102.52087635320716)
(24, 107.40759279305732)
(25, 113.05437751085661)
(26, 119.10540892798478)
(27, 123.7965871413247)
(28, 129.91290397503204)
(29, 134.95234058257967)
(30, 140.91589161277585)
(31, 146.68155062133476)
(32, 151.34768318351496)
(33, 157.4636133573102)
(34, 163.5976455401868)
(35, 168.88341808293123)
(36, 174.32749050337003)
(37, 179.9733432221052)
(38, 186.10297475991442)
(39, 191.70832561074934)
(40, 197.76563853872)};
\addlegendentry{$\bspa_2$}  % Add a legend entry for the curve

% Draw the smooth curve and label it in the legend
\addplot[thick, smooth, red, line width=0.5mm] coordinates {(1, 4.993432874815371)
(2, 8.966709399069515)
(3, 12.915022233344695)
(4, 16.96786453850246)
(5, 20.900881604557025)
(6, 25.053129609669373)
(7, 29.00627303784104)
(8, 33.095651495151046)
(9, 37.00881327874097)
(10, 40.95717600393495)
(11, 45.0314294362327)
(12, 48.99412175090896)
(13, 52.88953454936329)
(14, 56.86843721942517)
(15, 61.1934857847612)
(16, 65.02781291185316)
(17, 69.18680170342093)
(18, 72.84615229007083)
(19, 76.93321449959087)
(20, 81.36180678005314)
(21, 84.82707045324031)
(22, 89.11269628632257)
(23, 93.12055171470773)
(24, 97.01412973278367)
(25, 100.90978081124598)
(26, 104.85701299440427)
(27, 108.66559226053971)
(28, 113.05209442761372)
(29, 117.12073477730019)
(30, 120.98618077011292)
(31, 125.06127057045263)
(32, 128.74389849839983)
(33, 132.93628686378995)
(34, 136.99933059644474)
(35, 141.16544180127795)
(36, 145.47213956460354)
(37, 148.87134303118856)
(38, 153.41191029611235)
(39, 156.8051002488904)
(40, 161.08947439018775)};
\addlegendentry{$\vcg_4$}  % Add a legend entry for the curve

\end{axis}

\end{tikzpicture}
\label{fig:example-eqr}
}
\caption{The revenue comparison between different mechanisms and benchmark for selling $m\in[40]$ i.i.d.\ items to a single bidder. $\cdw_n$ is a duality-based upper bound of the Bayesian optimal revenue for $n$ bidders, introduced in \citet{CDW-16}. $\vcg_n$ and $\bspa_n$ are the expected revenue from the VCG auction and the second-price auction for the grand bundle for $n$ bidders, respectively. All curves are evaluated numerically.}
    \label{fig:intro:example}
\end{figure}
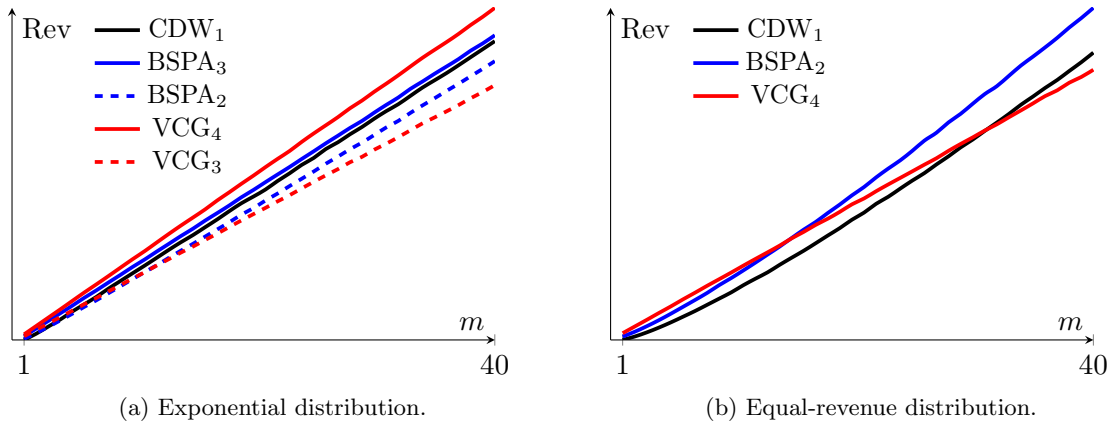

In both \Cref{example:intro:exp,example:intro:eqr}, we focus on the single-bidder instances, whose Bayesian optimal mechanisms are already complicated. Both the exponential distribution and the equal-revenue distribution are paradigmatic in the mechanism design literature. In \Cref{example:intro:exp}, the value distribution is exponential, which is not only regular but also satisfies a stronger property known as the monotone hazard rate (MHR) condition. Since the theoretically tight bound for the competition complexity of the VCG auction is derived for the family of regular distributions, it is natural to ask the following question:
\begin{displayquote}
   (Question 1) \emph{Can we obtain improved competition complexity bounds of the VCG auction under stronger distributional assumptions than regularity?}
\end{displayquote}
\Cref{example:intro:eqr} considers the equal-revenue distribution, which is exactly the distribution used in \citet{BW-19} to prove the competition complexity lower bound for the VCG auction. For the VCG auction, $\Theta(\log(m))$ additional bidders is necessary to outperform the Bayesian optimal revenue for a single bidder. However, as we illustrate in this example, there exists another simple mechanism---selling all items as a grand bundle using the second-price auction---which has the competition complexity of 1 for this instance. Motivated by this, it is natural to ask the following question:
\begin{displayquote}
   (Question 2) \emph{Can we obtain improved competition complexity bounds by considering other simple mechanisms beyond the VCG auction?}
\end{displayquote}
In this paper, we answer Question 1 in the affirmative. Additionally, we make conceptually novel and technically non-trivial progress in partially answering Question 2. In the following subsection, we provide a summary of our results and proof techniques.

\subsection{Our Contributions and Techniques} \label{sec:intro:contribution}

In this subsection we explain the contributions and techniques of this paper.

Our paper focuses on revenue maximization in a classic multi-item setting where bidders are ex ante symmetric and have additive valuation functions. Specifically, each item $j$ is associated with a valuation distribution $\dist_j$. Each bidder $i$ has a private value $\val_{ij}$ realized from distribution $\dist_j$ for item $j$, and his valuation (i.e., willingness-to-pay) for item subset $\itemsubset$ is $\sum_{j\in\itemsubset}\val_{ij}$. The goal of the seller is to design truthful mechanisms to maximize the expected revenue (i.e., total expected payment from all bidders).\footnote{By the revelation principle \citep{mye-81}, restricting to truthful (or more precisely, Bayesian incentive compatible) mechanisms is without loss of generality.}

First, we introduce the mechanisms and benchmarks studied in the this paper. (See \Cref{sec:prelim} for the formal definitions.)
\begin{itemize}
    \item {\sf Bayesian optimal mechanism} {(\rev)}: the revenue-optimal mechanism among all mechanisms. It could be complicated even for selling two items to a single bidder. We refer to its expected revenue as the Bayesian optimal revenue.
    \item {\sf Duality benchmark} {(\cdw)}: viewing the Bayesian optimal mechanism as a (primal) linear program, this benchmark is a constructed feasible dual solution of the dual program developed in \citet{CDW-16}. Due to weak duality of the linear program, it is an upper bound of the Bayesian optimal revenue. It is one of the most commonly used benchmarks in the multi-item mechanism design (e.g. \citep{CZ-17,EFFTW-17,BCWZ-17,DW-17,CS-21}).
    \item {\sf Welfare benchmark} {(\wel)}: the expected optimal welfare for the instance. It is an upper bound of the Bayesian optimal revenue, and also known as the first-best benchmark.
    \item {\sf Vickery-Clarke-Groves (VCG) auction} {(\vcg)}: this is a prior-independent mechanism. Since bidders have additive valuation functions, it sells each item separately using the second-price auction.
    \item {\sf Bundle-based second-price auction} {(\bspa)}: this is a prior-independent mechanism. It sells all items as a grand bundle using the second-price auction.
\end{itemize}
All three mechanisms and two benchmarks form the hierarchy $\vcg,\bspa\leq \rev\leq \cdw \leq \wel$, with only one incomparable pair ``{\vcg} vs.\ {\bspa}''. See \Cref{fig:mechanism diagram} for a graphical illustration.
To emphasize the dependence on the number of bidders and shed lights on the competition complexity, we denote the Bayesian optimal revenue, the expected value of duality benchmark and welfare benchmark, and the expected revenue of the VCG auction and bundle-based second-price auction for $n$ bidders as $\rev_n$, $\cdw_n$, $\wel_n$, $\vcg_n$, and $\bspa_n$, respectively.

\begin{figure}
    \centering
    \begin{tikzpicture}[scale = 1.5,every node/.style={font=\small}]

    \fill[red, fill opacity = 0.05] (5.4,-0.5) -- (0.6, -0.5) -- (-1.9, 0.6) -- (-1.9, 1.75) -- (5.4, 1.75) -- cycle;

    \fill[red, fill opacity = 0.025] (5.4,-0.5) -- (0.6, -0.5) -- (-1.9, 0.6) -- (-1.9, 2.9) -- (5.4, 2.9) -- cycle;
    
    \fill[blue, fill opacity = 0.05] (-5.0,-0.5) -- (-0.6, -0.5) -- (1.9, 0.6) -- (1.9, 2.9) -- (-5.0, 2.9) -- cycle;

    % \node(thm1a) at (-3., 2.5) {adding $\Theta(\frac{n}{\alpha})$ bidders};
    % \node(thm1b) at (-3., 2.25) {$\alpha$-regular, $n\geq 1$, $m\geq 1$};
    % \node(thm1a) at (-3., 2.25) {$\Theta(\frac{n}{\alpha})$ additional bidders};
    \node(thm1a) at (-3.6, 2.25) {$\vcg_{n + \Theta(\frac{n}{\alpha})} \geq \wel_{n}$};
    \node(thm1b) at (-3.6, 2.55) {$\alpha$-regular, $n\geq 1$, $m\geq 1$};
    \node(thm1c) at (-3.6, 2) {[\Cref{thm:intro:vcg:alpha-regular}]};

    % \node(thm2a) at (3.25, 2.25) {adding $3$ bidders};
    \node(thm2a) at (3.85, 2.25) {$\bspa_{4} \geq \wel_1$};
    \node(thm2b) at (3.85, 2.5) {MHR, $n = 1$, $m\geq 1$};
    \node(thm2c) at (3.85, 2) {[\Cref{thm:intro:bspa:mhr}]};

    % \node(thm3a) at (3.25, 1.6) {adding $3$ bidders};
    % \node(thm3b) at (3.25, 1.35) {regular, $n = 1$, $m = 3$};
    \node(thm3a) at (3.85, 1.35) {$\bspa_{1 + m} \geq \cdw_1$};
    \node(thm3b) at (3.85, 1.6) {regular, $n = 1$, $m \in \{2, 3\}$};
    \node(thm3c) at (3.85, 1.15) {[\Cref{thm:intro:bspa:regular:3 items}]};

    % \node(thm4a) at (3.25, 0.8) {adding $1$ bidders, $\frac{1}{48}$-approx.};
    % \node(thm4b) at (3.25, 0.55) {regular, $n = 1$, $m \geq 1$};
    \node(thm4a) at (3.85, 0.55) {$48\cdot \bspa_{2} \geq \rev_{1}$};
    \node(thm4b) at (3.85, 0.8) {regular, $n = 1$, $m \geq 1$};
    \node(thm4c) at (3.85, 0.3) {[\Cref{thm:intro:bspa:regular:adding one bidder}]};
    
    % \node(thm3) at (3.45, 1.5) {\Cref{prop:intro:bspa}};

    % \node(thm3) at (3.45, 1.2) {\Cref{thm:comp_complexity_3_item}};

    \node(wel) at (0, 2.5) {{Welfare benchmark} ({\wel})};

    \node(cdw) at (0, 1.6) {{Duality benchmark} ({\cdw})};
    
    \node(rev) at (0, 0.75) {{Bayesian optimal mechanism} ({\rev})};

    \node(vcg) at (-2.85, -0.2) {{Vickery-Clarke-Grove auction} ({\vcg})};
    
    \node(bspa) at (2.85, -0.2) {{Bundle-based second-price auction} ({\bspa})};
    
    \draw[thick, <-, >=triangle 45] (wel.south) to node[anchor = -30] {} (cdw.north);
    
    \draw[thick, <-, >=triangle 45] (cdw.south) to node[anchor = -30] {} (rev.north);
    
    \draw[thick, <-, >=triangle 45] (rev.west) to node[anchor = -30] {} (vcg.north);
    
    \draw[thick, <-, >=triangle 45] (rev.east) to node[anchor = -30] {} (bspa.north);

    % \node(n1) at (2.25, 0.7) {{SS} ({BOM})};
    % \node(n2) at (0, 0) {{BayesianOptimalUniformReserve} ($BOUR$)};
    % \node(n3) at (4.5, 0) {{BayesianOptimalSequentialPricing} ({BOSP})};
    % \node(n4) at (2.25, -0.7) {{BayesianOptimalUniformPricing} ({BOUP})};
    % \draw[thick, <-, >=triangle 45] (n1.west) to node[anchor = -30] {$\star$} (n2.north);
    % \draw[thick, <-, >=triangle 45] (n1.east) to (n3.north);
    % \draw[thick, <-, >=triangle 45] (n1.south) to node[left] {$\star$} (n4.north);
    % \draw[thick, <-, >=triangle 45] (n2.south) to (n4.west);
    % \draw[thick, <-, >=triangle 45] (n3.south) to node[anchor = 150] {$\star$} (n4.east);
\end{tikzpicture}
    \caption{A Hasse diagram of the multi-item mechanisms and benchmarks (i.e., an arrow ``$\mech_{1} \to \mech_{2}$'' means the latter $\mech_{2}$ has a weakly higher payoff than the former $\mech_{1}$). The approximation ratio and competition complexity results of this paper 
    (\Cref{thm:intro:vcg:alpha-regular,thm:intro:bspa:mhr,thm:intro:bspa:regular:3 items,thm:intro:bspa:regular:adding one bidder})
    are illustrated as the shaded areas, which specify the studied mechanism and its compared benchmarks. The subscript index denotes the number of bidders in the mechanism.}
    \label{fig:mechanism diagram}
\end{figure}

\xhdr{Improved competition complexity of the VCG auction.} In the first part of this paper, we focus on the VCG auction and study its competition complexity within the class of $\alpha$-strongly regular distributions. The concept of strong regularity was first introduced by \citet{CR-14} and has since been explored in various contexts within the mechanism design literature \citep[e.g.,][]{CR-17,SS-19,GPZ-21,ABB-22,ABB-23,GPT-23}. Specifically, $\alpha$-strongly regular distributions serve as an interpolation between MHR distributions and regular distributions, with $\alpha = 1$ and $\alpha = 0$ representing the two extreme cases, respectively. 

Recall that both the exponential distribution and the equal-revenue distribution, as discussed in \Cref{example:intro:exp,example:intro:eqr}, are paradigmatic (and often provable ``worst-case" instances) within the classes of MHR and regular distributions, respectively. Inspired by the performance of the VCG auction in these two examples, along with the established tight bound of competition complexity, $\Theta(n\log(\frac{m}{n}))$, for regular distributions \citep{BW-19}, it becomes natural and intriguing to investigate the competition complexity of the VCG auction for MHR distributions and, more broadly, for $\alpha$-strongly regular distributions for every $\alpha \in (0, 1]$. 

As the first main result of this paper, \Cref{thm:intro:vcg:alpha-regular} characterizes the \emph{tight} competition complexity of the VCG auction for $\alpha$-strongly regular distributions given every $\alpha\in(0, 1]$. This helps us answer the Question 1 imposed earlier in our introduction. (The formal theorem statements can be found in \Cref{thm:welfare,thm:tightness_against_revenue}.)

\begin{theorem}[Competition complexity of {\vcg} for $\alpha$-strongly regular items]
\label{thm:intro:vcg:alpha-regular}
For any $\alpha \in (0, 1]$, $n\geq 1$ bidders and $m\geq 1$ asymmetric $\alpha$-strongly regular items, the tight competition complexity of the VCG auction against the welfare benchmark is $\constna$, i.e., $\vcg_{n+\constna} \geq \wel_n$. Here $\constna$ is a universal constant dependent only on $n$ and $\alpha$, satisfying
\begin{align*}
    \constna\in\left(\max\left\{\frac{1}{\alpha}-1,1\right\}\cdot n, \frac{11}{\alpha}\cdot n\right].
\end{align*}
Furthermore, having $\constna$ additional bidders are necessary even against the Bayesian optimal mechanism. Specifically, for every $\alpha \in (0, 1)$ and $n\in\naturals$, there exists an instance with {$m$} symmetric $\alpha$-strongly regular items such that {as the number of items, $m$, approaches infinity}, the expected revenue of the VCG auction for $n + \constna - 1$ is strictly less than the Bayesian optimal revenue, i.e., $\vcg_{n+\constna - 1} < \rev_n$.
\end{theorem}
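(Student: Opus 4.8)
\emph{Reduction to a single item.} Since valuations are additive and the items are independent, $\vcg_k$, $\wel_n$, and a natural lower bound on $\rev_n$ all decompose additively across the $m$ items, so it suffices to prove the claim for a single $\alpha$-strongly regular item with distribution $\dist$. I would pass to quantile space: writing $q=1-\dist(\val)$, $V(q)=\dist^{-1}(1-q)$, and the revenue curve $R(q)=q\,V(q)$, one has the identities $\expect{\max\text{ of }n\text{ draws}}=\int_0^1 \frac{R(q)}{q}\,n(1-q)^{n-1}\dd q$ and $\expect{\text{2nd highest of }k\text{ draws}}=\int_0^1 R(q)\,k(k-1)(1-q)^{k-2}\dd q$. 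Both are linear functionals of $R$, so the comparison ``$\vcg_{n+c}\ge\wel_n$ for a single item'' only has to be verified on the extreme rays of the cone of admissible revenue curves.

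\emph{The upper bound on $\constna$.} The condition $\virtual'\ge\alpha$ becomes, after substituting $R(q)=q\,h(q)$, the statement that $q^{2-\alpha}h'(q)$ is non-increasing (with $h'\le 0$ and $h\ge 0$). Integrating this out, the extreme rays of admissible $R$ are linear on an initial interval $[0,s]$ and a combination of $q$ and $q^{\alpha}$ on $[s,1]$; one checks that the number of extra bidders needed is largest in the $s\to 0$ case $R(q)=q^{\alpha}$ --- the Pareto distribution $\dist(\val)=1-\val^{-1/(1-\alpha)}$ (which as $\alpha\to 1$ degenerates to a point mass and is superseded by its rescaled limiting shape $R(q)=-q\ln q$, i.e.\ the exponential, in the boundary case $\alpha=1$). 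For $R(q)=q^{\alpha}$ the two integrals become Beta integrals, giving $\expect{\max\text{ of }n}= n!\big/\prod_{i=0}^{n-1}(\alpha+i)$ and $\expect{\text{2nd highest of }k}= k!\big/\prod_{i=1}^{k-1}(\alpha+i)$, so $\vcg_{n+c}\ge\wel_n$ is equivalent to $\prod_{i=n}^{n+c-1}\frac{i+1}{\alpha+i}\ge\frac1\alpha$. Thus $\constna$ is exactly the least integer $c$ for which this product inequality holds; writing each factor as $1+\frac{1-\alpha}{\alpha+i}$ and estimating the corresponding sums of logarithms from both sides localizes the threshold at $\frac{n+c}{n}\approx\alpha^{-1/(1-\alpha)}$, and carrying out these estimates with care gives $\constna\in\bigl(\max\{\tfrac1\alpha-1,1\}\cdot n,\ \tfrac{11}{\alpha}\cdot n\bigr]$. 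Running the identical per-item argument for asymmetric items and summing shows the same $\constna$ works for every $m$, which establishes $\vcg_{n+\constna}\ge\wel_n$.

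\emph{The lower bound (tightness).} For the matching lower bound I would use the extremal instance itself: $m$ i.i.d.\ items, each Pareto with $\dist(\val)=1-\val^{-1/(1-\alpha)}$, and let $m\to\infty$ (here $\alpha\in(0,1)$). Because this distribution is the worst case identified above, at only $\constna-1$ extra bidders the VCG (second-price) auction strictly falls short of welfare on it: by minimality of $\constna$, $\vcg_{n+\constna-1}=\lambda\cdot\wel_n$ for some constant $\lambda<1$ that is independent of $m$ (finite because $\alpha>0$). It remains to argue that $\rev_n\to\wel_n$ on this instance as $m\to\infty$. This is a large-market phenomenon: with $m$ i.i.d.\ items of finite mean, the optimal mechanism extracts a $(1-o_m(1))$ fraction of the optimal welfare --- e.g.\ via the efficient (VCG) allocation together with a fixed per-bidder entry fee calibrated slightly below $\expect{U_1}$, where $U_i=\sum_j\plus{\val_{ij}-\max_{k\ne i}\val_{kj}}$ is bidder $i$'s VCG utility, which concentrates around its mean since it is a sum of $m$ i.i.d.\ terms of finite mean; a standard concentration argument then shows the resulting revenue is $(1-o_m(1))\wel_n$. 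Hence for all large $m$, $\rev_n>\lambda\,\wel_n=\vcg_{n+\constna-1}$, as required.

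\emph{Where the difficulty lies.} The crux is the single-item extremal analysis: correctly characterizing the extreme rays of the $\alpha$-strongly regular cone and proving (for $\alpha<1$) that the Pareto curve $R(q)=q^{\alpha}$ demands the most additional bidders, and then converting the product inequality $\prod_{i=n}^{n+c-1}\frac{i+1}{\alpha+i}\ge\frac1\alpha$ into the two-sided bound on $\constna$ stated in the theorem (which is sharp only up to the universal factor $11$). By comparison the additive decomposition is immediate, and the large-market step $\rev_n\to\wel_n$ is routine.
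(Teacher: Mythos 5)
Your overall skeleton matches the paper's (reduce to a single item via additivity of {\vcg} and {\wel}; identify a worst-case $\alpha$-strongly regular distribution; for tightness, use i.i.d.\ copies of that worst case with $m\to\infty$ and show $\rev_n/\wel_n\to 1$ via an entry-fee/VCG hybrid and the weak law of large numbers), and your product-formula computation $\prod_{i=n}^{n+\const_{n,\alpha}-1}\frac{i+1}{\alpha+i}\ge\frac1\alpha$ is a slick and correct reformulation of the paper's order-statistics integrals after the shift-invariance observation relating $R(q)=q^\alpha$ to the paper's generalized Pareto curve $R(q)=q^\alpha-q$.

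The genuine gap is exactly where you flag "the crux": you assert that the extreme rays of the $\alpha$-strongly-regular cone are piecewise of the form (linear, then $aq+bq^\alpha$), and that ``one checks'' the $s\to 0$ ray (pure Pareto) demands the most extra bidders, but you do not verify either claim. This is the heart of the whole theorem. The paper does it by a different, constructive route (its Lemma~\ref{lem:worst alpha}): for any $\alpha$-strongly regular $\dist$ and target sizes $N>n$, Lemma~\ref{lem:three intervals} shows that the difference of the two order-statistic densities $\contribution_{1:n}(\quant)-\contribution_{2:N}(\quant)$ changes sign exactly twice, so to minimize $\dist_{2:N}-\dist_{1:n}$ one wants to depress $\cdf^{-1}$ on the middle interval and inflate it on the outer intervals; the $\alpha$-regularity tail bounds of \citet{ABB-22} (Lemma~\ref{lem:alpha boundary}) then pin down the pointwise-extremal survival function as a generalized Pareto, and a shift/scale reduces this to the canonical one. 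Your cone/extreme-ray framing is a legitimate alternative, but absent the extreme-ray classification and the monotonicity-in-$s$ argument it has no force, and it is not obviously easier than the paper's argument: extreme rays of an infinite-dimensional cone defined by a second-order ODE inequality together with the nonnegativity and monotonicity constraints on $V=R/q$ are delicate to enumerate, and the dependence on $(n,N)$ makes the ``$s\to 0$ is worst'' claim nontrivial. If you want to pursue your route, you need to (i) rigorously classify the extreme rays (including boundary conditions at $q=1$ that pin down the additive $q$-slack) and (ii) prove monotonicity of $\dist_{2:N}-\dist_{1:n}$ in $s$ along that family; alternatively, adopt the paper's two-lemma structure, which sidesteps the classification entirely.

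One smaller remark: for the upper bound you write that ``a natural lower bound on $\rev_n$'' also decomposes additively, but that quantity plays no role in the upper-bound direction (which compares $\vcg$ against $\wel$ only); conversely, the whole point of the large-market lower bound is that $\rev_n$ does \emph{not} decompose, which is why you need $\rev_n\to\wel_n$ as $m\to\infty$ rather than a per-item argument. You do handle this correctly in the tightness paragraph, but the framing in the first paragraph is slightly misleading.
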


In contrast to the tight competition complexity bound for regular distributions, which depends on both the number of bidders, $n$, and the number of items, $m$, the theorem above shows that for $\alpha$-strongly regular distributions with $\alpha > 0$, the tight competition complexity $\constna$ of the VCG auction exhibits \emph{no} dependence on the number of items, $m$, relying solely on a \emph{linear} dependence on the number of bidders, $n$, with the coefficient determined by $\alpha$. This observation aligns with our illustration in \Cref{example:intro:exp} and \Cref{fig:example-exp} for the exponential distribution, which belongs to the MHR class.

The formal definition of $\constna$ can be found in \Cref{def:alpha-regular:vcg:tight competition complexity}. At a high level, it is related to the order statistic of the generalized Pareto distribution \citep{Col-01}.\footnote{In the extreme cases of $\alpha = 0$ and $\alpha = 1$, the generalized Pareto distribution recovers the equal-revenue distribution and exponential distribution, respectively.}
Although the exact closed-form expression of the tight competition complexity bound $\constna$ is not provided, its lower and upper bounds are stated in \Cref{thm:intro:vcg:alpha-regular}. As a sanity check, when $\alpha$ approaches zero (corresponding to the class of regular distributions), both the lower and upper bounds diverge to infinity. Conversely, for $\alpha=1$ (corresponding to the class of MHR distributions), both the lower and upper bounds converge to $(e-1)n$ as number of bidders $n$ approaches infinity.

Since the Bayesian optimal mechanism and its expected revenue are typically complex to characterize, it is common in the literature to compare a given mechanism against a benchmark that provides an upper bound on the Bayesian optimal revenue. Following this approach, we analyze the competition complexity of the VCG auction relative to the welfare benchmark, also known as the first-best benchmark.
Surprisingly, while the competition complexity upper bound is derived with respect to the welfare benchmark, it turns out to be tight even when evaluated against the original Bayesian optimal revenue benchmark. Specifically, we can construct a problem instance where recruiting an additional $\constna - 1$ bidders and running the VCG auction is insufficient to suppress the revenue of the Bayesian optimal mechanism achieved with the original $n$ bidders.

At a high level, our upper-bound analysis (\Cref{lem:worst alpha}) demonstrates that the worst-case instance (with respect to the welfare benchmark) corresponds to the generalized Pareto distribution. Given that bidders have additive valuation functions, the VCG auction essentially reduces to running a second-price auction independently for each item. For any fixed item, its contribution to the welfare benchmark is the expected value of the first-order statistic,\footnote{Consistent with auction design literature, we refer to the $k$-th \emph{largest} (rather than the \emph{smallest}) value as the $k$-th order statistic.} while its contribution to the VCG auction is the expected value of the second-order statistic.
To establish that the generalized Pareto distribution is the worst case, we employ a multi-step argument. First, we identify a ``three-interval structure" (\Cref{lem:three intervals}) that facilitates a comparison between the first and second order statistics under varying sample sizes. Next, we leverage the tail bounds implied by $\alpha$-strong regularity to analyze these order statistics (\Cref{lem:alpha boundary}). Finally, we complement with the lower bound analysis by showing that for the generalized Pareto distribution, the Bayesian optimal revenue converges to the welfare benchmark as the number of items, $m$, approaches infinity (\Cref{lem:rev_is_wel}).

\xhdr{Competition complexity of the bundle-based second-price auction.} In the second part of this paper, we study on the bundle-based second-price auction and study its competition complexity for answering Question 2 imposed earlier in our introduction. (See more motivations behind studying the bundle-based second-price auction in \Cref{apx:bspa motivation}.)
We focus on the additional bidders required in the bundle-based second-price auction to surpass the Bayesian optimal revenue for a single bidder, which is already a challenging task.
Our results for the bundle-based second-price auction are threefold.\footnote{In \Cref{bspa_multiple_n}, we prove that for multi-bidder instances (i.e., $n\geq 2$), the competition complexity of the bundle-based second-price auction can be as bad as $\Omega(\exp(m))$, even if $m$ items have i.i.d.\ values from uniform distribution, which is MHR. An interesting avenue for future research is to design mechanisms---possibly combining the VCG auction and the bundle-based second-price auction---that achieve better competition complexity guarantees for multi-bidder instances.}

First, we continue our exploration about the competition complexity when the bidder has MHR valuation distributions for all items. (The formal theorem statement can be found in \Cref{thm:bspa mhr}.)
\begin{restatable}[Competition complexity of {\bspa} for MHR items]{theorem}{thmCompComplexityMHR}
\label{thm:intro:bspa:mhr}
For $n=1$ bidder and any $m\geq 1$ asymmetric MHR items, the competition complexity of the bundle-based second-price auction against the welfare benchmark is at most $3$, i.e., $\bspa_{4} \geq \wel_1$.
\end{restatable}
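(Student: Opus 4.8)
The plan is to collapse the multi-item statement into a one-dimensional inequality about the bundle value and then prove that inequality with a single change of variables. Since there is one original bidder, $\wel_1=\sum_{j=1}^m\expect{\val_j}=:\mu$, and with four i.i.d.\ bidders the grand-bundle second-price auction earns $\bspa_4=\expect{V^{(2)}}$, where $V=\sum_{j=1}^m\val_j$ is a bidder's total value and $V^{(2)}$ is the second largest among four independent copies of $V$. So it suffices to prove $\expect{V^{(2)}}\ge\mu$. The only structural input is that $V$ is itself MHR, being a convolution of independent MHR distributions (the MHR class is closed under convolution, a classical fact); consequently the task reduces to: \emph{for every MHR random variable $V$ (which automatically has finite mean), the expected second largest of four i.i.d.\ copies is at least $\expect{V}$.}

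Write $\bar G$ for the survival function of $V$, so that $\prob{V^{(2)}>t}=g\bigl(\bar G(t)\bigr)$, where $g(p)=1-(1-p)^3(1+3p)=\prob{\mathrm{Bin}(4,p)\ge 2}$. Integrating in $t$,
\[
\bspa_4-\wel_1=\expect{V^{(2)}}-\expect{V}=\int_0^\infty\Bigl(g\bigl(\bar G(t)\bigr)-\bar G(t)\Bigr)\dd t .
\]
A short calculation gives $g(p)-p=3p^4-8p^3+6p^2-p=3p(1-p)(p-p^*)(p^{**}-p)$ with $p^*,p^{**}=\tfrac{5\mp\sqrt{13}}{6}$; since $p^{**}>1$, on $(0,1)$ the integrand $g(p)-p$ is negative for $p<p^*$ and positive for $p>p^*$ --- a deficit exactly on the tail of $V$ and a surplus on its bulk. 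By itself this is not enough (indeed $\bspa_4\to\wel_1$ as $V$ concentrates), so the MHR property must be used quantitatively.

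Here is where I would use it. Write $\bar G(t)=\exp\bigl(-\int_\ell^t h(s)\dd s\bigr)$ on the support $[\ell,\infty)$ with $h$ nondecreasing; the part $[0,\ell]$, where $\bar G\equiv 1$, contributes $0$ since $g(1)-1=0$. Substituting $p=\bar G(t)$ (so $t=T(p):=\bar G^{-1}(p)$ and $\dd t=-\dd p/(p\,h(T(p)))$) gives $\bspa_4-\wel_1=\int_0^1 w(p)\,\psi(p)\,\dd p$, where $w(p):=(g(p)-p)/p=3(1-p)(p-p^*)(p^{**}-p)$ and $\psi(p):=1/h(T(p))$. Because $h$ is nondecreasing and $T$ is decreasing, $\psi$ is a nonnegative, nondecreasing function of $p$, while $w$ changes sign exactly once, from negative to positive, at $p^*$. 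Hence $w(p)\psi(p)\ge w(p)\,\psi(p^*)$ pointwise on $(0,1)$ --- on $(0,p^*)$ since $w\le0$ and $\psi\le\psi(p^*)$, and on $(p^*,1)$ since $w\ge0$ and $\psi\ge\psi(p^*)$ --- so $\bspa_4-\wel_1\ge\psi(p^*)\int_0^1 w(p)\,\dd p$. Finally $\int_0^1 w(p)\,\dd p=\int_0^1(3p^3-8p^2+6p-1)\,\dd p=\tfrac34-\tfrac83+3-1=\tfrac1{12}>0$, so $\bspa_4-\wel_1\ge\tfrac1{12}\psi(p^*)\ge 0$, i.e.\ $\bspa_4\ge\wel_1$. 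Both ends are tight: for a single exponential item $\psi\equiv1$ and $\bspa_4-\wel_1=\tfrac1{12}\wel_1$ (with $\bspa_3<\wel_1$, so three additional bidders are also necessary against welfare), whereas as $V$ concentrates the hazard rate near the mean blows up, $\psi(p^*)\to0$, and $\bspa_4\to\wel_1$.

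The step I expect to be the main obstacle is the reduction ``the bundle value is MHR.'' Closure of the MHR class under convolution is classical but not elementary, and one must be careful with the paper's precise MHR convention for distributions with bounded support or with density vanishing at an endpoint (there $h$ blows up or vanishes, $\psi$ vanishes or blows up, and one should check the transformed integral still converges --- it does, since $w$ is bounded near $p=0$ and vanishes linearly at $p=1$). If one prefers not to invoke convolution closure, the alternative is to argue directly that the survival function of a sum of independent MHR variables still induces a nondecreasing ``contribution profile'' $\psi$; that is the place where the most care would be needed. Everything downstream --- the factorization of $g(p)-p$, the single sign change, the rearrangement inequality, and the value of $\int_0^1 w$ --- is routine.
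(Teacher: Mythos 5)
Your proof is correct, and it takes a genuinely different route from the paper at the crucial step.

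Both proofs begin with the same reduction: by the Barlow--Marshall--Proschan closure of MHR under convolution, the bundle value $V=\sum_j v_j$ is itself MHR, so it suffices to show $\bspa_4 \ge \wel_1$ for a \emph{single} MHR item. The paper then finishes by invoking the extremal machinery it already built for \Cref{thm:welfare} --- namely \Cref{lem:worst alpha}, which uses the three-interval structure (\Cref{lem:three intervals}) and the $\alpha$-strongly-regular tail bound from \citet{ABB-22} (\Cref{lem:alpha boundary}) to show the exponential distribution is the worst case for $\alpha=1$ --- and then checks $H_4 - 1 = 13/12 > 1$ on the exponential. You instead prove the single-item inequality directly: after writing $\bspa_4-\wel_1 = \int_0^1 w(p)\,\psi(p)\,\dd p$ with $w(p)=(g(p)-p)/p$ having exactly one sign change (negative to positive at $p^*$) and $\psi(p)=1/h(\bar G^{-1}(p))$ nondecreasing by MHR, you apply the one-crossing rearrangement bound $\int w\psi \ge \psi(p^*)\int w = \tfrac{1}{12}\psi(p^*)\ge 0$. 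Your argument is more self-contained (it never explicitly identifies the extremal distribution and does not depend on the imported boundary lemma), and it makes the quantitative slack $\tfrac{1}{12}\psi(p^*)$ visible; the paper's version, while relying on heavier machinery, gets the extremal characterization for free and applies uniformly across the whole $\alpha$-strongly-regular hierarchy. Your side remarks are also sound: the $[0,\ell]$ region contributes $0$ because $g(1)=1$, the transformed integral converges since $w(p)$ vanishes linearly at $p=1$ and stays bounded at $p=0$, and tightness at the exponential ($\psi\equiv 1$, $\bspa_4-\wel_1=1/12$, with $\bspa_3 = H_3-1 = 5/6 < 1$) matches the paper's computation. In short: same reduction, different and arguably cleaner finish.
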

\Cref{thm:intro:bspa:mhr} aligns with the numerical example in \Cref{example:intro:exp} for the exponential distribution (a paradigmatic MHR distribution). This provides strong confidence that the bundle-based second-price auction is a natural and promising mechanism to study in the multi-item setting when the bidder has an additive valuation function. To establish this result, we employ a two-step argument. First, leveraging the technique developed for \Cref{thm:intro:vcg:alpha-regular}, we show that for selling a single MHR item, the competition complexity of the second-price auction against the welfare benchmark is at most 3. Second, we utilize the fact that the sum of independent (but not necessarily identical) random variables from MHR distributions remains MHR \citep{BMP-63}.

Next, we relax the MHR assumption and consider the case where item distributions are regular. Due to the significant challenges in the multi-item revenue maximization problem, many prior works \citep[e.g.,][]{MV-06,WT-14,GK-15,GK-18,BNR-18,TSN-19} have focused on the important special case of selling two or three items. As our second result for the bundle-based second-price auction, we study its competition complexity when there are at most three regular items. (The formal theorem statement can be found in \Cref{thm:bspa:3 items}.)
\begin{theorem}[Competition complexity of {\bspa} for regular items under small number of items]
\label{thm:intro:bspa:regular:3 items}
    For $n=1$ bidder and any $m\in\{2, 3\}$ asymmetric regular items, the competition complexity of the bundle-based second-price auction against the duality benchmark is at most $m$, i.e., $\bspa_{1 + m} \geq \cdw_1$.
\end{theorem}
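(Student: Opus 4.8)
The plan is to expand the duality benchmark $\cdw_1$ for a single additive bidder into a ``core'' term plus one per‑item ``tail'' term for each of the $m$ items, and then charge each of these at most $m+1$ pieces to a distinct pair of bidders inside the $(1+m)$‑bidder bundle auction, every pair sharing the original bidder. Concretely, I would specialize the Cai--Devanur--Weinberg dual to one additive bidder with independent regular items and per‑item cutoffs $r_j$; this yields a decomposition of the form $\cdw_1 = \mathrm{CORE} + \sum_{j=1}^{m}\mathrm{TAIL}_j$, where $\mathrm{TAIL}_j$ is a single‑dimensional Myerson revenue term for item $j$ restricted to its tail $\{v_j\ge r_j\}$ (hence at most item $j$'s monopoly revenue), and $\mathrm{CORE}$ is the optimal single‑bidder revenue for the capped values $\min(v_j,r_j)$ (hence at most $\E[\sum_j\min(v_j,r_j)]$). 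Taking the $r_j$ to be the monopoly reserves of the $F_j$ (or another convenient threshold) keeps the decomposition clean and, when $m\le 3$, keeps all of these terms simultaneously small enough to be covered by the available bidders.

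\textbf{Covering the pieces.} For the core term I would use $\bspa_{1+m}\ge\bspa_2=\E[\min(V_1,V_2)]\ge\E\bigl[\min\bigl(\sum_j\min(v_{1j},r_j),\ \sum_j\min(v_{2j},r_j)\bigr)\bigr]$ and argue this lower bound is at least $\mathrm{CORE}$; since the capped coordinates are bounded and there are at most three of them, I would do this comparison directly rather than through a concentration inequality, which would be far too weak for $m\le 3$. For each tail term $\mathrm{TAIL}_j$ I would pair the original bidder with the ``$j$‑th'' extra bidder and condition on that extra bidder having a large value $v_{jj}$ for item $j$: on that event $V_j\ge v_{jj}$, so the second‑highest bundle value is at least $\min(V_0,v_{jj})$, and a Bulow--Klemperer argument for the single regular item $j$ (\citet{BK-96}: the second‑price revenue on item $j$ among two bidders already exceeds item $j$'s Myerson revenue) shows the bundle auction collects at least $\mathrm{TAIL}_j$ in expectation, up to an error coming from the events where the original bidder's values on the other items are being ``double counted'' across pieces or where several items are simultaneously in their tails. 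Summing the core bound and the $m$ tail bounds and cancelling these errors against the slack in $\mathrm{CORE}\le\E[\sum_j\min(v_j,r_j)]$ (alternatively, absorbing $\mathrm{CORE}$ itself into the per‑item pieces when $m$ is this small) would yield $\bspa_{1+m}\ge\cdw_1$.

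\textbf{Main obstacle.} The crux is that $V=\sum_j v_j$ need not be regular even though each $F_j$ is, so one cannot simply invoke Bulow--Klemperer for the grand bundle: that would only give that $\bspa_2$ exceeds the optimal \emph{grand‑bundle} revenue for one bidder, which can be far below $\cdw_1$. The work is therefore in isolating the heavy per‑item tails, assigning each to a dedicated extra bidder via the single‑item Bulow--Klemperer bound, and—simultaneously—controlling the multi‑item overlap errors (both the re‑use of the original bidder across pieces and the events where two or more items land in their tails at once) tightly enough to obtain an \emph{exact}, constant‑free inequality. It is precisely this overlap control that forces $m\le 3$: for $m\ge 4$ the probability mass on ``two items simultaneously in their tails'' is no longer dominated by the available slack and the clean one‑piece‑per‑bidder charging breaks down, which is consistent with the fact that for general $m$ one only obtains the constant‑factor guarantee of \Cref{thm:intro:bspa:regular:adding one bidder}.
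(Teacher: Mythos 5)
Your proposal takes a genuinely different route from the paper, and it has a genuine gap at the critical step.

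The paper does not use a core--tail decomposition for this theorem. Its proof runs entirely in quantile space: it rewrites $\cdw_1$ as an expectation over a uniformly random $m\times(m+1)$ quantile matrix (using Bulow--Klemperer once, inside \cref{prop:cdwapprox}, to replace the virtual-value term for the top order statistic by a second order statistic from $m+1$ samples), and in parallel it writes $\bspa_{1+m}$ as the value of a two-player zero-sum game on the same random quantile matrix (\cref{prop:bspa_min_max}), where an adversary deletes a column and the optimizer picks a surviving column. The comparison $\bspa_{1+m}(Q)\ge\cdw_1(Q)$ is then proved pointwise over matrices $Q$ by giving explicit optimizer strategies and doing a case analysis on how the top-two row order statistics are scattered across columns (\cref{prop:cc_bspa_m_3}). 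The core--tail decomposition appears in the paper only for the looser constant-factor result (\cref{prop:bvcg constant approx}), and the paper explicitly echoes \citet{BW-19} that an exact competition-complexity bound requires substantially different tools than the approximate version.

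The gap in your proposal is the charging step. The BSPA revenue is a single quantity---the second-highest of $m+1$ bundle values---and you want to charge it against $m+1$ separate pieces $\mathrm{CORE},\mathrm{TAIL}_1,\dots,\mathrm{TAIL}_m$. You lower-bound it once by $\bspa_2\ge\mathrm{CORE}$, and then, for each $j$, lower-bound it again on the event that extra bidder $j$ has a tail value for item $j$. These lower bounds re-use the same random variable on overlapping events (the original bidder's full bundle value $V_0$ appears in every piece), so summing them yields $\bspa_{1+m}\ge\max\{\cdot\}$ rather than $\bspa_{1+m}\ge\mathrm{CORE}+\sum_j\mathrm{TAIL}_j$ unless you exhibit a genuine decomposition of the second-highest bundle value into disjoint contributions and show each contribution covers its target. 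You explicitly acknowledge this (``overlap control'') but never provide it: there is no identification of disjoint events, no accounting of the mass lost to simultaneous tails, and no computation showing the slack in $\mathrm{CORE}\le\E[\sum_j\min(v_j,r_j)]$ actually absorbs it for $m\le 3$ but not $m\ge 4$. That computation is the entire content of the theorem, and it is precisely what the paper's quantile-matrix case analysis supplies and your proposal leaves as a placeholder. A secondary concern is that your per-item bound needs $\min(V_0,v_{jj})\ge\min(v_{0j},v_{jj})$, which is very lossy (it discards all of bidder $0$'s value on the other items), so it is not at all clear the slack you hope to cancel against is actually available; without the explicit arithmetic the claim that the approach is tight enough for an exact, constant-free inequality is unsubstantiated.
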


We remark that \citet{BK-96} can be viewed as \Cref{thm:intro:bspa:regular:3 items} with $m = 1$ regular item {in the special case of $n = 1$}. Moreover, for $m \in\{2, 3\}$ regular items, our competition complexity bound for the bundle-based second-price auction is slightly better than the state-of-the-art bound for the VCG auction (which is the only mechanism with known competition complexity for our problem).  Specifically, \citet{BW-19} shows that the competition complexity bound for selling $m \in\{2, 3\}$ regular items with the VCG auction is at most 4.\footnote{\citet{BW-19} show that for $n = 1$ bidder and any $m\geq 1$ asymmetric regular items, the competition complexity of the VCG auction against the duality benchmark is at most $\lceil \ln(1 + m) + 2\rceil$, which is equal to 4 for both $m = 2$ and $m = 3$.}

To establish \Cref{thm:intro:bspa:regular:3 items}, we rely on a combinatorial argument that 
centers on comparing two quantities through a uniformly random $m \times (m+1)$ quantile matrix $\quantMatrix$. 
Concretely, we recast an upper bound of $\cdw_1$ and a lower bound of $\bspa_{m+1}$ 
as expectations of functions of $\quantMatrix$, which we call $\cdw_1(\quantMatrix)$ and $\bspa_{m+1}(\quantMatrix)$. 
We then show that for $m = 2$ and $m = 3$, $\cdw_1(\quantMatrix) \leq \bspa_{m+1}(\quantMatrix)$ holds for 
every possible choice of $\quantMatrix$. Along the way, we also derive revenue benchmarks 
for regular items, which may be useful in related problems.

Finally, we ask the following question motivated by the original result in \citet{BK-96}: by adding a single additional bidder, can the bundle-based second-price auction achieves good approximation against the Bayesian optimal mechanism?\footnote{Recall \citet{BK-96} show that when selling a single regular item, adding just one additional bidder in the second-price auction is sufficient to outperform the Bayesian optimal mechanism. While most subsequent works study competition complexity—quantifying the number of additional bidders required for prior-independent mechanisms to beat (or nearly match) the Bayesian optimal mechanism—there are also prior studies \citep[e.g.,][]{FLR-19,FJ-24} that examine approximation guarantees when recruiting only a single additional bidder.} We answer this question positively as follows. (The formal theorem statements can be found in \Cref{prop:bvcg constant approx}.)
\begin{theorem}[Constant approximation of {\bspa} for regular items]
\label{thm:intro:bspa:regular:adding one bidder}
    For $m \geq 1$ asymmetric regular items, the expected revenue of the bundle-based second-price auction for two bidders is a 48-approximation to the Bayesian optimal revenue, i.e., $48\cdot \bspa_2 \geq \rev_1$.
\end{theorem}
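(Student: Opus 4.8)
The plan is to reduce to the two standard ``simple-mechanism'' benchmarks for a single additive buyer and then show that the two-bidder bundled auction beats both, up to constants. Write $\val_{ij}$ for bidder $i$'s value of item $j$, $V_i=\sum_j\val_{ij}$ for bidder $i$'s total value, $\mathrm{Rev}_j=\max_p p\Pr[\val_{1j}\ge p]$, $\srev_1=\sum_j\mathrm{Rev}_j$ (optimal separate selling to one bidder), and $\brev_1=\max_p p\Pr[V_1\ge p]$ (optimal grand-bundle posted price to one bidder). The first step is to invoke the known decomposition for a single additive buyer with independent regular items, $\rev_1\le 6\max\{\srev_1,\brev_1\}$ (the Babaioff--Immorlica--Lucier--Weinberg bound; this can also be obtained by bounding the $\cdw$ duality benchmark, whose ``single/tail'' term is controlled by $\srev_1$ and whose ``core'' term by $\brev_1$). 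Given this, it suffices to prove $\bspa_2\ge\srev_1$ and $\bspa_2\ge\tfrac18\brev_1$, since together they give $\bspa_2\ge\tfrac18\max\{\srev_1,\brev_1\}\ge\tfrac1{48}\rev_1$.

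The bound $\bspa_2\ge\srev_1$ is immediate. For each item $j$ the two bidder values $\val_{1j},\val_{2j}$ are i.i.d.\ from the regular distribution $\dist_j$, so Bulow--Klemperer gives $\E[\min(\val_{1j},\val_{2j})]\ge\mathrm{Rev}_j$ (the single-item, single-bidder optimum). Since $\min(\sum_j a_j,\sum_j b_j)\ge\sum_j\min(a_j,b_j)$ pointwise,
\[
\bspa_2=\E\Bigl[\min\bigl(\sum_j\val_{1j},\ \sum_j\val_{2j}\bigr)\Bigr]\ \ge\ \sum_j\E[\min(\val_{1j},\val_{2j})]\ \ge\ \sum_j\mathrm{Rev}_j=\srev_1 .
\]

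The substance is $\bspa_2\ge\tfrac18\brev_1$. Write $\bspa_2=\int_0^\infty\Pr[V_1\ge t]^2\,dt$, fix an optimal bundle price $p^\star$, and set $q^\star=\Pr[V_1\ge p^\star]$, so $\brev_1=p^\star q^\star$. If $q^\star\ge\tfrac14$, monotonicity of the survival function gives $\bspa_2\ge\int_0^{p^\star}\Pr[V_1\ge t]^2\,dt\ge p^\star(q^\star)^2=q^\star\brev_1\ge\tfrac14\brev_1$. If $q^\star<\tfrac14$, the plan is instead to show $\brev_1=O(\srev_1)$ and finish via $\bspa_2\ge\srev_1$. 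For a truncation level $t$ put $\tilde V_i=\sum_j\min(\val_{ij},t)$; then $\Pr[\exists j:\val_{1j}>t]\le\srev_1/t$ (straight from the definition of $\mathrm{Rev}_j$), and $\mathrm{Var}(\tilde V_1)=\sum_j\mathrm{Var}(\min(\val_{1j},t))\le t\,\E[\tilde V_1]$ since $\min(\val_{1j},t)\le t$. As $\{V_1\ge p^\star\}\subseteq\{\exists j:\val_{1j}>t\}\cup\{\tilde V_1\ge p^\star\}$, choosing $t$ to balance the two parts and bounding $\Pr[\tilde V_1\ge p^\star]$ by Chebyshev produces an inequality of the form $q^\star\le\srev_1/t+O(t/p^\star)$, i.e.\ $\brev_1=p^\star q^\star\le p^\star\srev_1/t+O(t)$. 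Regularity is what makes this useful: by concavity of each item's revenue curve, $\E[\min(\val_{1j},t)]\le\mathrm{Rev}_j\bigl(1+\ln(1/\Pr[\val_{1j}\ge t])\bigr)$, so a careful per-item choice of truncation levels (in the style of Babaioff--Immorlica--Lucier--Weinberg), together with $q^\star<\tfrac14$, pins $t$ to the correct scale and yields $\brev_1\le 8\,\srev_1$, hence $\bspa_2\ge\srev_1\ge\tfrac18\brev_1$.

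The delicate point is exactly the regime $q^\star<\tfrac14$: a crude Markov bound would only give $\brev_1=O(\log m)\,\srev_1$, reproducing the known logarithmic gap between $\rev$ and $\srev$; removing the logarithm to land on a clean constant requires the variance-based concentration of the truncated bundle value, which in turn forces the careful truncation choice and an honest accounting of constants so that $6\cdot 8=48$ comes out. The reduction and the $\bspa_2\ge\srev_1$ step are routine by comparison.
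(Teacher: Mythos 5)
Your high-level plan is sound and in places sharper than the paper's, but the case $q^\star < \tfrac14$ has a genuine gap.

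\textbf{What is correct and actually improves on the paper.} Your step $\bspa_2 \geq \srev_1$ is clean and correct: $\min(V_1,V_2) \geq \sum_j\min(\val_{1j},\val_{2j})$ pointwise and Bulow--Klemperer per item gives $\E[\min(\val_{1j},\val_{2j})]\geq\mathrm{Rev}_j$. This is a strictly tighter statement than the paper's Lemma 5.7 ($8\cdot\bspa_2\geq\srev_1$), which the paper proves via a more elaborate two-point-distribution construction. (Note this does not contradict the paper's claim that \vcg\ and \bspa\ are incomparable: that is for general $n$; for $n=2$, ``second highest'' is ``min'' and the pointwise inequality holds.) The reduction $\rev_1\leq 6\max\{\srev_1,\brev_1\}$ is also valid.

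\textbf{Where the gap is.} You reduce to showing $\bspa_2\geq\tfrac18\brev_1$, and when $q^\star<\tfrac14$ you claim $\brev_1\leq 8\srev_1$. Your truncation/Chebyshev sketch does not give this. Carrying out your own calculation: with truncation level $t$ and $\E\tilde V_1\le p^\star/2$, Chebyshev gives $q^\star\le\srev_1/t+2t/p^\star$; optimizing $t$ gives $(q^\star)^2\le 8\srev_1/p^\star$, i.e.\ $\brev_1=p^\star q^\star\le 8\srev_1/q^\star$. That is the bound your method yields, and it degrades by a factor $1/q^\star$ precisely in the regime you are in. To remove this factor you invoke regularity via $\E[\min(\val_{1j},t)]\le\mathrm{Rev}_j(1+\ln(1/\Pr[\val_{1j}\ge t]))$ and ``careful per-item truncation levels,'' but this inequality is tight (with the logarithm) for equal-revenue items, and there is a real tension: making the union bound $\Pr[\exists j:\val_{1j}>t_j]\le\sum_j\Pr[\val_{1j}>t_j]$ small forces small per-item tail probabilities, which blows up the logarithm in $\E\tilde V_1$. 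You acknowledge this is ``the delicate point,'' but the sketch does not resolve it; as written the step would fail. Separately, I do not see a reason the assertion ``$q^\star<\tfrac14\Rightarrow\brev_1\le 8\srev_1$'' is true as stated: a priori the optimal bundle price could be achieved in the tail even when $\core_1\gg\srev_1$.

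\textbf{How the paper sidesteps this.} The paper never compares $\bspa_2$ to $\brev_1$. It uses the full BILW core--tail lemma: $\rev_1\le 2\srev_1+\core_1$ with $\core_1=\E[\sum_j\min(\val_j,\srev_1)]$, and in the ``large-core'' case the lemma directly supplies a \emph{safe} price $\tfrac25\core_1$ with $\Pr[V_1\ge\tfrac25\core_1]\ge\tfrac{47}{72}$. This hands you a price with \emph{guaranteed} sale probability, which is exactly what you need to lower-bound $\bspa_2$ by $p\,q^2$ without any hypothesis on the quantile of the \emph{optimal} bundle price. That is the right structural move: compare $\bspa_2$ against the core, not against $\brev_1$. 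Your overall decomposition can be repaired along these lines (and, combined with your sharper $\bspa_2\geq\srev_1$, would give a constant better than 48), but the truncation+Chebyshev route as sketched does not close.
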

We remark that a similar statement fails for the VCG auction.
Specifically, it is known that for every $m\geq 1$, there exists an instance where all $m$ items are i.i.d.\ regular (in fact, realized from the equal revenue distribution) such that the VCG auction with two bidders only obtain $O(1/\log(m))$ fraction of the Bayesian optimal revenue for a single bidder \citep{FFR-18}.

We establish this 48-approximation for regular items (\Cref{thm:intro:bspa:regular:adding one bidder}) by analyzing the probability tail bounds induced by regularity and applying the standard core-tail decomposition technique \citep{BILW-14}. En route, our analysis framework also shows that for a regular bidder, selling all items as a grand bundle ({\brev}) is a 4-approximation to selling all items separately ({\srev}). Combining with the prior result that the better of selling grand bundle or selling items separately approximates the Bayesian optimal revenue \citep{BILW-14,CDW-16}, we prove that selling the grand bundle is an 18-approximation to the Bayesian optimal revenue (\Cref{thm:intro:brev}), which may be of independent interest. As an interesting comparison, it is known that without the regularity assumption, selling all items as a grand bundle does not guarantee a constant approximation against the Bayesian optimal revenue. In this sense, our result shows that regularity is a sufficient condition for achieving a constant approximation when selling all items as a grand bundle. (The formal theorem statements can be found in \Cref{cor:18brev_geq_rev}.)

\begin{theorem}[Constant approximation of {\brev} for regular items]
\label{thm:intro:brev}
    For $n = 1$ bidder and $m \geq 1$ asymmetric regular items, the optimal expected revenue from selling all items as a grand bundle is a 18-approximation to the Bayesian optimal revenue, i.e., $18\cdot \brev_1 \geq \rev_1$.
\end{theorem}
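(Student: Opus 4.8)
The plan is to derive \Cref{thm:intro:brev} by composing two ingredients. The first is the classical ``simple versus optimal'' bound for a single additive buyer with independent (asymmetric) items, $\rev_1 \le c\cdot\max\{\srev_1,\brev_1\}$ for an absolute constant $c$ \citep{BILW-14,CDW-16}. The second, and the only genuinely new piece, is the inequality $4\,\brev_1 \ge \srev_1$ under the regularity assumption. Granting it, we get $\max\{\srev_1,\brev_1\}\le 4\,\brev_1$, so the first bound upgrades to $\rev_1 \le 4c\,\brev_1$; plugging in the sharpest available value of $c$ produces the constant $18$ in the statement. So essentially all the work is in proving $4\,\brev_1\ge\srev_1$, which is the ``en route'' claim flagged in the introduction and is also a building block of the $48$-approximation in \Cref{thm:intro:bspa:regular:adding one bidder}.

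To set this up, let $r_j$ denote the single-item monopoly revenue of item $j$, so that $\srev_1=\sum_j r_j$, and let $q_j^\star$ be its monopoly quantile. The structural input from regularity is the estimate $\Pr[\val_j\ge r_j]\ge\tfrac12$: since the revenue curve $R_j(q)=q\cdot\cdf_j^{-1}(1-q)$ is concave with $R_j(0)=0$, it lies above the chord from the origin to $(q_j^\star,r_j)$ on $[0,q_j^\star]$ and above the chord from $(q_j^\star,r_j)$ to $(1,0)$ on $[q_j^\star,1]$, and an elementary computation with these two chord bounds gives $R_j(q)\ge r_j q$ for every $q\le\tfrac12$; since $\cdf_j^{-1}(1-q)=R_j(q)/q$, this says exactly that the value at every quantile up to $\tfrac12$ is at least $r_j$. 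Consequently, pointwise, $\sum_j\val_j \ge Y := \sum_j r_j\,\indicator{\val_j\ge r_j}$, where $Y$ is a sum of independent nonnegative random variables supported in $[0,\srev_1]$ with $\E[Y]\ge\tfrac12\srev_1$. Since $\brev_1$ equals the monopoly revenue of the scalar random variable $\sum_j\val_j$ (posting a single price is optimal when selling the bundle as one good to one buyer) and monopoly revenue is monotone under stochastic domination, $\brev_1\ge\mathrm{monrev}(Y)$, so it suffices to prove $\mathrm{monrev}(Y)\ge\tfrac12\E[Y]\ \bigl(\ge\tfrac14\srev_1\bigr)$.

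The main obstacle will be this last inequality, because the off-the-shelf estimates are too lossy: a Markov-type bound $\Pr[Y\ge\E[Y]/2]\ge\E[Y]/(2\srev_1)$, or a Paley--Zygmund bound using $\E[Y^2]\le\srev_1\,\E[Y]$ (valid since $Y\le\srev_1$), only deliver constants considerably larger than $4$. To extract the sharp constant I would split into two regimes: when a single item has $r_{j^\star}$ comparable to $\srev_1$, posting the monopoly price of item $j^\star$ as the bundle price already extracts $r_{j^\star}$; and when no item dominates, $Y$ is a sum of many comparably small independent pieces and therefore concentrates around $\E[Y]$, so a near-optimal bundle price captures almost all of it. Interpolating between the two regimes---and optimizing over the bundle price rather than committing to a fixed multiple of $\srev_1$---is the delicate step, and amounts to running the core--tail decomposition of \citet{BILW-14} inside the bundle, with the regularity estimate $\Pr[\val_j\ge r_j]\ge\tfrac12$ standing in for the weaker per-item tail control available for general distributions. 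The same two-regime analysis, carried out with two bidders in place of one, is also what underlies the guarantee of \Cref{thm:intro:bspa:regular:adding one bidder}.
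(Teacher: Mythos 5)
Your high-level plan coincides with the paper's: apply the \citet{BILW-14} decomposition and combine it with the new inequality $4\,\brev_1\ge\srev_1$ for regular items. Two issues, one small and one substantive. The small one: you write $\rev_1\le c\cdot\max\{\srev_1,\brev_1\}$ and then substitute $\srev_1\le 4\brev_1$; but with the best known $c=6$ this gives $24\,\brev_1$, not $18$. The paper uses the asymmetric form $\rev_1\le 2\brev_1+4\srev_1$ (\Cref{thm:six_approx}), so the substitution gives $2\brev_1+16\brev_1=18\brev_1$. To recover $18$ you should carry the $2\brev_1+4\srev_1$ bound rather than collapse it into a max.

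The substantive gap is exactly where you flag difficulty. You correctly reduce to the Bernoulli sum $Y=\sum_j r_j\,\indicator{\val_j\ge r_j}$ with $\Pr[\val_j\ge r_j]\ge\tfrac12$ and want $\mathrm{monrev}(Y)\ge\tfrac14\srev_1$, but you then reach for Markov, Paley--Zygmund, and a two-regime core--tail argument, none of which you actually carry to the constant $4$. The step closes with a one-line symmetry observation that your setup is pointing right at and you walk past: replace each $\dist_j$ by the dominated two-point distribution $\dist_j'$ that puts mass $\tfrac12$ on $0$ and $\tfrac12$ on $r_j$ (this is \Cref{lem:brev approx worst dist}, which is precisely your per-item estimate). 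Under $\dist'$, each coordinate $\val_j'$ has the same law as its reflection $r_j-\val_j'$, so the independent sum $\sum_j\val_j'$ has the same law as $\sum_j(r_j-\val_j')$; therefore $\Pr\bigl[\sum_j\val_j'\ge\tfrac12\sum_j r_j\bigr]\ge\tfrac12$. Posting the single bundle price $\tfrac12\srev_1$ thus earns at least $\tfrac14\srev_1$ from $\dist'$, and by FOSD monotonicity $\brev_1(\dist)\ge\brev_1(\dist')\ge\tfrac14\srev_1$. No concentration, no splitting on whether one item dominates, and no core--tail analysis is needed here (the paper reserves core--tail for \Cref{thm:intro:bspa:regular:adding one bidder}, where the mechanism is a second-price auction rather than an optimized posted price and the symmetry trick alone does not suffice).
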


In addition, by combining results from \citet{BMP-63,BK-96,DRY-15}, we show that when items are MHR, the approximation guarantees for $\bspa_2$ and $\brev_1$ in \Cref{thm:intro:bspa:regular:adding one bidder,thm:intro:brev} improve to a factor of~$e$ even with respect to the welfare benchmark. This $e$-approximation also holds for the VCG auction with two bidders and for selling items separately to a single bidder. See \Cref{prop:brev approx wel mhr,prop:srev approx wel mhr,prop:bspa approx wel mhr,prop:vcg approx wel mhr}.

\subsection{Related Work} 
Our work connects to several strands of existing literature.

\xhdr{Competition complexity.} The most relevant to our work are results on competition complexity in auctions. \cite{BK-96} initiated the work in competition complexity for a single item. 
\cite{EFFTW-17, BW-19, DRWX-24}
studies the competition complexity
for additive bidders in multi-item settings, resolving the competition complexity of VCG auction for regular bidders ($\Omega(n \cdot \ln \frac{m}{n})$ when $m \gg n$, $\sqrt{mn}$ when $n \gg m$). Our results on VCG auction instead focus on $\alpha$-regular bidders, and obtain a competition complexity independent of $m$. \cite{FFR-18,CS-21} considers competition complexity for a broader group of auctions, but with a slightly different objective of recovering 
$(1 - \epsilon)$-fraction of the optimal revenue. In particular, \cite{FFR-18}
show that in the single-bidder setting, $\bspa$ has constant (dependent on $\epsilon$) competition complexity against $(1 - \epsilon)$ revenue benchmark. In contrast, our competition complexity results for $\bspa$ focus on recovering $100\%$ of the optimal revenue, which \cite{BW-19} notes requires substantially different technical tools. For MHR distributions, \citet{LP-18} establish a competition complexity of $[n, 3n]$ for VCG relative to welfare; we prove a similar result for $\bspa$.

% For MHR distributions, \citet{LP-18} show that for MHR distributions, the competition complexity of VCG, when benchmarked against welfare, lies within $[n, 3n]$. We prove an analogous result for $\bspa$. 
%%\cite{CS-21} shows that constant approximation ratio of a mechanism translate into constant (multiplicative factor) of enhanced competition, unless $\vcg$ with constant enhanced competition achieves close to optimal revenue. Utilizing this fact, they design simple auctions (that runs either $\vcg$, or a known constant competitive auction, each with constant probability) that with constant enhanced competition exceeds the original optimal revenue. 

\xhdr{Simple and approximately optimal mechanisms.} A broad line of work investigates simple and approximately optimal mechanisms in auction theory \citep[e.g.,][]{CHK-07, CHMS-10, CMS-15, HN-12, LY-13, BILW-14, BDHS-15, yao-15, CMS-15, CM-16, CDW-16, CZ-17, EFFTW-17, RW-18, HR-09, FLR-19}. Particularly relevant to our work is the approximation ratio of bundle-based mechanisms. \citet{HN-12} first establishes a $O(\log m)$ approximation guarantees for selling the grand bundle in the single buyer setting when the bidder's value for $m$ items are independent and identically distributed. {\citet{LY-13} improve the approximation guarantee of selling the grand bundle in \citet{HN-12} from $O(\log m)$ to constant. They also show an approximation lower bound of $\Omega(\log(m))$ for selling items separately, which illustrates a strict separation between selling items separately and selling the grand bundle in their setting.} More recently, \citet{AKLS-23} analyze the competitiveness of bundling in buy-many settings for an additive buyer and arbitrarily correlated value distribution. Our constant approximation result of $\bspa$ in \Cref{sec:approx_bundle} focus on a setting similar to that of \citet{HN-12,LY-13}, but for non-identical items. 

%\smwcomment{In the above list, might make sense to note that LiYao13 improved the logm approximation for iid items to a constant?}

\xhdr{Optimality of bundle selling.}
In the multi-item settings, the selling items as a grand bundle has been proven to be optimal under certain sufficient conditions \citep{MV-06,DDT-17,GK-18,HH-21,yan-22,Ghi-23}. Unlike our work which assumes the value for each item is independent and belong a common distribution family (e.g., regular, strongly regular, MHR) , all sufficient conditions require either specific correlation or focus on specific distributions (e.g., uniform). In the same direction, the sufficient conditions for the optimality of nested bundle have been studied in \citet{BBHS-21,yan-22}.

\section{Preliminaries}
\label{sec:prelim}

We consider the setting where there are $m$ items and $n$ i.i.d.\ bidders. Bidders have additive valuation functions: given value profile $\{\val_{j}\}_{j\in[m]}$ over $m$ items, a bidder's valuation for any subset $S$ of items is $\sum_{j\in S}\val_{j}$.
The value profile of each bidder is drawn from product distribution $\dist = \Pi_{j\in[m]} \dist_j$, where $\dist_j$ is the marginal value distribution for item $j$.\footnote{Unless stated otherwise, we assume that each bidder’s value for each item is realized independently.}
By slightly abusing the notations, we use $\cdf_j(\cdot)$ to denote the cumulative density function (CDF).
We assume that each distribution $\dist_j$ is continuous and atomless, so $\cdf_j$ is a strictly increasing function. Let $\pdf_j(\cdot)$ denote the corresponding probability density function (PDF). 
For any quantile $\quant \in [0,1]$, we define $\cdf_j^{-1}(\quant)$ as the value in distribution~$\dist_j$ corresponding to quantile $\quant$. 

We introduce the following notations for the welfare and revenue benchmarks in this paper, each in the context of $n$ bidders with values drawn i.i.d.\ from distribution $\dist$. 
\begin{itemize}
    \item $\rev_n(\dist)$: the expected revenue of the Bayesian revenue-optimal mechanism. 
    \item $\bspa_n(\dist)$: the expected revenue generated by bundling all items together, and running the second-price auction. 
    \item $\brev_n(\dist)$: the expected revenue generated by bundling all items together, and running Myerson's revenue-optimal single-item auction \citep{mye-81}.
    \item $\vcg_n(\dist)$: the expected revenue of the VCG auction that allocates the items efficiently, which, in the context of additive bidders, is equivalent to selling each item separately through the second-price auction. 
   \item $\srev_n(\dist)$: the expected revenue generated by running Myerson's revenue-optimal single-item auction for each item separately.
    \item $\wel_n(\dist)$: the optimal expected welfare, which can be achieved by the VCG auction.
    
    \item $\cdw_n(\dist)$: the expected value of the duality benchmark. See \Cref{sec:prelim:duality} for the formal definition. 
\end{itemize}
When it is clear from the context, we omit $\dist$ from above notations.
All five mechanisms and two benchmarks form the hierarchy 
\begin{align*}
    \bspa\leq\brev\leq \rev\leq \cdw \leq \wel
    \\
    \vcg\leq \srev\leq \rev\leq\cdw\leq \wel
\end{align*}
with incomparable pairs ``{\vcg} vs.\ {\bspa}'' and ``{\srev}'' vs.\ ``{\brev}''.

\subsection{Distributional Assumptions}
\label{sub:dist_assumption}
Our paper considers distributions under the following assumptions, listed in increasing order of restrictiveness: regular, $\alpha$-strongly regular, and monotone hazard rate (MHR).\footnote{Here we introduce the distributional assumptions for the univariate distribution $\dist$. For (joint) product distributions, we say it is regular/MHR/$\alpha$-strongly regular if all its marginal distributions satisfy the corresponding distributional assumption.} Regular and MHR distributions are standard assumptions extensively used in the auction design literature. The $\alpha$-strongly regular distribution, introduced by \cite{CR-14}, serves as a natural interpolation between the regular and MHR assumptions.

Given any (univariate) distribution $\dist$, let 
\begin{align*}
\hazard(\val) \triangleq \frac{\pdf(\val)}{1 - \cdf(\val)}
\end{align*}
be the hazard rate for distribution $\dist$, and let
\begin{align*}
    \virtual(\val) \triangleq \val - \frac{1}{\hazard(\val)}
\end{align*}
be the corresponding virtual value function.

\begin{definition}[Regular distribution]
A distribution $\dist$ is \emph{regular} if its virtual value function $\virtual(\val)$ is non-decreasing in $\val$.
\end{definition}

\begin{definition}[$\alpha$-strongly regular distribution]
Given any parameter $\alpha\in[0,1]$, a distribution $ \dist $ is $\alpha$-\emph{strongly regular} if its virtual value function $\virtual(\val)$
is $\alpha$-strongly monotone, i.e., $\virtual(\val) -\virtual(\val\primed) \geq \alpha (\val - \val\primed)$ for all $\val,\val\primed$ in the support of $\dist$.\footnote{When the virtual value function $\virtual(\cdot)$ is continuous, an equivalent statement is that derivative of the virtual value function $\virtual'(\val) \geq \alpha$.}
\end{definition}

\begin{definition}[Monotone hazard rate (MHR) distribution]
A distribution $ \dist $ is \emph{MHR} if its hazard rate $\hazard(\val)$ is non-decreasing in $\val$.
\end{definition}

Note that by definition, a distribution $\dist$ is regular if and only if it is $0$-strongly regular, and is MHR if and only if it is $1$-strongly regular.

\subsection{Enhanced Competition and Duality Benchmark} 
\label{sec:prelim:duality}
In this section, we formally define competition complexity and the duality benchmark {\cdw}, which will be useful for bounding competition complexity in combinatorial auctions.

\begin{definition}[Competition complexity]
In an auction setting with $m$ items and $n$ bidders whose values for the items are drawn i.i.d.\ from distribution $\dist = \Pi_{j\in[m]} \dist_j$, the \emph{competition complexity} of an auction $\auc$ is the minimum number of additional bidders $c$ such that 
\begin{align*}
\auc_{n+c}(\dist) \geq \rev_n(\dist)   
\end{align*}
where $\auc_{n+c}(\dist)$ and $\rev_n(\dist)$ are the expected revenue from auction $\auc$ with $n + c$ bidders and the Bayesian revenue-optimal mechanism with $n$ bidders, respectively.
\end{definition} 

In the special case of single-item auctions, \citet{BK-96} show that the competition complexity of the VCG auction for regular buyers is $1$. 
\begin{theorem}[\citealp{BK-96}] \label{thm:BK}
For $n\geq 1$ bidder and $m = 1$ regular item, the tight competition complexity of the second-price auction against the revenue benchmark is 1, i.e., $\vcg_{n+1} \geq \rev_n$. 
\end{theorem}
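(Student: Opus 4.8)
The plan is to prove the two halves of the claim separately: \emph{(i)} one extra bidder always suffices, so $\vcg_{n+1}(\dist)\ge\rev_n(\dist)$ for every regular $\dist$; and \emph{(ii)} zero extra bidders never suffice, so the bound $1$ is best possible.

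\textbf{One extra bidder suffices.} First I would rewrite both sides in terms of virtual values of a maximum. Write $\val_{(1:k)}$ for the largest of $k$ i.i.d.\ draws from $\dist$, and assume for the moment that $\dist$ has finite expectation. Since $\dist$ is regular, $\virtual$ is non-decreasing and needs no ironing, so by Myerson's characterization the revenue-optimal $n$-bidder mechanism is a second-price auction with reserve $\virtual^{-1}(0)$ and $\rev_n(\dist)=\expect{\plus{\virtual(\val_{(1:n)})}}$. The $(n+1)$-bidder second-price auction is efficient, dominant-strategy truthful, and gives zero utility to the lowest type (who wins with probability zero and pays nothing), so Myerson's payment identity gives $\vcg_{n+1}(\dist)=\expect{\virtual(\val_{(1:n+1)})}$. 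Next I would couple the two bidder populations: draw $\val_1,\dots,\val_{n+1}$ i.i.d.\ from $\dist$, put $V=\max\{\val_1,\dots,\val_n\}=\val_{(1:n)}$ and $W=\val_{n+1}$, so $\val_{(1:n+1)}=\max\{V,W\}$ with $W$ independent of $V$. Conditioning on $V$ and using $\max\{V,W\}\ge V$ and $\max\{V,W\}\ge W$ together with monotonicity of $\virtual$,
\[
\expect{\virtual(\max\{V,W\})\,\middle|\,V}\;\ge\;\max\bigl\{\,\virtual(V),\ \expect{\virtual(W)}\,\bigr\}\;=\;\max\{\virtual(V),\,0\}\;=\;\plus{\virtual(V)},
\]
where $\expect{\virtual(W)}=0$ is just the (zero) revenue of a reserve-free single-bidder auction. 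Taking expectations over $V$ yields $\vcg_{n+1}(\dist)=\expect{\virtual(\max\{V,W\})}\ge\expect{\plus{\virtual(V)}}=\rev_n(\dist)$.

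\textbf{Tightness.} It suffices to exhibit one regular distribution for which zero extra bidders fail. Take $\dist$ to be uniform on $[0,1]$, which is regular with $\virtual(\val)=2\val-1$ and Myerson reserve $\tfrac12$: the reserve-free $n$-bidder second-price auction earns only the expected second-highest value $\vcg_n(\dist)=\tfrac{n-1}{n+1}$ (in particular $0$ when $n=1$), whereas a short computation gives $\rev_n(\dist)=\tfrac{n-1}{n+1}+\tfrac{1}{(n+1)2^{n}}$, the surplus coming from the reserve. Hence $\vcg_n(\dist)<\rev_n(\dist)$ for every $n\ge1$, so the competition complexity is exactly $1$.

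\textbf{Main obstacle.} The one genuinely delicate point is the first step above: one must justify the virtual-welfare revenue formula for the reserve-free second-price auction — which holds because interim individual rationality binds at the lowest type — and note that \emph{regularity}, not merely the MHR condition, already rules out ironing, so that $\max_i\virtual(\val_i)=\virtual(\max_i\val_i)$ and the optimal allocation is simply ``give it to the highest bidder when her virtual value is nonnegative.'' The displayed identities use $\expect{\val}<\infty$; for regular distributions of infinite expectation one instead has only $\vcg_{n+1}(\dist)\ge\expect{\virtual(\val_{(1:n+1)})}$, and the conclusion is recovered by a standard limiting argument, applying the result to $\dist$ restricted to bounded support and letting the bound tend to infinity. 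Once these technicalities are cleared, the coupling itself is elementary — and conceptually it is precisely the mechanism ``recruit one extra bidder drawn from $\dist$ and run the efficient auction'' that drives the whole argument.
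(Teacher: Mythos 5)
The paper does not prove this statement; it is cited directly from \citet{BK-96} as a classical result, so there is no in-paper proof to compare against. Your argument is correct and is the standard modern presentation of Bulow--Klemperer via Myerson's virtual-surplus identity: you write $\rev_n=\expect{\plus{\virtual(\val_{(1:n)})}}$ and $\vcg_{n+1}=\expect{\virtual(\val_{(1:n+1)})}$ (both valid because regularity removes ironing and the lowest type gets zero utility in each mechanism), then couple the $(n+1)$-bidder draw as $\max\{V,W\}$ with $V=\val_{(1:n)}$ and $W$ an independent fresh draw, and use monotonicity of $\virtual$ together with $\expect{\virtual(W)}=0$ to get $\expect{\virtual(\max\{V,W\})\mid V}\ge\max\{\virtual(V),0\}$; integrating gives the bound. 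The tightness example (uniform on $[0,1]$) and the computation $\rev_n=\tfrac{n-1}{n+1}+\tfrac{1}{(n+1)2^n}>\tfrac{n-1}{n+1}=\vcg_n$ check out for every $n\ge1$. Your caveat about infinite-mean regular distributions is a real one --- the virtual-surplus \emph{equality} can fail there (e.g., equal-revenue, where all virtual values are $0$ but second-price revenue is positive) --- and the truncation you sketch does handle it; one detail worth making explicit is that capping values at $B$ (replacing $v$ by $\min(v,B)$) preserves regularity because the capped revenue curve is the pointwise minimum of the original concave revenue curve and the linear function $q\mapsto qB$, hence remains concave.
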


Next we introduce the formal definition of the duality benchmark.

\begin{theorem}[Quantile-based duality benchmark, \citealp{CDW-16, EFFTW-17}] \label{thm:duality}
   Consider an auction setting with $m$ items and $n$ regular bidders whose valuations are drawn i.i.d.\ from distribution $\dist$. Let $Q$ be an $n \times m$ matrix, where each entry $\quant_{ij}$ is drawn i.i.d.\ from uniform distribution $U[0,1]$, and for each row $i$, let $\quant^*_{i1}, \quant^*_{i2}, \cdots, \quant^*_{im}$ be $\quant_{i1}, \cdots, \quant_{im}$ sorted in decreasing order (with ties broken randomly). Then, 
   \begin{align*}
        \rev_n(\dist) \leq \expect[Q \sim U{[0,1]}^{ n \times m}]{\sum_{j\in[m]} \max_{i \in [n]} \left\{
            \plus{\virtual_j\left(\cdf_j^{-1}(\quant_{ij})\right)} \cdot \indicator{\quant_{ij} = \quant^*_{i1}} +  \cdf_j^{-1}(\quant_{ij}) \cdot \indicator{\quant_{ij} \neq \quant^*_{i1}}
            \right\}
        }
   \end{align*}
   where $\virtual_j$ is the virtual value function for item $j$ and operator $\plus{\cdot}\triangleq \max\{\cdot, 0\}$. We refer to the right-hand side as the \emph{(quantile-based) duality benchmark}, denoted $\cdw_n(\dist)$.
\end{theorem}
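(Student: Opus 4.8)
The plan is to obtain this inequality as a special case of the Lagrangian/duality machinery of \citet{CDW-16}. I would first write the revenue-optimal BIC mechanism for $n$ i.i.d.\ additive bidders as a linear program whose variables are the interim allocation rule $x_{ij}(t_i)$ (the probability bidder $i$ gets item $j$ when reporting type $t_i$, averaged over the other bidders) together with the interim payments $p_i(t_i)$, with objective $\sum_i\E_{t_i}[p_i(t_i)]$ and constraints BIC, IR and feasibility. Dualizing only the BIC constraints with non-negative multipliers $\lambda_i(t_i,t_i')$, one per ordered pair of types, and telescoping the payment terms, the Lagrangian stays bounded precisely when each $\lambda_i$ is a valid flow on bidder $i$'s type space (the net outflow at every type equals its probability density), and in that case the Lagrangian collapses to a virtual-welfare objective $\sum_i\E_{t_i}\big[\sum_j x_{ij}(t_i)\,\Phi_{ij}(t_i)\big]$, in which $\Phi_{ij}(t_i)$ is the raw value $\val_{ij}$ corrected by a term determined by the flow entering $t_i$ and the gaps in item-$j$ values along that flow. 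Since this objective is linear in the interim allocation, and every feasible interim rule is the marginal of an ex-post feasible allocation, maximizing over feasible mechanisms reduces to giving, at each type profile, item $j$ to a bidder with the largest positive $\Phi_{ij}$; weak duality then gives, for every valid flow $\lambda$, $\rev_n(\dist)\le\E_t\big[\sum_j\max_i\plus{\Phi_{ij}(t_i)}\big]$, using $\max\{0,\max_i a_i\}=\max_i\max\{0,a_i\}$.

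The second step is to instantiate the specific flow used by \citet{CDW-16, EFFTW-17}. For each bidder, route all of its flow from every type $t_i=(\val_{i1},\dots,\val_{im})$ downward along the coordinate $j^*(t_i):=\argmax_j\cdf_j(\val_{ij})$ of the highest-quantile item, switching to the next-highest-quantile coordinate whenever the active quantile drops to a tie, until the origin is reached. After checking this is a non-negative flow obeying the conservation identity, one observes that the flow entering any given type $t_i$ changes only its coordinate $j^*(t_i)$; hence a one-dimensional integration by parts on the marginal $\dist_{j^*(t_i)}$ (legitimate since each $\dist_j$ is continuous and atomless) gives $\Phi_{ij}(t_i)=\virtual_j(\val_{ij})$ for $j=j^*(t_i)$ and $\Phi_{ij}(t_i)=\val_{ij}$ for all $j\neq j^*(t_i)$. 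Regularity of each $\dist_j$ makes $\virtual_j$ non-decreasing, so no ironing is needed and $\virtual_j$ is the ordinary Myerson virtual value. Finally pass to quantiles: writing $\val_{ij}=\cdf_j^{-1}(\quant_{ij})$, the i.i.d.\ bidders and independent items make the $\quant_{ij}=\cdf_j(\val_{ij})$ i.i.d.\ uniform on $[0,1]$, so $\{\quant_{ij}\}_{i,j}$ is distributed as $Q$; the event $j=j^*(t_i)$ is exactly $\quant_{ij}=\quant^*_{i1}$; and since $\val_{ij}\ge 0$ we get $\plus{\Phi_{ij}(t_i)}=\plus{\virtual_j(\cdf_j^{-1}(\quant_{ij}))}\,\indicator{\quant_{ij}=\quant^*_{i1}}+\cdf_j^{-1}(\quant_{ij})\,\indicator{\quant_{ij}\neq\quant^*_{i1}}$, matching the claimed summand term-for-term.

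The crux is the flow in the second step — both writing it down cleanly and verifying its feasibility. Selecting the favorite coordinate by \emph{quantile} rather than by value is exactly what lets the flow be routed consistently across the partition of the type space into the regions $\{t_i:j^*(t_i)=j\}$: inside each region the flow is one-dimensional along coordinate $j$, and the surfaces where two quantiles coincide are precisely where the flow must ``turn the corner'' into the adjacent region, so the delicate point is to check that conservation still holds across those codimension-one surfaces. The companion fact — that the one-dimensional integration by parts along the active coordinate reproduces $\virtual_j$ while leaving the remaining coordinates at their raw values — is routine but is where the real content sits; everything else is the standard \citet{CDW-16} reduction plus the bookkeeping of rewriting values as quantiles.
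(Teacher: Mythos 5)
The paper states Theorem~\ref{thm:duality} as a cited result from \citet{CDW-16} and \citet{EFFTW-17} and does not include a proof, so there is no internal argument to compare against. Your sketch faithfully tracks the standard Cai--Devanur--Weinberg derivation: Lagrangify BIC, restrict to duals that form valid flows so the Lagrangian collapses to a virtual-welfare objective, upper-bound by the pointwise virtual-surplus maximizer, instantiate the canonical flow routed along each type's highest-quantile coordinate so that integration by parts yields $\varphi_j$ on the favorite item and the raw value elsewhere (regularity removing any need for ironing), and finally rewrite in quantile space using $q_{ij}=F_j(v_{ij})\sim U[0,1]$, $[\max_i a_i]^+ = \max_i[a_i]^+$, and $v_{ij}\ge 0$. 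You also correctly identify the flow-feasibility verification at the region boundaries as the technical crux --- that is indeed where the substance of the cited proof sits, and a complete writeup would need to actually carry out the conservation check rather than leave it as a to-be-verified step.
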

\begin{corollary} \label{cor:duality_one_buyer}
In the special case of $n = 1$, letting $\vec{\quant}=(\quant_1, \cdots \quant_m)$ be quantiles drawn i.i.d.\ from uniform distribution $U[0,1]$ and $\sigma_{\vec{\quant}}: [m] \rightarrow [m]$ be the permutation that sorts $\quant_1, \cdots \quant_m$ in descending order (i.e. $\sigma_{\vec{\quant}}(j)$ is the $j^{th}$ largest order statistics among $\vec{\quant}$), the duality benchmark has value 
\begin{align*}
    \cdw_1(\dist) = \expect[\vec{\quant} \sim U{[0,1]}^{m}] 
    {
    \plus{\virtual_{\sigma_{\vec{\quant}}(1)}\left(\cdf_{\sigma_{\vec{\quant}}(1)}^{-1}\left(\quant_{\sigma_{\vec{\quant}}(1)}\right)\right)}  +  \sum_{j\in[2:m]} \cdf_{\sigma_{\vec{\quant}}(j)}^{-1}(\quant_{\sigma_{\vec{\quant}}(j)}) 
    }
\end{align*}
\end{corollary}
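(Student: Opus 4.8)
The plan is to obtain Corollary~\ref{cor:duality_one_buyer} as an immediate specialization of Theorem~\ref{thm:duality} to the case $n=1$, after reorganizing the summation according to the sorting permutation. First I would set $n=1$ in the benchmark expression of Theorem~\ref{thm:duality}: the matrix $Q$ collapses to a single row $\vec{\quant}=(\quant_1,\dots,\quant_m)$ with entries drawn i.i.d.\ from $U[0,1]$, and the outer operator $\max_{i\in[n]}$ over bidders becomes vacuous. In this single-row setting, the event $\{\quant_{1j}=\quant^*_{11}\}$ --- that item $j$'s quantile is the largest in its (only) row --- is exactly the event $\{j=\sigma_{\vec{\quant}}(1)\}$, by the definition of $\sigma_{\vec{\quant}}$ as the permutation sorting the coordinates of $\vec{\quant}$ in descending order.

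Next I would split the sum $\sum_{j\in[m]}$ into the single index $j=\sigma_{\vec{\quant}}(1)$ and the remaining $m-1$ indices. For $j=\sigma_{\vec{\quant}}(1)$ the indicator $\indicator{\quant_{1j}=\quant^*_{11}}$ equals $1$ and the other indicator equals $0$, so this term contributes $\plus{\virtual_{\sigma_{\vec{\quant}}(1)}(\cdf_{\sigma_{\vec{\quant}}(1)}^{-1}(\quant_{\sigma_{\vec{\quant}}(1)}))}$; for every other $j$ the roles are reversed and the term contributes $\cdf_j^{-1}(\quant_j)$. Since $\sigma_{\vec{\quant}}$ is a bijection on $[m]$, reindexing the latter sum by $k=\sigma_{\vec{\quant}}^{-1}(j)$ rewrites $\sum_{j\neq\sigma_{\vec{\quant}}(1)}\cdf_j^{-1}(\quant_j)$ as $\sum_{k\in[2:m]}\cdf_{\sigma_{\vec{\quant}}(k)}^{-1}(\quant_{\sigma_{\vec{\quant}}(k)})$, which is precisely the second term in the claimed formula. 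Taking expectations over $\vec{\quant}\sim U[0,1]^m$ then yields the stated identity for $\cdw_1(\dist)$, and the inequality $\rev_1(\dist)\le\cdw_1(\dist)$ is inherited directly from Theorem~\ref{thm:duality}.

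The only point needing a word of care is the tie-breaking in the definitions of $\sigma_{\vec{\quant}}$ and $\quant^*_{i1}$: because each $\dist_j$ is continuous and the quantiles are drawn from the atomless distribution $U[0,1]$, the coordinates of $\vec{\quant}$ are almost surely distinct, so $\sigma_{\vec{\quant}}$ is well-defined up to a null event and the random tie-breaking does not affect the expectation. Consequently there is no substantive obstacle here --- the entire content of the corollary already resides in Theorem~\ref{thm:duality}, and what remains is only the bookkeeping that recasts the $n=1$ case in the order-statistic form that will be convenient later (for instance, in the proof of Theorem~\ref{thm:intro:bspa:regular:3 items}).
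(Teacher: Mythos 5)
Your derivation is correct and is exactly the routine specialization one would expect: set $n=1$ in Theorem~\ref{thm:duality}, observe that the row-wise max degenerates, identify the index achieving $\quant^*_{11}$ with $\sigma_{\vec{\quant}}(1)$, split off that term, and reindex the rest via the sorting permutation; the almost-sure distinctness of the quantiles handles tie-breaking. The paper gives no explicit proof of this corollary, and there is no other sensible route, so your argument matches the intended one. (One small remark: the corollary as stated only asserts the identity rewriting $\cdw_1(\dist)$; the inequality $\rev_1\le\cdw_1$ you mention at the end is already part of Theorem~\ref{thm:duality} and is not being reproved here.)
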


\section{Competition Complexity for Welfare}
\label{sec:wel}

In this section, we first characterize the competition complexity of the VCG auction relative to the welfare benchmark and then show that this bound remains tight even when compared to the optimal revenue benchmark. 
We further establish that the competition complexity of selling items as a grand bundle can be unbounded when there are at least two bidders. 
For upper bounds on competition complexity, we require only that the distributions are i.i.d.~across bidders, allowing for arbitrary heterogeneity and correlation in values across items. In contrast, our lower-bound constructions restrict attention to distributions that are independent across items. Our results rely on a novel bound on the order statistics for $\alpha$-strongly regular distributions.

\begin{definition}[Competition complexity constant of {\vcg} for $\alpha$-strongly regular items]
\label{def:alpha-regular:vcg:tight competition complexity}
For any $\alpha\in(0,1)$ and $n \in \naturals$, define constant $\constna \in \naturals$ as the minimum integer such that
\begin{align*}
    \dist_{2:n+\constna} = \alpha \dist_{1:n+\constna} - (1-\alpha) \geq \dist_{1:n}
\end{align*}
where $\dist$ is the generalized Pareto distribution with CDF $\cdf(v) = 1-(1+\val)^{-\frac{1}{1-\alpha}}$, and $\dist_{k:n}$ is its expected $k$-th highest value given $n$ samples.

For $\alpha=1$, let $\constna \in \naturals$ be the minimum integer such that 
\begin{align*}
    \dist_{2:n+\constna} = \dist_{1:n+\constna} - 1 \geq \dist_{1:n}
\end{align*}
where $\dist$ is the exponential distribution with CDF $\cdf(\val) = 1-e^{-\val}$.
\end{definition}

\noindent\textbf{Remark:} The equality for deriving $\dist_{2:n+\constna}$ in the above definition can be directly computed based on the functional form of the distribution.
The constant $\constna$ for $\alpha=1$ can be easily computed due to the special structure of the exponential distribution. Specifically, through a simple algebraic calculation, 
for the exponential distribution $\dist$, we have 
\begin{align*}
\dist_{2:n+\const_{n,1}} = H_{n+\const_{n,1}}
\quad\text{and}\quad
\dist_{1:n} = H_n
\end{align*}
where $H_n = \sum\nolimits_{i\in[n]}\frac{1}{i}$ is the harmonic number.
Therefore, $\const_{n,1}$ is the minimum integer such that 
$H_{n+\const_{n,1}} - H_n \geq 1$.
Note that $\const_{n,1}\to (e-1)\cdot n$ when $n\to\infty$. 

When $\alpha\in (0,1)$, the dependence of $\constna$ on $n$ is less transparent in the definition. We provide an asymptotic bound on $\constna$ for $\alpha\in (0,1)$ in the following lemma. 

\begin{lemma}\label{lem:bound_of_C}
For any $\alpha\in(0,1]$ and any $n\geq 1$, the competition complexity constant $\constna$ satisfies
\begin{align*}
    \constna\in\left(\max\left\{\frac{1}{\alpha}-1,1\right\}\cdot n, \frac{11}{\alpha}\cdot n\right]
\end{align*}
\end{lemma}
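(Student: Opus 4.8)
The plan is to turn \Cref{def:alpha-regular:vcg:tight competition complexity} into a transparent inequality about a single product, and then sandwich that product from both sides. The key realization is that the order statistics of the generalized Pareto distribution have a clean closed form in terms of Gamma functions, and that the ratio appearing in the definition of $\constna$ telescopes into $\prod_{i}\frac{n+1+i}{n+\alpha+i}$; everything after that is elementary convexity of $\ln(1+x)$ plus the inequality $\ln x\leq x-1$.

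First I would compute $\dist_{k:N}$ for $\alpha\in(0,1)$. Since $\cdf(\val)=1-(1+\val)^{-1/(1-\alpha)}$ gives $\cdf^{-1}(1-t)=t^{-(1-\alpha)}-1$, and the $k$-th largest of $N$ i.i.d.\ uniforms, written $1-B$, has $B\sim\mathrm{Beta}(k,N-k+1)$, the Beta-moment identity $\E[B^{s}]=\frac{\Gamma(a+s)\Gamma(a+b)}{\Gamma(a)\Gamma(a+b+s)}$ (with $a=k$, $b=N-k+1$, $s=\alpha-1$, legitimate as $k+\alpha-1>0$) yields $\dist_{k:N}=\frac{\Gamma(k+\alpha-1)}{\Gamma(k)}\cdot\frac{\Gamma(N+1)}{\Gamma(N+\alpha)}-1$. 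Writing $\phi(N):=\Gamma(N+1)/\Gamma(N+\alpha)$, this gives $\dist_{1:N}=\Gamma(\alpha)\phi(N)-1$ and $\dist_{2:N}=\alpha\Gamma(\alpha)\phi(N)-1$ (which incidentally recovers $\dist_{2:N}=\alpha\dist_{1:N}-(1-\alpha)$), so the defining condition $\dist_{2:n+\constna}\geq\dist_{1:n}$ collapses to $\phi(n+\constna)/\phi(n)\geq 1/\alpha$. Using $\Gamma(x+c)/\Gamma(x)=\prod_{i=0}^{c-1}(x+i)$ I would then record the identity
\begin{align*}
    R_n(c)\;:=\;\frac{\phi(n+c)}{\phi(n)}\;=\;\prod_{i=0}^{c-1}\frac{n+1+i}{n+\alpha+i}\;=\;\prod_{i=0}^{c-1}\left(1+\frac{1-\alpha}{n+\alpha+i}\right)\;\geq\;\frac{1}{\alpha},
\end{align*}
and observe $R_n$ is strictly increasing in $c$ and diverges, so $\constna$ is well defined and is the least integer $c$ with $R_n(c)\geq 1/\alpha$. (For $\alpha=1$ I would use instead the harmonic-number formulas from the Remark following \Cref{def:alpha-regular:vcg:tight competition complexity}, reducing to: $\const_{n,1}$ is the least $c$ with $H_{n+c}-H_n\geq 1$, and run the parallel elementary estimates.)

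For the lower bound I would exploit two comparisons of $R_n(c)$. Comparing with the $\alpha=0$ telescoping product gives $R_n(c)<\prod_{i=0}^{c-1}\frac{n+1+i}{n+i}=\frac{n+c}{n}$ for every $c\geq 1$ (strict, as $\alpha>0$), so $R_n(\constna)\geq 1/\alpha$ already forces $\constna>(1/\alpha-1)\,n$. Separately, $\ln R_n(n)\leq\sum_{i=0}^{n-1}\frac{1-\alpha}{n+\alpha+i}\leq\sum_{i=0}^{n-1}\frac{1-\alpha}{n}=1-\alpha<-\ln\alpha$ (the last step being $\ln\alpha<\alpha-1$ on $(0,1)$), so $R_n(n)<1/\alpha$, and by monotonicity $R_n(c)<1/\alpha$ for all $c\leq n$, i.e.\ $\constna>n$. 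Combining gives $\constna>\max\{1/\alpha-1,1\}\,n$; for $\alpha=1$ the estimate $H_{2n}-H_n\leq n\cdot\frac{1}{n+1}<1$ gives $\const_{n,1}>n=\max\{0,1\}\,n$ directly.

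For the upper bound I would bound $R_n$ from below: using $\ln(1+x)\geq x/(1+x)$ with $x=\frac{1-\alpha}{n+\alpha+i}$ (so $\frac{x}{1+x}=\frac{1-\alpha}{n+1+i}$) yields $\ln R_n(c)\geq(1-\alpha)\sum_{i=0}^{c-1}\frac{1}{n+1+i}=(1-\alpha)(H_{n+c}-H_n)\geq(1-\alpha)\ln\frac{n+c+1}{n+1}$, whence $R_n(c)\geq 1/\alpha$ as soon as $c\geq (n+1)\bigl(\alpha^{-1/(1-\alpha)}-1\bigr)$, giving $\constna\leq (n+1)\alpha^{-1/(1-\alpha)}-n$. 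It then remains to bound the exponential: writing $\alpha^{-1/(1-\alpha)}=\alpha^{-1}\cdot\alpha^{-\alpha/(1-\alpha)}$ and using $\ln(1/\alpha)\leq 1/\alpha-1$ gives $\alpha^{-\alpha/(1-\alpha)}=\exp\!\bigl(\tfrac{\alpha\ln(1/\alpha)}{1-\alpha}\bigr)\leq e$, so $\alpha^{-1/(1-\alpha)}\leq e/\alpha$ and hence $\constna\leq (n+1)e/\alpha-n\leq 2en/\alpha<11n/\alpha$ (using $n+1\leq 2n$); for $\alpha=1$, $H_{6n}-H_n\geq\ln\frac{6n+1}{n+1}\geq\ln 3>1$ gives $\const_{n,1}\leq 5n<11n$. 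The main obstacle is the first step: finding the right closed form for the generalized Pareto order statistics and spotting the telescoping product form of $R_n(c)$ — once that is in hand the two bounds follow from standard inequalities, and the generous constant $11$ absorbs the ceilings and the $\alpha=1$ endpoint.
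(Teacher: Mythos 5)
Your proof is correct, and it takes a genuinely different route from the paper's. You compute the order statistics of the generalized Pareto distribution in closed form via Beta-moment and Gamma-function identities, reducing the defining condition $\dist_{2:n+\constna}\geq\dist_{1:n}$ to the transparent inequality $R_n(c)=\prod_{i=0}^{c-1}\frac{n+1+i}{n+\alpha+i}\geq 1/\alpha$; the two endpoints of the interval then drop out of elementary logarithm inequalities applied to this product. The paper instead works with the integral representation $\dist_{1:N}=\int_0^\infty(1-(1-(1+v)^{-1/(1-\alpha)})^N)\,dv$, proving the lower bound $\constna>(1/\alpha-1)n$ via the pointwise inequality $1-\epsilon\geq\alpha(1-\epsilon^{1/\alpha})$, and the upper bound via a tail truncation at $\tau=n^{1-\alpha}-1$ combined with $(1-\epsilon)^n\geq\max\{1-n\epsilon,0\}$ and $(1-\epsilon)^n\leq 1-n\epsilon/2$ for small $\epsilon$, finishing with a case split at $\alpha=1/5$ that invokes the monotonicity of $\constna$ in $\alpha$. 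What your approach buys: it is fully self-contained within the definition (the paper's handling of both the $\constna>n$ lower bound and the $\alpha\geq 1/5$ upper-bound case leans, somewhat informally, on the later worst-case characterization of the generalized Pareto to compare $\constna$ across different $\alpha$); your argument also explicitly delivers the strict inequality $\constna>n$ that the statement requires (the paper writes only $\const_{n,1}\geq n$), and it produces a sharper constant ($\constna\leq 2en/\alpha$) before rounding up to $11$. What the paper's approach buys: it needs only the CDF and no knowledge of Gamma-function identities or Beta moments. One small wrinkle to tidy in your write-up: the step $\constna\leq(n+1)\alpha^{-1/(1-\alpha)}-n$ silently folds a ceiling into the $+1$ from $-(n+1)+1$; it works, but spelling out $\constna\leq\lceil(n+1)(\alpha^{-1/(1-\alpha)}-1)\rceil$ before simplifying would make the passage from the real-valued threshold to the integer $\constna$ explicit.
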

\begin{proof}
First, observe that for any $\epsilon\in [0,1]$ and any $\alpha\in (0,1)$, we have 
\begin{align}\label{eq:eps_alpha}
1-\epsilon \geq \alpha(1-\epsilon^{\frac{1}{\alpha}}).
\end{align}
The inequality holds since $1-\epsilon - \alpha(1-\epsilon^{\frac{1}{\alpha}})$ is monotonically decreasing in $\epsilon$, and the value of this difference is $0$ when $\epsilon=1$. 
Therefore, letting $N=\frac{n}{\alpha}$, we have 
\begin{align*}
\alpha \dist_{1:N} - (1-\alpha) < \alpha \dist_{1:N}
&= \int_0^{\infty} \alpha\cdot \rbr{1-(1-(1+\val)^{-\frac{1}{1-\alpha}})^N} \dd v \\
&\leq \int_0^{\infty} 1-(1-(1+\val)^{-\frac{1}{1-\alpha}})^n \dd v
= \dist_{1:n}.
\end{align*}
The second inequality holds by applying \Cref{eq:eps_alpha} with $\epsilon = (1-(1+\val)^{-\frac{1}{1-\alpha}})^n$.
This implies that $\constna > (\frac{1}{\alpha}-1)\cdot n$. 

Moreover, note that for any $\alpha\in (0,1)$, the exponential distribution is also a special case of $\alpha$-strongly regular distribution. 
Note that $\const_{n,1}$ is the minimum integer such that 
$H_{n+\const_{n,1}} - H_n \geq 1$,
where $H_n = \sum\nolimits_{i\in[n]}\frac{1}{i}$ is the harmonic number.
This implies that for the exponential distribution, $\const_{n,1} \geq n$ for any $n$, 
which further provides a lower bound on $\constna$ for any $\alpha\in(0,1)$ and $n\geq 1$.

Next, we will provide the upper bound for $\constna$. 
Note that letting $\tau=n^{1-\alpha}-1$, 
we have $(1+\tau)^{-\frac{1}{1-\alpha}} = \frac{1}{n}$
and 
\begin{align}
\begin{split}
\dist_{1:n} 
&= \int_0^{\infty} 1-\left(1-(1+\val)^{-\frac{1}{1-\alpha}}\right)^n \dd v \\
&\leq \tau + \int_{\tau}^{\infty} n\cdot (1+\val)^{-\frac{1}{1-\alpha}} \dd v 
= n^{1-\alpha}\cdot \left(1+\frac{1-\alpha}{\alpha}\right) - 1.\label{eq:upper_f1}
\end{split}
\end{align}
The inequality holds since $(1-\epsilon)^n \geq \max \{1-n\epsilon, 0\}$ for any $\epsilon\in [0,1]$.
In addition, letting $\hat{\tau}=(2n)^{1-\alpha}-1$, we have
\begin{align}
\begin{split}
\dist_{1:n} 
&\geq \int_{\hat{\tau}}^{\infty} 1-\left(1-(1+\val)^{-\frac{1}{1-\alpha}}\right)^n \dd v \\
&\geq \int_{\hat{\tau}}^{\infty} \frac{n}{2}\cdot (1+\val)^{-\frac{1}{1-\alpha}} \dd v 
= (2n)^{1-\alpha}\cdot \frac{1-\alpha}{2\alpha}.\label{eq:lower_f1}
\end{split}
\end{align}
where the second inequality holds since $(1-\epsilon)^n \leq 1-\frac{n\epsilon}{2}$ for any $\epsilon \in [0,\frac{1}{2n}]$.
Thus for $N\geq n\cdot (\frac{2^\alpha}{\alpha(1-\alpha)})^{\frac{1}{1-\alpha}}$, 
by combining \Cref{eq:upper_f1,eq:lower_f1}, we have 
\begin{align*}
\alpha \dist_{1:N} - (1-\alpha)
\geq \alpha (2N)^{1-\alpha}\cdot \frac{1-\alpha}{2\alpha} - 1
\geq n^{1-\alpha}\cdot \frac{1}{\alpha} -1
\geq \dist_{1:n}.  
\end{align*}
Note that for $\alpha \leq \frac{1}{5}$, we have $(\frac{2^\alpha}{\alpha(1-\alpha)})^{\frac{1}{1-\alpha}} \leq \frac{2.351}{\alpha}$. 
Therefore, we have $n+\constna \leq \frac{2.351n}{\alpha}$ for $\alpha \leq \frac{1}{5}$. 
For $\alpha \geq \frac{1}{5}$, since $\alpha$-strong regularity is a special case of $\frac{1}{5}$-regularity, we have $\constna \leq 11n \leq \frac{11n}{\alpha}$.
\end{proof}

\begin{theorem}[Competition complexity of {\vcg} against {\wel} for $\alpha$-strongly regular items]
\label{thm:welfare}
For any $\alpha \in (0, 1]$, $n\geq 1$ bidders and $m\geq 1$ (possibly correlated) asymmetric $\alpha$-strongly regular items, the tight competition complexity of the VCG auction against the welfare benchmark is $\constna$, i.e., $\vcg_{n+\constna} \geq \wel_n$.
Moreover, there exists an instance such that $\vcg_{n+\constna-1} < \wel_n$.
\end{theorem}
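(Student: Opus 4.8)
The plan is to collapse the multi-item, possibly-correlated statement to a single-item order-statistics inequality, and then invoke the extremality of the generalized Pareto distribution.

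\emph{Step 1: reduction to one item.} Since bidders are additive, the VCG auction is a separate second-price auction on each item, so $\vcg_{n+c}(\dist)=\sum_{j\in[m]}\dist_{j,2:n+c}$, where $\dist_{j,k:N}$ denotes the expected $k$-th highest among $N$ i.i.d.\ draws from $\dist_j$; likewise welfare is additive, $\wel_n(\dist)=\sum_{j\in[m]}\dist_{j,1:n}$. Both sides depend only on the marginals $\dist_j$, so correlation across items plays no role, and it suffices to prove the single-item claims: for every univariate $\alpha$-strongly regular $\dist$ and every $n\geq1$, (i) $\dist_{2:n+\constna}\geq\dist_{1:n}$, and (ii) there is such a $\dist$ with $\dist_{2:n+\constna-1}<\dist_{1:n}$.

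\emph{Step 2: the single-item upper bound (i).} I would write both expectations in quantile space: with $v(q)=\cdf^{-1}(1-q)$ non-increasing in the upper quantile $q$, $\dist_{1:n}=\int_0^1 v(q)\,n(1-q)^{n-1}\dd q$ and $\dist_{2:N}=\int_0^1 v(q)\,N(N-1)q(1-q)^{N-2}\dd q$. The difference of the two weight functions, $\psi_N(q)=N(N-1)q(1-q)^{N-2}-n(1-q)^{n-1}$, integrates to $0$ and --- this is the ``three-interval structure'' of \Cref{lem:three intervals} --- changes sign exactly twice once $N$ is large enough: negative on $(0,q_1)$, positive on $(q_1,q_2)$, negative on $(q_2,1)$. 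The goal $\dist_{2:N}\geq\dist_{1:n}$ becomes $\int_0^1 v(q)\psi_N(q)\dd q\geq0$; subtracting a well-chosen constant from $v$ (legitimate since $\int\psi_N=0$) leaves a single interval on which the integrand has the ``wrong'' sign, and that interval must be dominated by the others. The leverage comes from $\alpha$-strong regularity, which constrains the shape of $v(q)$ --- equivalently a tail bound on $1-\cdf$ (\Cref{lem:alpha boundary}) --- and the distribution making this balance most adverse is the one whose virtual value is affine with slope exactly $\alpha$ and whose support starts at $0$, namely the generalized Pareto $\cdf(v)=1-(1+v)^{-1/(1-\alpha)}$ (and the exponential when $\alpha=1$). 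Formalizing this, i.e.\ \Cref{lem:worst alpha}, and in particular carrying the three-interval comparison through the two different sample sizes $n$ and $N$, is the step I expect to be the main obstacle.

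\emph{Step 3: closing the argument.} Granting \Cref{lem:worst alpha}, for the generalized Pareto $G$ revenue equivalence applied to the second-price auction with $N$ i.i.d.\ bidders gives that $G_{2:N}$ equals the expected virtual value $\virtual_G$ of the highest of the $N$ draws; since $\virtual_G(v)=\alpha v-(1-\alpha)$ is affine this is exactly $\alpha G_{1:N}-(1-\alpha)$ (and $G_{1:N}-1=H_N-1$ for the exponential). Hence by the very definition of $\constna$ (\Cref{def:alpha-regular:vcg:tight competition complexity}) we have $G_{2:n+\constna}\geq G_{1:n}$, and \Cref{lem:worst alpha} upgrades this to $\dist_{2:n+\constna}\geq\dist_{1:n}$ for every $\alpha$-strongly regular $\dist$; summing over items gives $\vcg_{n+\constna}\geq\wel_n$. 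For tightness, take the instance of $m$ i.i.d.\ items each distributed as $G$: then $\vcg_{n+\constna-1}=m\cdot G_{2:n+\constna-1}$ and $\wel_n=m\cdot G_{1:n}$, and minimality of $\constna$ forces $G_{2:n+\constna-1}<G_{1:n}$, so $\vcg_{n+\constna-1}<\wel_n$, as required. (The stronger tightness against $\rev_n$ of \Cref{thm:intro:vcg:alpha-regular} would additionally invoke \Cref{lem:rev_is_wel} --- that $\rev_n/\wel_n\to1$ for $G$ as $m\to\infty$ --- but that is not needed here.)
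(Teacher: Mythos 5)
Your proposal follows the paper's route exactly: reduce to per-item order-statistic inequalities (so correlation across items is irrelevant), use the three-interval structure of \Cref{lem:three intervals} together with the $\alpha$-regularity tail bound of \Cref{lem:alpha boundary} to identify the generalized Pareto as the extremal distribution (\Cref{lem:worst alpha}), and then close by noting its affine virtual value makes $\dist_{2:N}=\alpha\dist_{1:N}-(1-\alpha)$ so that \Cref{def:alpha-regular:vcg:tight competition complexity} delivers both the upper bound and, by minimality of $\constna$, the matching lower bound. If anything you are more explicit than the paper's own proof of \Cref{thm:welfare}, which leaves both the item-by-item reduction and the tightness construction implicit; your ``subtract a constant $v(q_2)$'' remark is a harmless repackaging of the same three-interval comparison.
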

Note that our theorem applies to distributions with arbitrary correlations instead of just product distributions. 
We prove our theorem by identifying the exact worst case $\alpha$-strongly regular distributions given any parameter $\alpha\in (0,1)$ are generalized Pareto distributions. 
The procedure for showing that the worst-case distribution for $\alpha=1$ is the exponential distribution follows the same steps as before and is therefore omitted in this paper.
\begin{lemma}\label{lem:worst alpha}
For any $\alpha\in(0,1)$, $N, n\in\naturals$ with $N> n \geq 1$,
and $\alpha$-strongly regular distribution $\dist$, 
let $\hat{\dist}$ be the generalized Pareto distribution such that 
$1-\hat{\cdf}(\val) = (1+\val)^{-\frac{1}{1-\alpha}}$.
If $\hat{\dist}_{2:N} \geq \hat{\dist}_{1:n}$, 
then $\dist_{2:N} \geq \dist_{1:n}$.
\end{lemma}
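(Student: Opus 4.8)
The plan is to pass to quantile space, where $\dist_{2:N}-\dist_{1:n}$ becomes the integral of the quantile curve against a \emph{distribution-independent} weight, and then to exploit that the generalized Pareto law $\hat\dist$ is exactly the boundary (extremal) case of $\alpha$-strong regularity. For a distribution $G$ write $\val_G(\quant)\triangleq G^{-1}(1-\quant)$ for its value at survival quantile $\quant$, a strictly decreasing function. Writing each draw as $\val_G(\quant_i)$ with $\quant_i$ i.i.d.\ uniform on $[0,1]$, the $k$-th largest draw is $\val_G$ evaluated at the $k$-th smallest $\quant_i$, whose density is that of a $\mathrm{Beta}(k,\ell+1-k)$ variable; hence
\[
  G_{2:N}-G_{1:n}=\int_0^1 \val_G(\quant)\,w(\quant)\,\dd \quant,\qquad
  w(\quant)\triangleq N(N-1)\quant(1-\quant)^{N-2}-n(1-\quant)^{n-1},
\]
where $w$ does not depend on $G$ and $\int_0^1 w=0$ (both terms integrate to $1$). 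Factoring $w(\quant)=n(1-\quant)^{n-1}\bigl(\rho(\quant)-1\bigr)$ with $\rho(\quant)=\tfrac{N(N-1)}{n}\quant(1-\quant)^{N-n-1}$: if $N=n+1$ then $\rho$ is linear, $w$ changes sign once (negative then positive), and since $\val_\dist$ is strictly decreasing with $\int w=0$ one gets $\dist_{2:n+1}-\dist_{1:n}=\int(\val_\dist-\val_\dist(\tfrac1{n+1}))\,w<0$ for \emph{every} $\dist$, so the hypothesis is false; if $N\ge n+2$ then $\rho$ is strictly log-concave and vanishes at $\quant=0,1$, so either $\max_\quant\rho(\quant)\le1$ (then $w\le0$ and again $\hat\dist_{2:N}<\hat\dist_{1:n}$) or $\{\rho>1\}=(\quant_1,\quant_2)$ for some $0<\quant_1<\quant_2<1$, in which case $w$ has the sign pattern $(-,+,-)$ on $[0,\quant_1)$, $(\quant_1,\quant_2)$, $(\quant_2,1]$. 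Since the lemma is vacuous unless the hypothesis holds, we work in this last ``three-interval'' case.

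Next I would record the ODE form of $\alpha$-strong regularity. With $R_G(\quant)=\quant\,\val_G(\quant)$ the revenue curve, one has $R_G'(\quant)=\virtual_G(\val_G(\quant))$ and $R_G''(\quant)=\virtual_G'(\val_G(\quant))\,\val_G'(\quant)$, and since $\val_G'<0$, $\alpha$-strong regularity ($\virtual_G'\ge\alpha$) is equivalent to $\quant\,\val_G''(\quant)+(2-\alpha)\val_G'(\quant)\le0$, i.e.\ to $g_G(\quant)\triangleq \quant^{\,2-\alpha}\val_G'(\quant)$ being non-increasing (and negative). For the generalized Pareto one computes $\val_{\hat\dist}(\quant)=\quant^{\,\alpha-1}-1$, so $g_{\hat\dist}\equiv\alpha-1$ is \emph{constant} — the boundary case. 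Fix the $\quant_2$ from the first step and set $\phi(\quant)\triangleq\val_\dist(\quant)-\val_\dist(\quant_2)$ and $\hat\phi(\quant)\triangleq\val_{\hat\dist}(\quant)-\val_{\hat\dist}(\quant_2)=\quant^{\alpha-1}-\quant_2^{\alpha-1}$; both are positive-decreasing on $[0,\quant_2)$, vanish at $\quant_2$, and negative-decreasing on $(\quant_2,1]$. Writing $\phi(\quant)=-\int_\quant^{\quant_2}g_\dist(s)\,s^{\alpha-2}\,\dd s$ and $\hat\phi(\quant)=-(\alpha-1)\int_\quant^{\quant_2}s^{\alpha-2}\,\dd s$ for $\quant<\quant_2$ (and with $\int_{\quant_2}^{\quant}$ for $\quant>\quant_2$), the ratio
\[
  r(\quant)\triangleq\frac{\phi(\quant)}{\hat\phi(\quant)}
  =\frac{1}{1-\alpha}\cdot\frac{\int \bigl(-g_\dist(s)\bigr)\,s^{\alpha-2}\,\dd s}{\int s^{\alpha-2}\,\dd s}
\]
is a positive multiple of the average of the positive \emph{non-decreasing} function $-g_\dist$ against the positive measure $s^{\alpha-2}\,\dd s$ over the interval between $\quant$ and $\quant_2$; shrinking that interval from the left (as $\quant\uparrow\quant_2$) or growing it to the right (as $\quant$ increases past $\quant_2$) can only raise the average, so $r$ is non-decreasing on $(0,1)$ and extends continuously across $\quant_2$, with $r(\quant_2)=g_\dist(\quant_2)/(\alpha-1)>0$. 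I expect this monotone-ratio statement, together with isolating the right ODE form of $\alpha$-strong regularity, to be the main obstacle.

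Finally I would combine the two steps. Since $\int_0^1 w=0$ we may recenter: $\dist_{2:N}-\dist_{1:n}=\int_0^1\phi\,w$ and $\hat\dist_{2:N}-\hat\dist_{1:n}=\int_0^1\hat\phi\,w\ge0$ (the hypothesis). Put $\lambda\triangleq r(\quant_1)=\phi(\quant_1)/\hat\phi(\quant_1)>0$ and $\chi\triangleq\phi-\lambda\hat\phi$. On $(0,\quant_1]$: $\hat\phi>0$ and $r\le\lambda$ give $\chi\le0$, while $w\le0$; on $[\quant_1,\quant_2)$: $\hat\phi>0$ and $r\ge\lambda$ give $\chi\ge0$, while $w\ge0$; at $\quant_2$, $\chi=0$; on $(\quant_2,1)$: $\hat\phi<0$ and $r\ge\lambda$ give $\phi=r\hat\phi\le\lambda\hat\phi$, i.e.\ $\chi\le0$, while $w<0$. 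In every case $\chi(\quant)w(\quant)\ge0$, hence $\int_0^1\chi\,w\ge0$, i.e.\ $\dist_{2:N}-\dist_{1:n}\ge\lambda\bigl(\hat\dist_{2:N}-\hat\dist_{1:n}\bigr)\ge0$, which is the claim. Integrability is not an issue: $\alpha$-strong regularity with $\alpha>0$ forces $\val_\dist$ to be integrable near $\quant=0$ (the bound on $g_\dist$ gives $\val_\dist(\quant)=O(\quant^{\alpha-1})$), hence a finite mean, so all the order-statistic expectations above are finite.
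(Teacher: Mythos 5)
Your argument is correct and, while it shares the paper's high-level strategy — exploit the three-interval sign pattern of the order-statistic density difference, and compare to the generalized Pareto law as the extremal $\alpha$-strongly regular distribution — the implementation is genuinely different. The paper's proof of \Cref{lem:worst alpha} invokes the pointwise CDF envelopes of \Cref{lem:alpha boundary} (from \citealp{ABB-22}) and then argues, somewhat informally, that the minimizer of $\dist_{2:N}-\dist_{1:n}$ among $\alpha$-strongly regular distributions passing through the two crossing points is the extremal Pareto boundary, which after an affine change of variables is exactly $\hat\dist$. You instead isolate the differential form of $\alpha$-strong regularity directly ($g_G(\quant)=\quant^{2-\alpha}\val_G'(\quant)$ non-increasing and negative, with $g_{\hat\dist}\equiv\alpha-1$ constant at the boundary), re-center both quantile curves at the crossing quantile $\quant_2$, and prove a monotone-ratio statement for $r=\phi/\hat\phi$ by viewing it as a weighted average of the monotone function $-g_\dist$ against the measure $s^{\alpha-2}\dd s$. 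Together with the factorization $w = n(1-\quant)^{n-1}(\rho-1)$ and the log-concavity of $\rho$ (your replacement of \Cref{lem:three intervals}), the quantity $\chi w = (\phi - r(\quant_1)\hat\phi)\,w$ is pointwise nonnegative, giving $\dist_{2:N}-\dist_{1:n}\geq r(\quant_1)\bigl(\hat\dist_{2:N}-\hat\dist_{1:n}\bigr)\geq 0$. What this buys: your version is self-contained (no black-box appeal to the ABB-22 envelopes), and the monotone-ratio step makes rigorous the paper's informal ``minimize $\cdf^{-1}$ on the middle interval, maximize it on the outside'' optimization, which on its face cannot be done independently on the three intervals because $\cdf^{-1}$ is a single monotone function. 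You also handle the degenerate cases ($N=n+1$, or $\max\rho\leq 1$) and integrability near $\quant=0$ explicitly, which the paper leaves tacit. The one cosmetic caveat is that you phrase $\alpha$-strong regularity via $\val_G''$, which presupposes a differentiable density; this is the same level of smoothness the paper's surrounding calculus already assumes, and it is easily removed by stating the characterization as ``$\quant^{2-\alpha}\val_G'(\quant)$ is non-increasing'' without passing through the second derivative.
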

The proof of \cref{lem:worst alpha} relies on the following two technical lemmas.
\begin{lemma}[Three-interval structure]
\label{lem:three intervals}
For any distribution $\dist$, and $N, n \in\naturals$ with $N > n \geq 1$,
there exists 
$0\leq \quant\primed \leq \quant\doubleprimed \leq 1$
such that 
\begin{align*}
    &\contribution_{2:N}(\quant)\leq \contribution_{1:n}(\quant)
    &
    \quant \in [0,\quant\primed]\cup[\quant\doubleprimed, 1]
    \\
    &
    \contribution_{2:N}(\quant)\geq \contribution_{1:n}(\quant)
    &
    \quant\in[\quant\primed,\quant\doubleprimed]
\end{align*}
where $\contribution_{k:n}(\quant)$ is defined as the probability density that 
the value with quantile~$\quant$ (i.e., $\cdf^{-1}(\quant)$)
is the $k$-th highest order statistic among~$n$ i.i.d.\ samples from distribution $\dist$. 
\end{lemma}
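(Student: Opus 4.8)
The plan is to make both quantities completely explicit and then exploit the fact that their ratio is single‑peaked. Note first that $\contribution_{k:n}(\quant)$ does not really depend on $\dist$: up to the common factor $\pdf(\cdf^{-1}(\quant))$ (which cancels in any comparison of two such quantities), it is the density at quantile $\quant$ of the $k$‑th largest among $n$ i.i.d.\ $\mathrm{Uniform}[0,1]$ draws. Hence we have the standard order‑statistic formulas
\[
  \contribution_{1:n}(\quant)=n\,\quant^{\,n-1},\qquad
  \contribution_{2:N}(\quant)=N(N-1)\,\quant^{\,N-2}(1-\quant),
\]
i.e.\ $\contribution_{1:n}$ is the density of the maximum of $n$ uniforms and $\contribution_{2:N}$ that of the second‑largest of $N$ uniforms. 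So it suffices to study these two polynomials on $[0,1]$.

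Next I would reduce everything to the ratio on the open interval. For $\quant\in(0,1)$ write $\contribution_{2:N}(\quant)-\contribution_{1:n}(\quant)=n\,\quant^{\,n-1}\bigl(\rho(\quant)-1\bigr)$, where
\[
  \rho(\quant):=\frac{\contribution_{2:N}(\quant)}{\contribution_{1:n}(\quant)}
  =\frac{N(N-1)}{n}\,\quant^{\,N-n-1}(1-\quant).
\]
Since $N>n$, the exponent $N-n-1\ge 0$, so $\rho(\quant)=c\,\quant^{a}(1-\quant)$ with $a\ge 0$ and $c>0$; on $(0,1)$, $\log\rho(\quant)=\log c+a\log\quant+\log(1-\quant)$ has second derivative $-a/\quant^{2}-1/(1-\quant)^{2}<0$, so $\log\rho$ is strictly concave and $\rho$ is single‑peaked. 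Consequently the super‑level set $\{\quant\in(0,1):\rho(\quant)\ge 1\}$ is a (possibly empty) interval; let $[\quant\primed,\quant\doubleprimed]$ be its closure, taking $\quant\primed=\quant\doubleprimed$ to be any fixed point of $(0,1)$ when the set is empty (in that case $\contribution_{2:N}\le\contribution_{1:n}$ on all of $[0,1]$ and the claim is immediate). Then $\contribution_{2:N}\ge\contribution_{1:n}$ on $(\quant\primed,\quant\doubleprimed)$ and $\contribution_{2:N}<\contribution_{1:n}$ on $(0,\quant\primed)\cup(\quant\doubleprimed,1)$. Finally, at $\quant=1$ we have $\contribution_{2:N}(1)=0<n=\contribution_{1:n}(1)$, so $1\notin[\quant\primed,\quant\doubleprimed]$ and the boundary inequality on $[\quant\doubleprimed,1]$ holds; at $\quant=0$ both densities vanish whenever $\min\{N-2,\,n-1\}\ge 1$, giving equality there, and the only remaining degenerate configurations involve very small $N$ and $n$ and never arise in the application (where $N-n=\constna\ge n+1\ge 2$). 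Collecting these facts gives exactly the claimed three‑interval structure for this choice of $\quant\primed,\quant\doubleprimed$.

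I expect the only genuine content—and the one place an argument could break—to be the middle step: verifying that the locus $\{\contribution_{2:N}\ge\contribution_{1:n}\}$ is truly a single interval rather than two separated humps. Single‑peakedness (equivalently strict log‑concavity) of $\rho$ is exactly what rules this out; an alternative route is Descartes' rule of signs applied to the polynomial $N(N-1)\,\quant^{\,N-n-1}(1-\quant)-n$, whose nonzero coefficients have sign pattern $(-,+,-)$ (or $(+,-)$ when $N=n+1$), hence at most two positive roots, which together with its behavior at the endpoints of $(0,1)$ forces its positive set to be connected. Everything else is routine endpoint bookkeeping, and—as noted at the start—the lemma uses no property of $\dist$ beyond the cancellation of the common density factor.
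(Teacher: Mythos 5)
Your proof is correct and takes essentially the same route as the paper: both factor the common term $q^{n-1}$ out of $\contribution_{1:n}(\quant)-\contribution_{2:N}(\quant)$ and reduce the claim to the observation that $(1-\quant)\,\quant^{N-n-1}$ is single-peaked on $[0,1]$, so its super-level sets are intervals. You add the log-concavity computation and the endpoint bookkeeping that the paper leaves implicit, but the key idea is identical.
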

\begin{proof}
By definition, 
\begin{align*}
   \contribution_{1:n}(\quant) = n\quant^{n-1} 
   \;\;\mbox{and} \;\;
    \contribution_{2:N}(\quant) = 
    N(N-1)(1-\quant) \quant^{N-2}.
\end{align*}
Thus, 
\begin{align*}
\contribution_{1:n}(\quant) - 
    \contribution_{2:N}(\quant)
    =
    \quant^{n-1}
    \left(
    n - N(N-1)(1-\quant)\quant^{N - n - 1}
    \right).    
\end{align*}
Since $(1-\quant)\quant^{N-n-1}$
is a single-peaked function 
(i.e., first increasing and then decreasing),
we finish the proof as desired. 
\end{proof}

\begin{lemma}[\citealp{ABB-22}]
\label{lem:alpha boundary}
For any $\alpha\in [0,1]$, $\val_1,\val_2\in\reals_+$ with $\val_1 \geq \val_2$,
$\quant\primed,\quant\doubleprimed\in[0,1]$ with 
$\quant\primed > \quant\doubleprimed$, and
$\alpha$-strongly regular distribution $\dist$ with CDF satisfying that 
$\cdf(\val_1) = \quant\primed, \cdf(\val_2) = \quant\doubleprimed$, we have 
\begin{align*}
1 - \cdf(\val) &\geq \quant\doubleprimed \cdot\Gamma_{\alpha}\left(\Gamma_{\alpha}^{-1}\left(\frac{\quant\primed}{\quant\doubleprimed}\right)\cdot \frac{\val-\val_2}{\val_1-\val_2}\right) & 
\val\in[\val_2,\val_1]\\
1 - \cdf(\val) &\leq \quant\doubleprimed \cdot\Gamma_{\alpha}\left(\Gamma_{\alpha}^{-1}\left(\frac{\quant\primed}{\quant\doubleprimed}\right)\cdot \frac{\val-\val_2}{\val_1-\val_2}\right) & \val\in[\val_1, \infty)
\end{align*}
where 
\begin{align*}
\Gamma_{\alpha}(v) = \begin{cases}
(1+(1-\alpha)v)^{-\frac{1}{1-\alpha}}& \alpha\in [0,1)\\
e^{-v} & \alpha = 1.
\end{cases}
\end{align*}
\end{lemma}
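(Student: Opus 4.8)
The plan is to reduce the statement to the behavior of the inverse hazard rate $\eta(\val)\triangleq\frac{1-\cdf(\val)}{\pdf(\val)}=\frac{1}{\hazard(\val)}$, exploiting that $\alpha$-strong regularity is exactly a one-sided Lipschitz condition on $\eta$ which holds with equality for the generalized Pareto distribution.

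First I would record the reformulation. Writing $\virtual(\val)=\val-\eta(\val)$, the $\alpha$-strong monotonicity of $\virtual$ is equivalent to $\eta(\val')-\eta(\val)\le(1-\alpha)(\val'-\val)$ for all $\val'\ge\val$ in the support, while a direct computation gives that for a generalized Pareto distribution $\eta$ is affine with slope exactly $1-\alpha$. Since $-\frac{\dd}{\dd\val}\ln\bigl(1-\cdf(\val)\bigr)=\hazard(\val)=\frac{1}{\eta(\val)}$, we have $\ln\frac{1-\cdf(\val_2)}{1-\cdf(\val)}=\int_{\val_2}^{\val}\frac{\dd u}{\eta(u)}$, so controlling $\eta$ controls the tail. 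Let $\hat\eta$ be the affine function of slope $1-\alpha$ pinned down by $\int_{\val_2}^{\val_1}\frac{\dd u}{\hat\eta(u)}=\int_{\val_2}^{\val_1}\frac{\dd u}{\eta(u)}=\ln\frac{1-\cdf(\val_2)}{1-\cdf(\val_1)}$; unwinding definitions, the generalized-Pareto interpolant $(1-\cdf(\val_2))\exp\bigl(-\int_{\val_2}^{\val}\frac{\dd u}{\hat\eta(u)}\bigr)$ is exactly the expression $\quant\doubleprimed\,\Gamma_\alpha\bigl(\Gamma_\alpha^{-1}(\quant\primed/\quant\doubleprimed)\cdot\frac{\val-\val_2}{\val_1-\val_2}\bigr)$ in the statement.

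The core of the argument is to compare $\eta$ with $\hat\eta$ on $[\val_2,\val_1]$. Put $\delta(u)=\eta(u)-\hat\eta(u)$. By the reformulation and the affinity of $\hat\eta$, $\delta$ is non-increasing on $[\val_2,\val_1]$, while by the choice of $\hat\eta$, $\int_{\val_2}^{\val_1}\frac{\delta(u)}{\eta(u)\hat\eta(u)}\,\dd u=\int_{\val_2}^{\val_1}\bigl(\frac{1}{\hat\eta(u)}-\frac{1}{\eta(u)}\bigr)\dd u=0$. Since $\eta,\hat\eta>0$, a non-increasing function whose integral against the positive weight $\frac{1}{\eta\hat\eta}$ vanishes is nonnegative on an initial sub-interval of $[\val_2,\val_1]$ and nonpositive afterwards; hence $\frac{1}{\eta}-\frac{1}{\hat\eta}$ is nonpositive then nonnegative, and combined with the fact that its total integral is $0$ this forces $\int_{\val_2}^{\val}\bigl(\frac{1}{\eta(u)}-\frac{1}{\hat\eta(u)}\bigr)\dd u\le0$ for every $\val\in[\val_2,\val_1]$. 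Exponentiating then gives $1-\cdf(\val)\ge(1-\cdf(\val_2))\exp\bigl(-\int_{\val_2}^{\val}\frac{\dd u}{\hat\eta(u)}\bigr)$, which is the first claimed inequality. For $\val\ge\val_1$ the one-sided Lipschitz property gives $\eta(\val)\le\eta(\val_1)+(1-\alpha)(\val-\val_1)$; since the crossing structure forces $\delta(\val_1)\le0$, i.e.\ $\eta(\val_1)\le\hat\eta(\val_1)$, and $\hat\eta$ has slope $1-\alpha$, we obtain $\eta(\val)\le\hat\eta(\val)$ and hence $\frac{1}{\eta(\val)}\ge\frac{1}{\hat\eta(\val)}$ for all $\val\ge\val_1$; integrating from $\val_1$ reverses the earlier inequality and yields the second claim. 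The boundary case $\alpha=1$ is the same computation with $\Gamma_1(\val)=e^{-\val}$ and $\hat\eta$ constant.

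I expect the two places needing care to be: (i) making the inverse-hazard-rate reformulation and the identity $1-\cdf(\val)=(1-\cdf(\val_2))\exp(-\int_{\val_2}^{\val}\dd u/\eta(u))$ rigorous at points where $\pdf$ vanishes or is discontinuous (via one-sided derivatives of $\virtual$, or a limiting argument through smooth approximants); and (ii) the bookkeeping identifying the slope-$(1-\alpha)$ affine $\hat\eta$ matching the anchor data with the explicit $\Gamma_\alpha$ expression in the statement, which comes down to solving $\hat\eta(\val_1)/\hat\eta(\val_2)=\bigl((1-\cdf(\val_2))/(1-\cdf(\val_1))\bigr)^{1-\alpha}$ for the intercept of $\hat\eta$ and recognizing $\exp(-\int\dd u/\hat\eta)$ as $\Gamma_\alpha$ evaluated at an affine reparametrization of $\val$.
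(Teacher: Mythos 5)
The paper does not actually prove this lemma; it is imported verbatim from \citet{ABB-22}, so there is no internal proof to compare against. Your argument is self-contained and, modulo one cosmetic mismatch noted below, it is a correct proof of the intended statement. The reformulation of $\alpha$-strong regularity as the one-sided Lipschitz condition $\eta(\val')-\eta(\val)\le(1-\alpha)(\val'-\val)$ on the inverse hazard rate, the integral identity $1-\cdf(\val)=(1-\cdf(\val_2))\exp(-\int_{\val_2}^{\val}\dd u/\eta(u))$, the choice of the affine comparison function $\hat\eta$ of slope $1-\alpha$ anchored by matching $\int_{\val_2}^{\val_1}\dd u/\hat\eta=\int_{\val_2}^{\val_1}\dd u/\eta$, and the single-crossing argument for $\delta=\eta-\hat\eta$ (non-increasing, weighted integral zero, hence sign change from $\ge0$ to $\le0$, hence $G(\val)=\int_{\val_2}^{\val}(1/\eta-1/\hat\eta)\le0$ for all $\val\in[\val_2,\val_1]$ and $\eta(\val_1)\le\hat\eta(\val_1)$, which propagates to $\eta\le\hat\eta$ on $[\val_1,\infty)$) are all sound and tie together cleanly. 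This is essentially the standard extremal-distribution argument for $\alpha$-strong regularity, which is what one would expect to find in the cited source.

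One small but worth-flagging point: as printed, the lemma statement itself appears to contain a typo. At $\val=\val_2$ the claimed lower bound reads $q''\cdot\Gamma_\alpha(0)=q''$, whereas $1-\cdf(\val_2)=1-q''$; and the argument of $\Gamma_\alpha^{-1}$ should involve $(1-q')/(1-q'')$, not $q'/q''$ (the paper's own application in the proof of \Cref{lem:worst alpha} uses $\Gamma_\alpha^{-1}\bigl(\tfrac{1-q'}{1-q''}\bigr)$ when defining $\beta$, which is inconsistent with the displayed formula in the lemma). Your derivation in fact proves the corrected inequality $1-\cdf(\val)\ge(1-q'')\cdot\Gamma_\alpha\bigl(\Gamma_\alpha^{-1}(\tfrac{1-q'}{1-q''})\cdot\tfrac{\val-\val_2}{\val_1-\val_2}\bigr)$ (and the reverse for $\val\ge\val_1$), which is what the surrounding arguments in the paper actually use; your sentence claiming the interpolant ``is exactly the expression in the statement'' is the only place you conflate $1-q''$ with $q''$, and should just be rephrased to say the interpolant is the generalized-Pareto tail matching the two anchor points. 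The rest is fine, including your acknowledgement that the identity $1-\cdf=\exp(-\int h)$ needs a mild regularity hypothesis (the paper already assumes $\cdf$ continuous and strictly increasing, so this is benign).
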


Below we formally prove \Cref{lem:worst alpha} and then \Cref{thm:welfare}.

\begin{proof}[Proof of \cref{lem:worst alpha}]
Note that for any distribution $\dist$, we have that 
% $\dist_{2:N} = \int_0^1 \contribution_{2:N}(\quant) \cdot v_{\dist}(\quant) \dd \quant$ and $\dist_{1:n} = \int_0^1 \contribution_{1:n}(\quant) \cdot v_{\dist}(\quant) \dd \quant$.
\begin{align*}
\dist_{2:N} = \int_0^1 \contribution_{2:N}(\quant) \cdot \cdf^{-1}(\quant) \dd \quant
\text{ and } \dist_{1:n} = \int_0^1 \contribution_{1:n}(\quant) \cdot\cdf^{-1}(\quant)\dd \quant.
\end{align*}
where $\contribution_{k:n}(\quant)$ is defined in \Cref{lem:three intervals}.
In addition, \cref{lem:three intervals} guarantees that
there exists $0\leq \quant\primed < \quant\doubleprimed\leq 1$ 
and corresponding value $\val_1>\val_2$ such that 
$\contribution_{2:N}(\quant) \leq \contribution_{1:n}(\quant)$ 
for $\quant\in [0,\quant\primed]\cup[\quant\doubleprimed,1]$
and $\contribution_{2:N}(\quant) \geq \contribution_{1:n}(\quant)$  for 
$\quant\in [\quant\primed,\quant\doubleprimed]$.
To minimize the difference $\dist_{2:N} - \dist_{1:n}$, 
we need to minimize $\cdf^{-1}(\quant)$ for 
$\quant\in [\quant\primed,\quant\doubleprimed]$
and maximize $\cdf^{-1}(\quant)$ for
$\quant\in [0,\quant\primed]\cup[\quant\doubleprimed,1]$.
By \cref{lem:alpha boundary}, 
the extreme $\alpha$-strongly regular distribution $\tilde{\dist}$
that crosses $(\val_1,\quant\primed)$ and $(\val_2,\quant\doubleprimed)$ satisfies 
\begin{align*}
1- {\tilde{\dist}}(\val) 
&= \quant\doubleprimed \cdot\Gamma_{\alpha}\left(\Gamma_{\alpha}^{-1}\left(\frac{\quant\primed}{\quant\doubleprimed}\right)\cdot \frac{\val-\val_2}{\val_1-\val_2}\right) \\
&= \quant\doubleprimed \cdot (1+\beta (\val-\val_2))^{-\frac{1}{1-\alpha}}
\end{align*}
where $\beta = \Gamma_{\alpha}^{-1}\left(\frac{1-\quant\primed}{1-\quant\doubleprimed}\right)\cdot \frac{1-\alpha}{\val_1-\val_2}$.
Let $\val_0$ be the constant such that 
$(1-\quant\doubleprimed) \cdot (1+\beta (\val_0-\val_2))^{-\frac{1}{1-\alpha}} = 1$.
By shifting the distribution $\tilde{\dist}$ down by $\val_0$, 
and scaling it by a factor $(1-\quant\doubleprimed)^{\frac{1}{1-\alpha}}\cdot \beta$, 
the resulting distribution is $\hat{\dist}$. 
Moreover, it is easy to verify that $\hat{\dist}_{2:N} \geq \hat{\dist}_{1:n}$
if and only if $\tilde{\dist}_{2:N} \geq \tilde{\dist}_{1:n}$, 
where the latter implies that $\dist_{2:N} \geq \dist_{1:n}$ by construction. 
\end{proof}

\begin{proof}[Proof of \cref{thm:welfare}]
By \cref{lem:worst alpha}, for $\alpha\in(0,1)$, it is without loss to consider $\alpha$-strongly regular distribution $\hat{\dist}$ with CDF 
$1-\hat{\dist}(\val) = (1+\val)^{-\frac{1}{1-\alpha}}$, 
where the corresponding virtual value is $\alpha \val - (1-\alpha)$.
Since the expected revenue from the second-price auction is the expected virtual value, 
we have $\hat{\dist}_{2:N} = \alpha \hat{\dist}_{1:N} - (1-\alpha)$.
By the definition of constant $\constna$ (\Cref{def:alpha-regular:vcg:tight competition complexity}), we have $\hat{\dist}_{2:\constna} \geq \hat{\dist}_{1:n}$. 
The same argument applies for MHR distributions when $\alpha=1$.
\end{proof}

\subsection{Tightness against The Bayesian Optimal Revenue}
In this section, we show that when the benchmark is switched from welfare to the optimal revenue, the competition complexity bound derived in \Cref{thm:welfare} remains necessary, even if values are i.i.d.~across different items.

\begin{restatable}[Competition complexity lower bound of {\vcg} against {\rev} for $\alpha$-strongly regular items]{theorem}{thmTightnessAgainstRevenue}
\label{thm:tightness_against_revenue}
For any $\alpha \in (0, 1]$, $n\geq 1$ bidders and $m\geq 1$ i.i.d.~$\alpha$-strongly regular items, the competition complexity of the VCG auction against the revenue benchmark is at least $\constna$, i.e., there exists an instance such that $\vcg_{n+\constna-1} < \rev_n$.
% For any $\alpha \in (0, 1]$ and $n\geq 1$, there exists a distribution $\dist$ that is i.i.d.~across bidders and items where the marginal distributions are $\alpha$-strongly regular
% such that the competition complexity of $\srev$
% with respect to the optimal revenue
% is at least $c(n;\alpha)$ for sufficiently large~$m$.
% That is, there exists $M\geq 1$ such that for any $m\geq M$, $\srev_{c(n;\alpha)-1}(\dist) < R_n(\dist)$.
\end{restatable}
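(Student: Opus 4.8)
\textbf{Proof proposal for \Cref{thm:tightness_against_revenue}.}

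The plan is to take the instance witnessing tightness in \Cref{thm:welfare} — namely $m$ i.i.d.\ items each drawn from the generalized Pareto distribution $\hat\dist$ with $1-\hat\cdf(\val)=(1+\val)^{-1/(1-\alpha)}$ (or the exponential distribution when $\alpha=1$) — and show that as $m\to\infty$ the Bayesian optimal revenue $\rev_n$ with $n$ bidders converges to the welfare benchmark $\wel_n$, while the VCG revenue with $n+\constna-1$ bidders stays strictly below $\wel_n$ by an amount that does not vanish. Since $\vcg$ on additive bidders runs a separate second-price auction per item, $\vcg_{n+\constna-1}$ equals $m\cdot\hat\dist_{2:n+\constna-1}$ and $\wel_n$ equals $m\cdot\hat\dist_{1:n}$; by the definition of $\constna$ (\Cref{def:alpha-regular:vcg:tight competition complexity}) we have $\hat\dist_{2:n+\constna-1}<\hat\dist_{1:n}$ strictly, so there is a fixed constant $\delta>0$ with $\vcg_{n+\constna-1}\le (1-\delta)\,\wel_n$ for every $m$. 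It therefore suffices to prove that $\rev_n \ge (1-o(1))\,\wel_n$ as $m\to\infty$; for $m$ large enough the gap closes and $\vcg_{n+\constna-1}<\rev_n$.

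The core step is the lemma ``$\rev_n$ converges to $\wel_n$ for the generalized Pareto / exponential instance,'' which is exactly \Cref{lem:rev_is_wel} referenced (but not shown) in the introduction. I would prove this by exhibiting an explicit mechanism whose revenue approaches the full welfare. The natural candidate is to sell the grand bundle at a posted price (to each of the $n$ symmetric bidders, or a second-price-style bundle auction): with $m$ i.i.d.\ items of finite positive mean $\mu$ and finite variance (both finite here since $1/(1-\alpha)>1$ guarantees a finite mean for $\alpha\in(0,1)$, and trivially for the exponential), a single bidder's value for the bundle concentrates around $m\mu$ by the law of large numbers / Chebyshev, so a posted price of $(1-\eps)m\mu$ is accepted with probability $\to 1$, extracting $(1-\eps)m\mu = (1-\eps)\,\wel_1$ in the limit. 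For $n\ge 2$ bidders one sells the bundle via a second-price auction with reserve $(1-\eps)m\mu$; all bidders clear the reserve w.h.p., so the revenue is at least the reserve, again $(1-o(1))\,\wel_n$ — here one uses $\wel_n = m\cdot\hat\dist_{1:n}$ and the fact that, by concentration, $\hat\dist_{1:n}\to\mu$ as $m\to\infty$ after normalizing, i.e.\ $\wel_n/(m\mu)\to 1$. Making the $\eps$ and the rate quantitative in $m$ is the routine part; the conceptual content is that concentration of the additive bundle value forces revenue and welfare together.

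The main obstacle is handling the heavy tail when $\alpha$ is small: the generalized Pareto distribution with exponent $1/(1-\alpha)$ has a finite mean only because that exponent exceeds $1$, but its variance is infinite once $\alpha\le 1/2$, so a plain Chebyshev argument on the bundle value does not apply. I would route around this with a truncation argument: write each item's value as a truncated part (bounded by some threshold $T_m$ growing slowly with $m$) plus an excess part; the truncated parts have finite variance and concentrate, contributing $(1-o(1))m\mu$ to the bundle value with high probability, while the total contribution of the excess parts to $\wel_n$ is $o(m\mu)$ by a first-moment/tail bound (the tail $(1+\val)^{-1/(1-\alpha)}$ is integrable). This shows both that $\wel_n=(1+o(1))m\mu$ and that a posted bundle price slightly below $(1-\eps)m\mu$ is accepted w.h.p., completing the lemma. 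A second, minor point to be careful about: the theorem claims the bound holds for \emph{every} $m\ge 1$, whereas the argument naturally produces the separation only for $m$ sufficiently large; I would phrase it as ``there exists an instance (with $m$ large enough) such that $\vcg_{n+\constna-1}<\rev_n$,'' matching the ``as $m\to\infty$'' qualifier already used in the statement of \Cref{thm:intro:vcg:alpha-regular}. Finally, for $\alpha=1$ the exponential case is strictly easier (finite variance, $\hat\dist_{1:n}=H_n$ and $\hat\dist_{2:n+c}=H_{n+c}$ in closed form), so I would dispatch it with the same posted-bundle-price argument and a direct Chernoff bound.
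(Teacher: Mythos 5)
Your overall scaffolding matches the paper's: pick the generalized Pareto (or exponential) instance from \Cref{thm:welfare}, show $\vcg_{n+\constna-1}/m$ stays bounded away from $\wel_n/m$, and show $\rev_n/m\to\wel_n/m$ as $m\to\infty$. The gap is in how you prove the last step (\Cref{lem:rev_is_wel}), and it is a genuine error, not just a missing detail.

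Your claim that ``$\wel_n/(m\mu)\to 1$'' is false for $n\ge 2$. The welfare benchmark is $\wel_n = m\cdot \hat\dist_{1:n}$, where $\hat\dist_{1:n}$ is the expected largest of $n$ i.i.d.\ draws from the \emph{single-item} distribution $\hat\dist$. This quantity is a fixed constant that does not depend on $m$; nothing concentrates in it as $m\to\infty$. For $n\ge 2$ it is strictly larger than the mean $\mu = \hat\dist_{1:1}$, so $\wel_n/(m\mu) = \hat\dist_{1:n}/\mu$ is a constant strictly greater than $1$. What \emph{does} concentrate around $m\mu$ is each bidder's bundle value $\sum_j \val_{ij}$, which is why your bundle auction's revenue converges to roughly $m\mu$ — but that is strictly below $\wel_n$ when $n\ge 2$. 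This is not a fixable technicality: the grand-bundle auction allocates every item to a single bidder, whereas welfare requires allocating each item to its highest bidder. Indeed the paper proves (\Cref{bspa_multiple_n}) that for $n\ge 2$ the bundle auction is \emph{exponentially} far from $\rev_n$ precisely because of this misallocation, so the mechanism you are using to lower-bound $\rev_n$ is the wrong one. Worse, since $\hat\dist_{2:N}$ grows without bound in $N$, for large $\constna$ your lower bound $\approx m\mu$ on $\rev_n$ can fall \emph{below} $\vcg_{n+\constna-1}=m\hat\dist_{2:n+\constna-1}$, so the argument does not even produce the desired separation.

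The paper instead proves \Cref{lem:rev_is_wel} with a two-part tariff: charge each bidder an up-front entry fee of $m(\tfrac{z_n}{n}-\eps)$ where $z_n = \hat\dist_{1:n}-\hat\dist_{2:n}$, and then let paying bidders buy each item at the second-highest price. Because the per-bidder expected surplus from the per-item second-price auctions is exactly $\tfrac{z_n}{n}$ per item, the weak law of large numbers (which needs only a finite mean, so no truncation is required even when the variance is infinite for $\alpha\le 1/2$) shows that all $n$ bidders accept the entry fee with high probability. Conditional on that, the allocation \emph{is} the efficient one, each bidder's residual utility is $O(m\eps)$, and so revenue is $\wel_n - O(nm\eps)$. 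This is the mechanism that lets $\rev_n$ approach $\wel_n$; a posted bundle price cannot.

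Two smaller notes. First, your observation that the statement really concerns $m$ sufficiently large is correct and matches the paper's ``as $m\to\infty$'' phrasing in \Cref{thm:intro:vcg:alpha-regular}. Second, your truncation discussion is unnecessary once you use Khinchin's weak law as the paper does, since finite mean (guaranteed by $1/(1-\alpha)>1$) suffices; the finite-variance concern only arises because you reached for Chebyshev.
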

We will first show that when the number of items goes to infinity, the optimal revenue actually converges to the welfare (see \cref{lem:rev_is_wel}). 
In this case, if the additional number of buyers is less than $\constna$, proof of \cref{thm:welfare} implies that for i.i.d.~generalized Pareto distributions, the revenue from each item is strictly less than the average welfare. 
Combining the observations, with less than $\constna$ additional buyers, the expected revenue from selling separately is strictly less than the optimal revenue when the number of items is sufficiently large. 
% The formal proof is provided in \cref{apx:proof_wel}.
% \yfcomment{Yingkai, could you include a formal proof for Theorem 3.6 in the appendix?}

\begin{proof}[Proof of \cref{thm:tightness_against_revenue}]
Given any $\alpha\in(0,1)$, let $\dist^{\otimes m}$ be the product distribution with marginal distribution~$\dist$ on each of $m$ coordinates where $\dist$ is the generalized Pareto distribution with CDF $\cdf(v) = 1-(1+\val)^{-\frac{1}{1-\alpha}}$.
Let~$\dist$ be the exponential distribution if $\alpha=1$. 
By \cref{thm:welfare}, for any $\alpha\in(0,1]$, there exists $\epsilon > 0$ such that for any $m\geq 1$, we have 
\begin{align*}
\frac{1}{m}\cdot\vcg_{n+\constna-1} < \frac{1}{m}\cdot\wel_n - \epsilon.
\end{align*}
Moreover, \cref{lem:rev_is_wel} shows that 
\begin{align*}
\lim_{m\rightarrow \infty} \frac{1}{m}\rev_n = 
    \dist_{1:n} = \frac{1}{m}\wel_n,
\end{align*}
which further implies that for any $\hat{\epsilon}>0$, there exists $\hat{m}\geq 1$ such that
\begin{align*}
\frac{1}{\hat{m}}\rev_n > 
\frac{1}{\hat{m}}\wel_n - \hat{\epsilon}. 
\end{align*}
Setting $\hat{\epsilon} = \frac{\epsilon}{2}$, 
we have 
\begin{align*}
\frac{1}{\hat{m}}\cdot\vcg_{n+\constna-1} < \frac{1}{\hat{m}}\cdot\rev_n - \frac{\epsilon}{2},
\end{align*}
which implies the theorem.
\end{proof}

\begin{lemma}\label{lem:rev_is_wel}
For any $\alpha\in(0, 1)$ and $n$ bidders with 
$\alpha$-strongly regular distribution $\dist$ such that 
$1-\cdf(\val) = (1 + \val)^{-\frac{1}{1-\alpha}}$, or $1-\cdf(\val) = e^{-\val}$ for $\alpha=1$, 
$\lim_{m\rightarrow \infty} \frac{1}{m}\rev_n(\dist^{\otimes m}) = 
\dist_{1:n}$.
\end{lemma}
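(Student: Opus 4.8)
The plan is to show that $\frac{1}{m}\rev_n(\dist^{\otimes m})$ is squeezed between $\dist_{1:n}$ and $\dist_{1:n}-o(1)$. The upper direction is immediate: $\rev_n\le\wel_n=\sum_{j\in[m]}\E[\max_{i\in[n]}\val_{ij}]=m\cdot\dist_{1:n}$ (as already used in the proof of \Cref{thm:tightness_against_revenue}), and $\dist_{1:n}<\infty$ because for $\alpha>0$ the generalized Pareto distribution, and likewise the exponential distribution, has finite mean. Hence $\limsup_{m}\tfrac1m\rev_n\le\dist_{1:n}$, and all the work is in the matching lower bound. Note the lower-bound argument below uses nothing about $\dist$ beyond $\E_{\val\sim\dist}[\val]<\infty$.

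Fix $\eps>0$. I will exhibit a dominant-strategy incentive-compatible (DSIC), individually rational (IR) mechanism $\mech$ with $\E[\text{revenue of }\mech]\ge(\dist_{1:n}-n\eps)\,m-o(m)$, which lower-bounds $\rev_n$. In $\mech$: bidders report their values; item $j$ is tentatively assigned to its highest reporter; then each bidder $i$ is offered, take-it-or-leave-it, her tentatively-won set $S_i$ at price $\sum_{j\in S_i}\val_{(2),j}+c\cdot\indicator{\card{S_i}\ge 1}$, where $\val_{(2),j}$ is the second-highest report on item $j$ and $c=(\mu_g-\eps)\,m$ with $\mu_g\triangleq\tfrac1n(\dist_{1:n}-\dist_{2:n})$ (take $\dist_{2:1}\triangleq0$ for $n=1$); a rejected set is left unallocated. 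For any fixed reports of the others, bidder $i$'s surplus from accepting, namely $\val_i(S_i)-\sum_{j\in S_i}\val_{(2),j}-c\cdot\indicator{\card{S_i}\ge1}$, has its report-dependent part $\val_i(S_i)-\sum_{j\in S_i}\val_{(2),j}$ maximized item-by-item by truthful reporting exactly as in VCG, while the additive fee $c\cdot\indicator{\card{S_i}\ge1}$ does not depend on $i$'s report (winning an extra item only happens by overbidding, which is never profitable); hence reporting truthfully and then accepting iff the resulting surplus is nonnegative is a dominant strategy, and opting out always secures utility $0$, so $\mech$ is DSIC and IR.

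It remains to compute $\tfrac1m\E[\text{revenue of }\mech]$. Let $U_i=\sum_{j\in[m]}\plus{\val_{ij}-\max_{i'\neq i}\val_{i'j}}$ be bidder $i$'s VCG surplus under truthful play; this is a sum of $m$ i.i.d.\ random variables (one per item), and the pointwise identity $\sum_{i\in[n]}\plus{\val_{ij}-\max_{i'\neq i}\val_{i'j}}=\max_i\val_{ij}-\val_{(2),j}$ together with symmetry gives each of them common mean $\mu_g$ (so $n\mu_g=\dist_{1:n}-\dist_{2:n}$). By the weak law of large numbers — which needs only a finite mean, hence applies even when $\dist$ has infinite variance (i.e.\ $\alpha<\tfrac12$) — $\Pr[U_i<c]=\Pr[\tfrac1m U_i<\mu_g-\eps]\to 0$; combined with $\Pr[\card{S_i}=0]=(1-\tfrac1n)^m\to 0$, the event $A$ that every bidder wins at least one item and accepts satisfies $\Pr[A]\to1$. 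On $A$ every item's winner accepts, so $\mech$'s revenue equals $\sum_{j\in[m]}\val_{(2),j}+nc$, and therefore
\[
\E[\text{revenue of }\mech]\ \ge\ \E\Big[\indicator{A}\big(\textstyle\sum_{j}\val_{(2),j}+nc\big)\Big]\ \ge\ m\cdot\dist_{2:n}\ -\ \E\Big[\indicator{A^c}\textstyle\sum_{j}\max_i\val_{ij}\Big]\ +\ nc\cdot\Pr[A].
\]
A truncation bound kills the middle term: for any $K$, $\indicator{A^c}\sum_j\max_i\val_{ij}\le Km\,\indicator{A^c}+\sum_j\plus{\max_i\val_{ij}-K}$, so $\tfrac1m\E[\,\cdot\,]\le K\Pr[A^c]+\E[\plus{\max_i\val_i-K}]$; sending $m\to\infty$ and then $K\to\infty$ (using $\dist_{1:n}<\infty$ and dominated convergence) shows this term is $o(m)$. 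Dividing the displayed inequality by $m$ and taking $m\to\infty$ gives $\liminf_m\tfrac1m\rev_n\ge\dist_{2:n}+n(\mu_g-\eps)=\dist_{1:n}-n\eps$; letting $\eps\downarrow0$ and combining with the upper bound yields $\lim_{m}\tfrac1m\rev_n(\dist^{\otimes m})=\dist_{1:n}$.

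The main obstacle is really two coupled points, both addressed above. First, one must recover essentially the full welfare even though plain VCG recovers only $m\cdot\dist_{2:n}<m\cdot\dist_{1:n}$: this forces a per-bidder entry fee tuned just below each bidder's expected VCG surplus, and the nontrivial part is keeping the mechanism incentive compatible — the take-it-or-leave-it ``won bundle at VCG price plus a fixed fee'' format works precisely because adding a report-independent additive charge does not perturb the VCG truthfulness argument. Second, the concentration step cannot use Chebyshev's inequality when $\alpha<\tfrac12$ (infinite variance), so one falls back on the weak law of large numbers together with the truncation bound to control the $o(m)$ contribution of the rare event $A^c$; this is also why the statement is a limit rather than a quantitative rate.
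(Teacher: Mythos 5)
Your proof is correct and takes essentially the same approach as the paper: a two-part tariff mechanism whose per-bidder fee is set slightly below the bidder's expected VCG surplus $m(\dist_{1:n}-\dist_{2:n})/n$, with the weak law of large numbers ensuring all bidders accept with probability tending to one. The only differences are technical polish: you package the mechanism as take-it-or-leave-it bundle offers so that the DSIC argument is explicit, and you control the rare-rejection tail via a truncation bound on $\sum_j\val_{(2),j}$ rather than conditioning on a second welfare-concentration event as the paper does; both routes are equally valid.
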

\begin{proof}
Note that the mean of distribution $F$ is $\frac{1}{\alpha}-1$ or $\alpha\in (0,1)$ and the mean is $1$ for $\alpha=1$.
Note that the mean is always finite. 
Let $z_n=\dist_{1:n}-\dist_{2:n}$. 
For any constant $\epsilon>0$, 
consider the two-part tariff mechanism where each bidder is charged an entry fee of 
\begin{align*}
E=m\rbr{\frac{z_n}{n}-\epsilon}.
\end{align*}
If the bidder pays the entry fee, the bidder can purchase any item at the second highest price. 

Note that due to the i.i.d.\ assumption, the expected utility of each bidder $i$ on item $j$ without paying the entry fee is 
\begin{align*}
\expect{\max\cbr{0,\val_{ij}-\max_{i'\neq i}\val_{i'j}}} 
= \frac{z_n}{n}.
\end{align*}
By the weak law of large numbers, for any $\epsilon>0$, we have
\begin{align}\label{eq:wlln}
\lim_{m\rightarrow \infty} 
\prob{\abs{\frac{1}{m}\sum\nolimits_{j\in[m]}\max\cbr{0,\val_{ij}-\max_{i'\neq i}\val_{i'j}} - \frac{z_n}{n}} \geq \epsilon} 
= 0. 
\end{align}
Let $\event_1$ be the event that 
\begin{align*}
\abs{\frac{1}{m}\sum\nolimits_{j\in[m]} \max\cbr{0,\val_{ij}-\max_{i'\neq i}\val_{i'j}} - \frac{z_n}{n}} < \epsilon
\end{align*}
for all bidders $i$. 
By the union bound and \Cref{eq:wlln}, for any $\delta>0$, there exists $M_1>0$ such that for any $m\geq M_1$, $\prob{\event_1} \geq 1- \delta$. 

Similarly, let $\event_2$ be the event that 
\begin{align*}
\abs{\frac{1}{m}\sum\nolimits_{j\in[m]} \max_{i\in[n]}\val_{ij} - \dist_{1:n}} < \epsilon.
\end{align*}
Again by the weak law of large numbers, for any $\delta>0$, there exists $M_2>0$ such that for any $m\geq M_2$, $\prob{\event_2} \geq 1- \delta$. 
By the union bound, for any $m\geq \max\cbr{M_1,M_2}$, $\prob{\event_1\cap\event_2}\geq 1-2\delta$.

Conditional on event $\event_1$, all bidders will pay the entry fee and hence the allocation is efficient. Moreover, the utility of each bidder is at most $m\epsilon$. 
Conditional on event $\event_2$, the welfare from efficient allocation is at least $m(F_{1:n}-\epsilon)$. 
Therefore, the expected revenue is 
\begin{align*}
\rev_n(\dist^{\otimes m})
&\geq \prob{\event_1\cap\event_2} \cdot \rev_n(\dist^{\otimes m}\given \event_1\cap\event_2)\\
& \geq (1-2\delta) \cdot \rbr{m(F_{1:n}-\epsilon) - nm\epsilon}.
\end{align*}
Therefore, for any $m\geq \max\cbr{M_1,M_2}$, we have 
\begin{align*}
\frac{1}{m}\rev_n( \dist^{\otimes m}) 
\geq (1-2\delta) \cdot \rbr{F_{1:n} - (n+1)\epsilon}.
\end{align*}
Since the inequality holds for any $\delta>0$ and $\epsilon>0$, by taking $\delta,\epsilon\to 0$ and $m\to \infty$, 
we have 
\begin{align*}
    \lim_{m\rightarrow \infty} \frac{1}{m}\rev_n(\dist^{\otimes m}) = 
    \dist_{1:n}. 
\end{align*}
which finishes the proof as desired.
\end{proof}

\subsection{(In)approximation of Pure Bundling}
\label{sub:inapproximation_bundling}

In this section, we show that, in contrast to the VCG auction, which has a competition complexity $\Theta(n)$ for any $n\geq 1$ regardless of the number of items, the competition complexity of selling all items as a grand bundle is $\Omega(\exp(m))$ for any $n\geq 2$, even for uniform distributions.\footnote{In fact, it is easy to show that, using the same steps, this competition complexity lower bound extends to arbitrary non-degenerate distributions with finite moments of all orders.}
Finally, we complement this impossibility result by showing that when $n=1$, the competition complexity of selling all items as a grand bundle is constant for independent MHR distributions with respect to the welfare benchmark. 
% Omitted Proofs in this section can be found in \Cref{app:sec:cc_bundle}. 

\begin{restatable}[Competition complexity lower bound of {\bspa} against {\rev} for MHR items and multiple bidders]{theorem}{bspaMultipleN}
\label{bspa_multiple_n}
For $n\geq 2$ bidders and $m\geq 1$ i.i.d.\ items with values drawn from uniform distribution $U[0,1]$, the competition complexity against the revenue benchmark of $\brev$ or $\bspa$ is $\Omega(\exp(m))$, i.e., $\bspa_{n + o(\exp(m))} < \OPT_n$ and $\brev_{n + o(\exp(m))} < \OPT_n$.
\end{restatable}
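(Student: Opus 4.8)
The plan is to sandwich $\OPT_n$ and $\bspa_N$ (with $N$ the total number of bidders): lower-bound $\OPT_n$ by the welfare it essentially attains, and upper-bound $\bspa_N$ — and the larger quantity $\brev_N$ — by the expected \emph{largest} bundle value among $N$ bidders, which concentrates around $m/2$ and therefore only reaches $\approx \tfrac m2 + O(\sqrt{m\ln N})$. Because $n\ge 2$ makes the welfare-per-item rate $\tfrac{n}{n+1}$ exceed $\tfrac12$ by a constant, closing that gap forces $\ln N = \Omega(m)$, i.e.\ $N = \exp(\Omega(m))$ additional bidders. Since $\bspa\le\brev$, it suffices to carry out the argument for $\brev$.

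\emph{Step 1 (revenue lower bound).} I would first show $\OPT_n(\dist^{\otimes m}) \ge (1-o_m(1))\,\wel_n = (1-o_m(1))\,m\tfrac{n}{n+1}$ for $\dist = U[0,1]$. This is precisely the two-part-tariff argument from the proof of \Cref{lem:rev_is_wel}: charge each bidder an entry fee slightly below $\tfrac{m}{n(n+1)}$ (the per-bidder expected surplus in the VCG auction, which concentrates by the weak law of large numbers since $U[0,1]$ has finite moments), after which any bidder who paid may buy each item at the second-highest bid. With high probability all bidders enter, the allocation is efficient, and the total payment approaches $m\cdot\dist_{1:n} = m\tfrac{n}{n+1}$. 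Although \Cref{lem:rev_is_wel} is stated only for generalized Pareto / exponential distributions, its proof uses nothing beyond finiteness of the mean, so it transfers verbatim to $U[0,1]$.

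\emph{Step 2 (auction upper bound) and Step 3 (combine).} In any incentive-compatible, individually-rational auction that sells a single good — the bundle — to $N$ i.i.d.\ bidders, the total payment is at most the winner's value; hence $\bspa_N(\dist^{\otimes m}) \le \brev_N(\dist^{\otimes m}) \le \E[X_{(1)}]$, where $X_i = \sum_{j\in[m]}\val_{ij} \in [0,m]$ is bidder $i$'s bundle value and $X_{(1)} = \max_{i\in[N]} X_i$. Since $\E[X_i] = m/2$, Hoeffding's inequality gives $\prob{X_i \ge m/2 + t} \le e^{-2t^2/m}$, so $\prob{X_{(1)} \ge m/2 + t} \le \min\{1, N e^{-2t^2/m}\}$; integrating this tail (splitting at $t_0 = \sqrt{(m\ln N)/2}$ and using $X_{(1)} \le m$) yields $\E[X_{(1)}] \le \tfrac m2 + 2\sqrt{m\ln N}$ once $N$ exceeds an absolute constant. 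Now fix $n\ge 2$ and set $c = c(n) := \tfrac{(n-1)^2}{128(n+1)^2} > 0$. For $m$ large and any $N \le n + e^{cm}$ we have $\ln N \le 2cm$, so the bound gives $\bspa_N \le \brev_N \le \tfrac m2 + 2\sqrt{2c}\,m = \bigl(\tfrac12 + \tfrac{n-1}{4(n+1)}\bigr)m =: \theta_n m$ with $\theta_n < \tfrac{n}{n+1}$, while Step 1 gives $\OPT_n \ge (1-o_m(1))\tfrac{n}{n+1}m > \theta_n m$ for $m$ large. Hence $\bspa_N, \brev_N < \OPT_n$ for every $N \le n + e^{cm}$, i.e.\ the competition complexity of each against $\OPT_n$ exceeds $e^{cm} = \exp(\Omega(m))$ — the precise reading of the statement being: there is $c>0$ with $\bspa_{n+\lceil e^{cm}\rceil}(\dist^{\otimes m}) < \OPT_n(\dist^{\otimes m})$ (and likewise for $\brev$) for all sufficiently large $m$.

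The delicate point is the quantitative estimate in Step 2: the bundle-max bound must stay strictly below the welfare level $\tfrac{n}{n+1}m$, which is exactly what caps $\ln N / m$ at a constant and therefore makes $N$ exponential in $m$. This is also where the hypothesis $n \ge 2$ is used — for $n=1$ the welfare rate is $\tfrac12$, no gap remains, and (consistently with the positive results elsewhere in the paper) the competition complexity is small. Everything else is routine: Step 1 is a citation/light adaptation of \Cref{lem:rev_is_wel}, the combine step is bookkeeping, and the $\brev$ case is inherited from $\bspa \le \brev$.
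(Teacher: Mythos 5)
Your proposal is correct and takes essentially the same approach as the paper's proof: both lower-bound $\OPT_n$ via (an adaptation of) \Cref{lem:rev_is_wel}, and both upper-bound $\bspa_N \le \brev_N$ by the expected maximum bundle value among $N$ bidders, which Hoeffding's inequality concentrates near $m/2$ unless $N = \exp(\Omega(m))$, exploiting the gap between $\dist_{1:n} = \tfrac{n}{n+1}$ and $\dist_{1:1} = \tfrac{1}{2}$ for $n \ge 2$. The only cosmetic difference is that you integrate the tail explicitly to obtain $\E[X_{(1)}] \le m/2 + 2\sqrt{m\ln N}$, whereas the paper fixes a small $\epsilon>0$ and uses a crude two-case bound conditioning on whether any bidder's bundle value exceeds $(\tfrac12+\epsilon)m$.
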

\begin{proof}
First note that \cref{lem:rev_is_wel} implies that the optimal revenue converges to the optimal welfare for $\alpha$-strongly regular distributions for any $\alpha\in(0,1)$. 
Uniform distributions are $\alpha$-strongly regular and hence the optimal revenue in this case converges to the optimal welfare when $m$ is sufficiently large. 
That is, 
\begin{align*}
    \lim_{m\rightarrow \infty} \frac{1}{m}\rev_n(\dist^{\otimes m}) = 
    \dist_{1:n}. 
\end{align*}
For any bidder $i\in[n]$, by Hoeffding's inequality, we have 
\begin{align*}
\prob{\sum\nolimits_{j\in[m]}\val_{ij} \geq \rbr{\dist_{1:1}+\epsilon}\cdot m} \leq \exp(-2m\epsilon^2). 
\end{align*}
Let $\event$ be the event that 
\begin{align*}
\sum\nolimits_{j\in[m]}\val_{ij} \geq \rbr{\dist_{1:1}+\epsilon}\cdot m
\end{align*}
for all bidders $i\in[n]$. 
By the union bound, with a total of $N$ bidders, we have $\prob{\event}\geq 1-N\cdot \exp(-2m\epsilon^2).$
Therefore, the expected revenue from pure bundling is at most the welfare from pure bundling, where the latter is upper bounded by
\begin{align*}
(1-N\cdot \exp(-2m\epsilon^2)) \cdot \rbr{\dist_{1:1}+\epsilon}\cdot m
+ mN\cdot \exp(-2m\epsilon^2)
\end{align*}
If $N=o(\exp(m))$, this upper bound converges to $\rbr{\dist_{1:1}+\epsilon}\cdot m$, 
which is strictly less than the optimal welfare for sufficiently small $\epsilon>0$ when $n\geq 2$.
\end{proof}

% \thmCompComplexityMHR*
\begin{theorem}[Competition complexity of {\bspa} against {\wel} for MHR items and a single bidder]
\label{thm:bspa mhr}
For $n=1$ bidder and any $m\geq 1$ asymmetric MHR items, the competition complexity of the bundle-based second-price auction against the welfare benchmark is at most $3$, i.e., $\bspa_{4} \geq \wel_1$.
\end{theorem}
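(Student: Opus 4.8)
The plan is a two-step reduction: first pass from the multi-item instance to a single (artificial) MHR item by bundling, then invoke the $\alpha=1$ case of \Cref{thm:welfare}. For each bidder $i\in[4]$ let $W_i:=\sum_{j\in[m]}\val_{ij}$ denote $i$'s value for the grand bundle. Since the $\val_{ij}$ are independent across items and each marginal $\dist_j$ is MHR, the classical result of \citet{BMP-63} that a convolution of MHR (equivalently, IFR) distributions is again MHR — and, crucially, that it does not require the summands to be identically distributed — shows that the distribution $G$ of $W_i$ is MHR. Moreover, the bidders are i.i.d., so $W_1,\dots,W_4$ are i.i.d.\ draws from $G$.

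Next I would rewrite both sides of the target inequality as order statistics of $G$. Running the second-price auction on the grand bundle is exactly running a single-item second-price auction among the four i.i.d.\ bundle values, so $\bspa_4(\dist)=G_{2:4}$, the expectation of the second-highest of four i.i.d.\ samples from $G$. On the other side, with a single bidder the welfare-optimal allocation simply awards all items to that bidder, so $\wel_1(\dist)=\expect{W_1}=G_{1:1}$. Hence the theorem reduces to the purely single-item claim $G_{2:4}\geq G_{1:1}$ for every MHR distribution $G$.

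This is precisely the $n=1$, $m=1$, $\alpha=1$ instance of \Cref{thm:welfare}: the competition complexity of the second-price auction against welfare for one MHR item equals $\const_{1,1}$, and it remains to check $\const_{1,1}=3$. By \Cref{def:alpha-regular:vcg:tight competition complexity} together with the $\alpha=1$ analog of \Cref{lem:worst alpha} (worst case: the exponential distribution), $\const_{1,1}$ is the least integer $c$ with $H_{1+c}-H_1\geq 1$; since $H_3-H_1=\tfrac{5}{6}<1\leq\tfrac{13}{12}=H_4-H_1$, we get $\const_{1,1}=3$. Applying \Cref{thm:welfare} to the single MHR ``item'' with distribution $G$ then yields $\vcg_{1+3}(G)=G_{2:4}\geq G_{1:1}=\wel_1$, i.e.\ $\bspa_4(\dist)\geq\wel_1(\dist)$, which also gives the competition-complexity-against-revenue bound since $\rev_1\leq\wel_1$.

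The proof is short once these two imported ingredients are in place, so the real work is confirming that they apply, and that is where I expect the only friction. The main point to verify is the $\alpha=1$ specialization of \Cref{thm:welfare}: one must be sure that the $\alpha=1$ version of \Cref{lem:worst alpha} still pins down the exponential distribution as the worst MHR instance (the excerpt asserts that the argument is the same as for $\alpha\in(0,1)$), which is what legitimizes the harmonic-number evaluation of $\const_{1,1}$; the three-interval/tail-bound machinery of \Cref{lem:three intervals,lem:alpha boundary} goes through with $\Gamma_1(v)=e^{-v}$. A secondary check is that \citet{BMP-63} transfers to the continuous, atomless setting needed for the order-statistic identities, and that integrability is not an issue — but MHR distributions have finite moments of all orders, so $G_{1:1}$ and $G_{2:4}$ are finite and the comparison is well posed.
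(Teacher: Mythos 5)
Your proof is correct and follows essentially the same route as the paper: reduce to a single MHR ``bundle item'' via Theorem 3.2 of \citet{BMP-63}, then invoke the $\alpha=1$ worst-case argument (exponential distribution) and the harmonic-number identity $H_4 - H_1 = \tfrac{13}{12} \geq 1$ to conclude $\const_{1,1}=3$. The only difference is stylistic — you route the final step through \Cref{thm:welfare} applied to the one-item distribution $G$, whereas the paper directly evaluates $\bspa_N = H_N - 1$ against $\wel_1 = 1$ for the exponential — and you explicitly flag the same reliance on the paper's assertion that the $\alpha=1$ case of \Cref{lem:worst alpha} goes through.
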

\begin{proof}
By Theorem 3.2 of \citet{BMP-63}, the summation of values drawn from independent but possibly asymmetric MHR distributions is MHR. 
Therefore, by applying the argument in the proof of \cref{thm:welfare}, it is sufficient to consider the competition complexity of a single item with exponential distribution. 
Note that for exponential distributions, the revenue from $\bspa$ with $N$ bidders is $H_{N}-1$ where $H_N$ is the harmonic number, 
and the welfare is $1$. Therefore, when $N=4$, $H_N-1> 1$ and hence the competition complexity is $3$.
\end{proof}

\section{Single Bidder Setting: Competition Complexity of Bundling} 
\label{sec:cc_bundle}

In this section we explore the competition complexity of the bundle-based second-price auction ({\bspa}) in the single bidder setting, particularly for a small number of items (i.e., $m = 2, 3$). 

\begin{theorem}[Competition complexity of {\bspa} against {\cdw} for regular items and a single bidder]
\label{thm:bspa:3 items}
For $n = 1$ bidder and $m\in\{2, 3\}$ asymmetric regular items, the competition complexity of the bundle-based second-price auction against the duality benchmark is at most $m$, i.e., $\bspa_{1 + m} \geq \cdw_1$. 
\end{theorem}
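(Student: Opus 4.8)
The plan is to bound $\cdw_1$ from above and $\bspa_{1+m}$ from below by two quantities that live on a \emph{common} probability space---a single uniform $m\times(m+1)$ quantile matrix $\quantMatrix$---and then to check the resulting inequality pointwise in $\quantMatrix$ for the two cases $m=2$ and $m=3$.

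\emph{Step 1 (pin down both sides and tame the virtual value).} By \Cref{cor:duality_one_buyer}, $\cdw_1$ is the expectation, over a length-$m$ vector of i.i.d.\ uniform quantiles, of $\plus{\virtual_{\sigma(1)}(\cdf_{\sigma(1)}^{-1}(\quant_{\sigma(1)}))}+\sum_{j\ge 2}\cdf_{\sigma(j)}^{-1}(\quant_{\sigma(j)})$: the item with the largest quantile (equivalently, with the largest realized value, since $\cdf_j^{-1}$ is increasing) contributes its positive virtual value, and every other item contributes its realized value. In parallel, $\bspa_{1+m}$ is the expected \emph{second largest} of the $m+1$ i.i.d.\ bundle values $\sum_{j\in[m]}\cdf_j^{-1}(\quant_{j\ell})$. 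The first ingredient I would assemble is a short list of one-item ``revenue benchmark'' inequalities for regular $\dist_j$: using that regularity is exactly concavity of the revenue curve in quantile space, bound the troublesome term $\plus{\virtual_j(\cdot)}$ (which can exceed the realized value) by tractable expressions built from realized values at shifted quantiles and from order statistics of a few extra independent draws. These are designed so that a Bulow--Klemperer-type move goes through at the level of a single regular item---one extra bidder's samples dominate that item's positive-virtual-value contribution---and they are the ``revenue benchmarks for regular items'' that may be reusable elsewhere.

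\emph{Step 2 (a common coupling).} Draw $\quantMatrix\in[0,1]^{m\times(m+1)}$ with i.i.d.\ uniform entries, viewing its columns as $m+1$ bidders and its rows as the $m$ items. Define $\bspa_{1+m}(\quantMatrix)$ to be a \emph{lower} bound on the true second-largest column sum, obtained by committing---as a deterministic function of $\quantMatrix$---to a specific pair of columns and taking the smaller of their bundle values (valid since the true second-largest sum is the maximum over pairs of such minima), possibly after a value-decreasing reassignment of cells to (item, bidder) slots. Define $\cdw_1(\quantMatrix)$ to be an \emph{upper} bound on $\cdw_1$, obtained by using one (possibly data-dependent) column to supply the ``core'' values $\sum_{j\ge 2}\cdf_{\sigma(j)}^{-1}(\cdot)$ and the remaining columns, via the Step 1 estimates, to dominate the single positive-virtual-value term. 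Taking expectations gives $\expect[\quantMatrix]{\cdw_1(\quantMatrix)}\ge\cdw_1$ and $\expect[\quantMatrix]{\bspa_{1+m}(\quantMatrix)}\le\bspa_{1+m}$, so it suffices to establish $\cdw_1(\quantMatrix)\le\bspa_{1+m}(\quantMatrix)$ for \emph{every} $\quantMatrix$.

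\emph{Step 3 (the pointwise check; why $m\le 3$).} For $m=2$ ($\quantMatrix$ is $2\times3$) and $m=3$ ($\quantMatrix$ is $3\times4$), I would prove $\cdw_1(\quantMatrix)\le\bspa_{1+m}(\quantMatrix)$ by a finite case analysis on the relative order of the $m(m+1)$ entries---equivalently, on the induced order of the values within and across rows---exhibiting in each case a concrete pair of columns and a concrete accounting that certifies the inequality; then take expectations. I expect the main obstacle to be exactly this last step, together with choosing the surrogates in Step 2 so that the pointwise inequality is simultaneously true and checkable. The difficulty comes from two non-separable features pulling against each other: the positive-part/ironing inside $\virtual^+$ on the $\cdw_1$ side, and the fact that ``second largest of sums'' does not decompose over items on the $\bspa$ side. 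Both the number of orderings to examine and the slack in the pointwise inequality worsen as $m$ grows, which is why the argument is carried out only for $m\le 3$ and does not extend verbatim (in contrast to VCG, where the item-by-item structure yields a clean $O(\log m)$ bound).
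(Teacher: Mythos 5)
Your high-level plan---couple $\cdw_1$ and $\bspa_{1+m}$ through a single uniform $m\times(m+1)$ quantile matrix $\quantMatrix$ and then verify a pointwise inequality for $m\in\{2,3\}$---is the same skeleton as the paper. But Step~2 contains a gap that the paper explicitly has to work around, and that your formulation does not address.

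You propose to define $\bspa_{1+m}(\quantMatrix)$ as ``a lower bound on the true second-largest column sum, obtained by committing---as a \emph{deterministic} function of $\quantMatrix$---to a specific pair of columns.'' This cannot work: the resulting pointwise inequality $\bspa_{1+m}(\quantMatrix)\geq\cdw_1(\quantMatrix)$ is simply false for some matrices $\quantMatrix$. Concretely (take $m=2$ and i.i.d.\ items with $\dist_1=\dist_2=\dist$), suppose one column holds both entries with quantile $0.99$ while the other two columns each hold one quantile $0.5$ and one quantile $0.01$. Then every column other than the top one has sum $\cdf^{-1}(0.5)+\cdf^{-1}(0.01)\approx\cdf^{-1}(0.5)$, so the true second-largest column sum is roughly $\cdf^{-1}(0.5)$; yet the paper's $\cdw_1(\quantMatrix)=\tfrac{2}{3}\bigl(2\cdf^{-1}(0.5)+\cdf^{-1}(0.01)\bigr)\approx\tfrac{4}{3}\cdf^{-1}(0.5)$, which strictly exceeds it. No choice of a pair of columns, deterministic or otherwise, helps: the right-hand side is already the exact second-largest sum. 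The paper flags exactly this failure mode and fixes it by \emph{randomizing the coupling}: $\bspa_{1+m}(\quantMatrix)$ in \Cref{def:two player zero sum game} is not the second-largest column sum of $\quantMatrix$, but the expectation of that quantity after applying independent uniform permutations $\sigma_i$ to the columns \emph{within each row} (plus a row permutation $\tau$). This makes $\bspa_{1+m}(\quantMatrix)$ itself an expectation over row-wise reshufflings, which is what makes the pointwise claim provable; \Cref{prop:cc_bspa_m_3} is a case analysis over the placement of within-row order statistics under these random permutations. Your phrase ``possibly after a value-decreasing reassignment of cells'' gestures in this direction, but a deterministic reassignment cannot simultaneously be value-decreasing (so that it remains a valid lower bound) and fix the failing matrices above; the randomization is load-bearing, not cosmetic. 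Relatedly, your $\cdw_1(\quantMatrix)$ is described as using ``one (possibly data-dependent) column to supply the core,'' which again is not how the paper's $\cdw_1(\quantMatrix)$ works: \Cref{prop:cdwapprox} drops one of the $m+1$ quantiles \emph{uniformly at random}, averages the inverse CDFs over items via $g_{\dist}$, and bounds the positive-virtual-value term by a Bulow--Klemperer move on a single regular item---a symmetric averaged quantity, not a committed column. Until Step~2 is rebuilt around the extra randomness (or some equivalent symmetrization), Step~3's per-$\quantMatrix$ case analysis has nothing correct to verify.
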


We remark that while \Cref{thm:bspa:3 items} considers selling $m = 2$ or $m = 3$ items, the result of \citet{BK-96} (see \Cref{thm:BK}) can be viewed as our theorem with $m = 1$ regular item in the special case of $n = 1$ bidder. \Cref{bspa_multiple_n} in the previous section demonstrates why the single bidder setting ($n = 1$) is important: for $n \geq 2$ bidders, the competition complexity of {\bspa} is $\Omega(\exp(m))$, even for uniform distributions (which are MHR and thus regular).

In the remainder of this section, we formally prove \Cref{thm:bspa:3 items}. Our analysis follows a \emph{three-step approach}, with the key concept being an $m \times (m + 1)$ randomized quantile matrix $\quantMatrix$, where each entry is an i.i.d.\ draw from the uniform distribution $U[0, 1]$. \Cref{thm:bspa:3 items} is an immediate consequence of \Cref{coro:cdwapprox,prop:bspa_min_max,prop:cc_bspa_m_3} which we develop sequentially.

\xhdr{Step 1 (Upper bound on the duality benchmark):}  
In \Cref{subsec:CDW upper bound} (\Cref{prop:cdwapprox,coro:cdwapprox}), we upper bound the duality benchmark $\cdw_1(\dist)$ by the expectation of a function $\cdw_1(\quantMatrix)$ (formally defined in the corollary statement) over the randomized quantile matrix $\quantMatrix$: 
\begin{align*}
    \cdw_1(\dist) \leq \expect[\quantMatrix]{\cdw_1(\quantMatrix)}
\end{align*}

\xhdr{Step 2 (Revenue characterization of BSPA):}  
In \Cref{subsec:BSPA upper bound} (\Cref{prop:bspa_min_max}), we express the expected revenue of {\bspa} as the expectation of the expectation of another function $\bspa_{1+m}(\quantMatrix)$ over the randomized quantile matrix $\quantMatrix$:
\begin{align*}
     \bspa_{1 + m}(\dist) = \expect[\quantMatrix]{\bspa_{1 + m}(\quantMatrix)}
\end{align*}
where function $\bspa_{1+m}(\quantMatrix)$ is defined as the value of a two-player zero-sum game parameterized by the quantile matrix $\quantMatrix$.

\xhdr{Step 3 (Competition complexity analysis per quantile matrix):}  
Intuitively speaking, these first two steps establish a coupling between the duality benchmark and {\bspa} via the quantile matrix $\quantMatrix$. Consequently, in \Cref{subsec:competition complexity via quantile matrix} (\Cref{prop:cc_bspa_m_3}), we argue that when $m\in\{2, 3\}$, for every realized quantile matrix $\quantMatrix$,  
\begin{align*}
    \bspa_{1+m}(\quantMatrix) \geq \cdw_1(\quantMatrix)
\end{align*} 
Notably, the first two steps hold for any number of items ($m \geq 1$) and may be of independent interest for future research. 

\subsection{Upper Bounding the Duality Benchmark by Randomized Quantile Matrix}
\label{subsec:CDW upper bound}
In this section, we construct a function $\cdw_1(\quantMatrix)$ such that $\cdw_1(\dist) \leq \expect[\quantMatrix]{\cdw_1(\quantMatrix)}$ in \Cref{prop:cdwapprox} and \Cref{coro:cdwapprox}. 
\begin{lemma}[Upper bound of {\cdw} by randomized quantile profile] \label{prop:cdwapprox}
    For any $m\in \naturals$, let $\dist = \Pi_{j\in[m]} \dist_j$ be an arbitrary value distribution that is independent over $m$ regular items, where $\dist_j$ is the value distribution for each item $j\in[m]$. Consider the following process:
    \begin{itemize}
        \item 
         Draw $m+1$ quantiles $\quant'_1, \cdots, \quant'_{m+1}$ independently from the uniform distribution $U[0,1]$, and let $\quant'^*_1 \geq \quant'^*_2 \geq \cdots \geq \quant'^*_{m+1}$ be $\quant'_1, \cdots, \quant'_{m+1}$ sorted in decreasing order. 
        \item 
        Return $\frac{m}{m+1} \cdot (2 g_{\dist}(\quant'^*_2) + g_{\dist}(\quant'^*_3) + \cdots + g_{\dist}(\quant'^*_{m+1}))$, where the auxiliary function $g_{\dist}(\quant) \triangleq \frac{1}{m} \sum\nolimits_{j\in[m]} \dist_j^{-1}(\quant)$.
    \end{itemize}
     Then, the duality benchmark $\cdw_1(\dist)$ is at most the expected value of the return of the process:
     \begin{align*}
         \cdw_1(\dist) \leq \expect{\frac{m}{m+1} \cdot (2 g_{\dist}(\quant'^*_2) + g_{\dist}(\quant'^*_3) + \cdots + g_{\dist}(\quant'^*_{m+1}))}
     \end{align*}
\end{lemma}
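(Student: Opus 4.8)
The plan is to decompose both sides of the claimed inequality into a sum over the $m$ items, and to recognize that each resulting per-item inequality is exactly the single-item Bulow--Klemperer bound of \Cref{thm:BK} for $m$ versus $m+1$ bidders.

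First I would rewrite the duality benchmark in a per-item form. Starting from the formula in \Cref{cor:duality_one_buyer} and using linearity of expectation together with symmetry over which item realizes the largest quantile (ties have probability zero since each $\dist_j$ is atomless), item $j$ contributes $\plus{\virtual_j(\cdf_j^{-1}(\quant_j))}$ when its own quantile $\quant_j$ is the largest among the $m$ quantiles, and $\cdf_j^{-1}(\quant_j)$ otherwise. Conditioning on $\quant_j = \quant$, the other $m-1$ items draw smaller quantiles with probability $\quant^{m-1}$, so
\begin{align*}
\cdw_1(\dist) = \sum_{j\in[m]}\left(\int_0^1 \quant^{m-1}\plus{\virtual_j(\cdf_j^{-1}(\quant))}\dd\quant + \int_0^1 (1-\quant^{m-1})\cdf_j^{-1}(\quant)\dd\quant\right).
\end{align*}

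Next I would do the same for the process. Since $g_{\dist}(\quant) = \frac1m\sum_{j\in[m]}\cdf_j^{-1}(\quant)$, the expected return equals $\frac{1}{m+1}\sum_{j}\expect{2\cdf_j^{-1}(\quant^*_2) + \cdf_j^{-1}(\quant^*_3) + \cdots + \cdf_j^{-1}(\quant^*_{m+1})}$, where $\quant^*_1 \ge \cdots \ge \quant^*_{m+1}$ are the sorted $m+1$ uniform quantiles. Using the densities of order statistics of $m+1$ i.i.d.\ $U{[0,1]}$ variables, the densities of the $2$nd through the $(m+1)$st largest sum to $(m+1)(1-\quant^m)$ (the total density $m+1$ minus the density $(m+1)\quant^m$ of the maximum), while the density of the $2$nd largest is $m(m+1)(1-\quant)\quant^{m-1}$; the collapse of the order-statistic densities is just the binomial identity $\sum_{\ell=1}^{m}\binom{m}{\ell}(1-\quant)^\ell\quant^{m-\ell} = 1-\quant^m$. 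Hence the per-item contribution of the process is $\int_0^1 \cdf_j^{-1}(\quant)\bigl[m(1-\quant)\quant^{m-1} + (1-\quant^m)\bigr]\dd\quant$, and subtracting the two per-item expressions the inequality reduces, for each regular item $j$, to
\begin{align*}
\int_0^1 \quant^{m-1}\plus{\virtual_j(\cdf_j^{-1}(\quant))}\dd\quant \;\le\; (m+1)\int_0^1 \quant^{m-1}(1-\quant)\cdf_j^{-1}(\quant)\dd\quant.
\end{align*}
Now the left-hand side is $\tfrac1m$ times the optimal (Myerson) revenue from selling the single regular item $j$ to $m$ i.i.d.\ bidders --- the density of the maximum of $m$ uniforms is $m\quant^{m-1}$, and for a regular distribution the optimal revenue equals the expected positive virtual value of the highest bidder --- while the right-hand side is $\tfrac1m$ times the second-price revenue from selling item $j$ to $m+1$ i.i.d.\ bidders, since the density of the second-largest of $m+1$ uniforms is $m(m+1)\quant^{m-1}(1-\quant)$. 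So this per-item inequality is precisely $\rev_m(\dist_j) \le \vcg_{m+1}(\dist_j)$, which is \Cref{thm:BK} applied to the single regular item $\dist_j$; summing over $j$ completes the proof.

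The main obstacle, and the only step that is not mechanical bookkeeping, is the second one: seeing that the ``doubled'' weight on the second-order statistic in the process is engineered exactly so that, after the cancellation, the per-item residual coincides --- up to the $\tfrac1m$ normalization --- with the density of the second-largest of $m+1$ samples, which is what turns an a priori opaque bundling-style bound into a clean per-item application of Bulow--Klemperer. Attempting instead to couple the random quantile realizations on the two sides directly looks harder to control, since the virtual-value term and the heterogeneous item distributions $\dist_j$ prevent the symmetric coupling one would want to exploit.
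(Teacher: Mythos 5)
Your proof is correct, and it uses the same key ingredient as the paper: decomposing $\cdw_1$ and the process into per-item contributions and reducing the comparison to the Bulow--Klemperer bound $\rev_m(\dist_j)\le\vcg_{m+1}(\dist_j)$ for each regular item $j$. The paper reaches the same per-item reduction through a sequence of process manipulations (randomizing the assignment of items to sorted quantiles, then adding an $(m{+}1)$-th quantile and uniformly dropping one), whereas you collapse everything into explicit order-statistic densities and a single integral identity per item; these are different bookkeeping for the same underlying argument, and both are valid.
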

\begin{proof}
    By \Cref{cor:duality_one_buyer}, we know that when the items are regular, $\cdw_1(\dist)$ is equal to expected value of the following process, which we call $\QUANT$:
    \begin{itemize} 
        \item 
         Draw $m$ quantiles $\quant_1, \cdots, \quant_{m}$ independently from the uniform distribution $U[0, 1]$ for the $m$ items. Namely, the value of the buyer for each item $j\in[m]$ is $\val_j = \dist_j^{-1}(\quant_j)$. Next, let $\tau$ be the sorting permutation of the values such that $\quant_{\tau(1)} \geq \quant_{\tau(2)} \geq \cdots \geq \quant_{\tau(m)}$.  
        \item 
        Return 
        $\plus{\virtual_{\tau(1)}(\dist_{\tau(1)}^{-1}(\quant_{\tau(1)}))} + \sum\nolimits_{j\in[2:m]} \dist_{\tau(j)}^{-1}(\quant_{\tau(j)})$.
    \end{itemize}
    Since the uniform distribution $U[0,1]$ is atomless, the probability that two quantiles drawn from $U[0,1]$ are identical is zero. Hence the above process of generating quantiles $\quant_1, \cdots, \quant_m$ is equivalent to first generating the first, second, $\dots$, $m$-th order statistics $\quant_1^* \geq \quant_2^* \geq \cdots \geq \quant_m^*$ from the uniform distribution $U[0,1]$, and then randomly assigning item indices to the order statistics. Hence, the expected value of $\QUANT$ is equivalent to the following process $\RANDQ$: 
    \begin{itemize}
        \item 
        Draw $m$ quantiles $\quant_1, \cdots, \quant_{m}$ independently from the uniform distribution $U[0, 1]$ and let $\quant^*_1 \geq \quant^*_2 \geq \cdots \geq \quant^*_m$ be $\quant_1, \cdots \quant_m$ sorted in descending order. 
        \item Draw permutation $\tau$ uniformly at random from the symmetry group $S_m$. For each index~$k\in[m]$, let item $\tau(k)$ have value $\val_{\tau(k)} = \dist_{\tau(k)}^{-1}(\quant_k^*)$.
        \item 
        Return $\plus{\virtual_{\tau(1)}(\dist_{\tau(1)}^{-1}(\quant^*_1))} + \sum\nolimits_{k\in[2:m]} \dist_{\tau(k)}^{-1}(\quant^*_{k})$.
    \end{itemize}
    By first taking the expectation over $\quant^*_1 \geq \quant^*_2 \geq \cdots \geq \quant^*_m$, and then taking expectation over the permutation $\tau$ (that takes a uniformly random value from the symmetry group $S_m$), process $\RANDQ$ returns expected value: 
    \begin{align*}
        &\expect[{\vec{\quant^*}},\tau] {
        \plus{\virtual_{\tau(1)}(\dist_{\tau(1)}^{-1}(\quant^*_1))} + \sum\nolimits_{k\in[2:m]} \dist_{\tau(k)}^{-1}(\quant^*_{k})
        }
        \\
         = {} &
        \expect[\vec{\quant^*}]{
        \expect[\tau]{
            \plus{\virtual_{\tau(1)}(\dist_{\tau(1)}^{-1}(\quant^*_1))}
        } + \sum\nolimits_{k\in[2:m]} \expect[\tau]{
            \dist_{\tau(k)}^{-1}(\quant^*_{k})
        }} \\
         = {} &\expect[\vec{\quant^*}] {\frac{1}{m} \sum\nolimits_{j\in[m]} \plus{\virtual_j(\dist_{j}^{-1}(\quant^*_1))} + \sum\nolimits_{k\in[2:m]} \frac{1}{m}  \sum\nolimits_{j\in[m]} \dist_{j}^{-1}(\quant^*_k)}
        \\
     = {} &\frac{1}{m} \sum\nolimits_{j\in[m]} \expect[\vec{\quant^*}] {\plus{\virtual_j(\dist_{j}^{-1}(\quant^*_1))}} + \sum\nolimits_{k\in[2:m]} \expect[\vec{\quant^*}] {g_{\dist}(\quant^*_k)}. 
    \end{align*}
    where auxiliary function $g_{\dist}(\cdot)$ is defined in the lemma statement.
     
    We now claim that the expected value from process $\RANDQ$ is at most the expected value from the following process $\RANDQPlus$: 
    \begin{itemize} 
        \item 
         Draw $m+1$ quantiles $\quant'_1, \cdots, \quant'_{m+1}$ independently from the uniform distribution $U[0,1]$, and let $\quant'^*_1 \geq \quant'^*_2 \geq \cdots \geq \quant'^*_{m+1}$ be $\quant'_1, \cdots, \quant'_{m+1}$ sorted in decreasing order. 
        \item 
        Uniformly at random select an index $k\primed \in [m+1]$, let $\quant^*_1 \geq \cdots \geq \quant^*_m$ be the resulting sequence after deleting $\quant'^*_{k\primed}$ from $\quant'^*_1 \cdots \quant'^*_{m+1}$. 
        \item 
        Return $g_{\dist}(\quant'^*_2)+ \sum\nolimits_{k\in[2:m]} g_{\dist}(\quant^*_k)$. 
    \end{itemize}
    Notice that since we uniformly drop an index from $\quant'^*_1, \cdots \quant'^*_{m+1}$, random quantiles $\quant^*_1, \cdots, \quant^*_m$ in process $\RANDQ$ and $\RANDQPlus$ has the exact same distribution,  and hence $g_{\dist}(\quant^*_2), \cdots g_{\dist}(\quant^*_{m})$ in both processes have the same expected value. It remains to compare the expected value of 
    \begin{align*}
        \frac{1}{m} \sum\nolimits_{j\in[m]} \plus{\virtual_j(\dist_{j}^{-1}(\quant^*_1))}
        % \text{ in } \RANDQ
        \;\;
        \mbox{and}
        \;\;
        g_{\dist}(\quant'^*_2)
    \end{align*}
    in process $\RANDQ$ and $\RANDQPlus$, respectively.
    Firstly, notice that  $\expect{\plus{\virtual_j(\dist_j^{-1}(\quant^*_1))}}$ is also equal to the expected optimal revenue when selling item $j$ to $m$ i.i.d.\ bidders, i.e., $\rev_{m}(\dist_j)$. Since item $j$ is regular, by \Cref{thm:BK}, $\rev_{m}(\dist_j)$ is at most 
    the revenue from selling item $j$ to $m + 1$ i.i.d.\ bidders with the second-price auction, i.e.,  
    $\vcg_{m+1}(\dist_j) = \expect{\dist_j^{-1}(\quant'^*_2)}$.  
    Summing over all items $j \in [m]$, we have 
    \begin{align*}
        \frac{1}{m} \sum\nolimits_{j\in[m]} \expect{\plus{
        \virtual_j(\dist_{j}^{-1}(\quant^*_1))}}
        \leq \frac{1}{m} \sum\nolimits_{j\in[m]}\expect {\dist_j^{-1}(\quant'^*_2)} = 
        \expect{g_{\dist}(\quant'^*_2)}
    \end{align*}
    Finally, we will prove that the expected value from $\RANDQPlus$ is equal to the expected value from the process stated in the lemma statement. Since $\quant^*_1, \cdots, \quant^*_m$ is a result from deleting one of $\quant'^*_1, \cdots, \quant'^*_{m+1}$ uniformly at random, by a simple counting argument, in process $\RANDQPlus$, the expected value of 
    \begin{align*}
        g_{\dist}(\quant^*_2) + \cdots + g_{\dist}(\quant^*_{m})
    \end{align*}
    is equal to the expected value of 
    \begin{align*}
        \frac{m-1}{m+1} \cdot g_{\dist}(\quant'^*_2) + \frac{m}{m+1} \cdot \left(g_{\dist}(\quant'^*_3) + \cdots + g_{\dist}(\quant'^*_{m+1})\right)
    \end{align*}
    Specifically, for each $k = 2, \cdots, m$, random quantiles $\quant^*_k = \quant'^*_k$ when the index $k\primed$ deleted is at least $k+1$. This occurs with probability $\frac{m+1-k}{m+1}$. Otherwise, $\quant^*_k = \quant'^*_{k+1}$. Thus 
    \begin{align*}
        \expect{
            \sum\nolimits_{k\in[2:m]} g_{\dist}(\quant^*_k)
        } 
        &= \expect{
            \sum\nolimits_{k\in[2:m]} \left(\frac{m+1-k}{m+1} \cdot g_{\dist}(\quant'^*_k) + \frac{k}{m+1} \cdot  g_{\dist}(\quant'^*_{k+1})\right)}
         \\
        &= \expect{
            \frac{m-1}{m+1} \cdot g_{\dist}(\quant'^*_2) + 
            \sum\nolimits_{k\in[3:m + 1]}\frac{m}{m+1} \cdot g_{\dist}(\quant'^*_k)}
    \end{align*}
    Hence, we conclude that 
    \begin{align*}
        \expect{g_{\dist}(\quant'^*_2) + \sum\nolimits_{k\in[2:m]}g_{\dist}(\quant^*_k)}
        =
        \expect{\frac{m}{m + 1}\cdot 
        \left(
        2g_{\dist}(\quant'^*_2)
        +
        \sum\nolimits_{k\in[3:m+1]}
        g_{\dist}(\quant'^*_k)
        \right)}
    \end{align*}
    which concludes the proof of the lemma statement as desired.
\end{proof}

\begin{lemma}[Upper bound of {\cdw} by randomized quantile matrix]
\label{coro:cdwapprox}
    For any $m\in \naturals$, let $\dist = \Pi_{j\in[m]} \dist_j$ be an arbitrary value distribution that is independent over $m$ regular items, where $\dist_j$ is the value distribution for each item $j\in[m]$. Consider the following process:
    \begin{itemize}
        \item 
        Generate the following $m\times (m + 1)$ randomized quantile matrix $\quantMatrix$:
        for each row $i\in[m]$, draw $m+1$ quantiles $\quant_{i, 1}, \cdots, \quant_{i, m+1}$ independently from the uniform distribution $U[0,1]$ and set matrix entry $\quantMatrix[i, k] = \quant_{i, k}$ for each $k\in[m + 1]$.
        \item Sort $m + 1$ quantiles in each row $i\in[n]$ of $\quantMatrix$ so that $\quantMatrix[i, 1] \geq \quantMatrix[i, 2] \geq \cdots \geq \quantMatrix[i,m+1]$. Denote the row sorted version of $\quantMatrix$ as $\quantMatrix^*$. 
        \item 
        Define weight $x_k = \frac{2}{m+1}$ for $k = 2$ and $x_k = \frac{1}{m+1}$ for $k = 3, \cdots, m+1$. 
        \item 
        Return $\cdw_1(\quantMatrix)\triangleq  \frac{1}{m+1}\cdot \sum\nolimits_{i\in[m]}
        \left(
        2g_{\dist}(\quantMatrix^*[i, k])
        +
        \sum\nolimits_{k\in[3:m +1]} g_{\dist}(\quantMatrix^*[i, k])\right)$, where the auxiliary function $g_{\dist}(\quant) \triangleq \frac{1}{m} \sum\nolimits_{j\in[m]} \dist_j^{-1}(\quant)$.
    \end{itemize}
    Then, the duality benchmark $\cdw_1(\dist)$ is at most the expected value of the return of the process:
     \begin{align*}
         \cdw_1(\dist) \leq \expect[\quantMatrix]{\cdw_1(\quantMatrix)}
     \end{align*}
\end{lemma}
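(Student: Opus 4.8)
The plan is to derive this lemma directly from \Cref{prop:cdwapprox} by recognizing that $\cdw_1(\quantMatrix)$ is nothing more than the average of $m$ independent copies of the random quantity that already appears inside the expectation in \Cref{prop:cdwapprox}. Each row $i\in[m]$ of $\quantMatrix$ consists of $m+1$ i.i.d.\ $U[0,1]$ draws, so after sorting, the row-sorted vector $(\quantMatrix^*[i,1],\dots,\quantMatrix^*[i,m+1])$ is distributed exactly as the order statistics $(\quant'^*_1,\dots,\quant'^*_{m+1})$ of the $m+1$ quantiles in the process of \Cref{prop:cdwapprox}. Hence, reading off the definition of $\cdw_1(\quantMatrix)$ (and noting the evident typo that the leading term $2g_{\dist}(\quantMatrix^*[i,k])$ means $2g_{\dist}(\quantMatrix^*[i,2])$, consistent with the weight $x_2=\tfrac{2}{m+1}$), the $i$-th summand $2g_{\dist}(\quantMatrix^*[i,2]) + \sum_{k\in[3:m+1]} g_{\dist}(\quantMatrix^*[i,k])$ is, for each $i$, an i.i.d.\ copy of $2g_{\dist}(\quant'^*_2) + \sum_{k\in[3:m+1]} g_{\dist}(\quant'^*_k)$.

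Given this, the computation is immediate. By linearity of expectation (independence of the rows is not even needed here), and using that all $m$ rows share the same distribution,
\begin{align*}
\expect[\quantMatrix]{\cdw_1(\quantMatrix)}
&= \frac{1}{m+1}\sum\nolimits_{i\in[m]} \expect{2g_{\dist}(\quant'^*_2) + \sum\nolimits_{k\in[3:m+1]} g_{\dist}(\quant'^*_k)} \\
&= \frac{m}{m+1}\cdot \expect{2g_{\dist}(\quant'^*_2) + \sum\nolimits_{k\in[3:m+1]} g_{\dist}(\quant'^*_k)},
\end{align*}
which is exactly the right-hand side of the inequality in \Cref{prop:cdwapprox}. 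Chaining this with \Cref{prop:cdwapprox} yields $\cdw_1(\dist) \le \expect[\quantMatrix]{\cdw_1(\quantMatrix)}$, as claimed.

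I expect no genuine obstacle in this lemma: all the real content lives in \Cref{prop:cdwapprox}, and the present statement is a bookkeeping reformulation whose only purpose is to repackage the single-shot duality upper bound as an expectation over an $m\times(m+1)$ quantile matrix, so that in \Cref{subsec:competition complexity via quantile matrix} it can be coupled, row by row, against the matrix-based lower bound on $\bspa_{1+m}$. The one point that needs a moment of care is the normalization: the $m$ rows contribute a factor of $m$ which cancels the $\tfrac{1}{m+1}$ prefactor to reproduce precisely the $\tfrac{m}{m+1}$ coefficient of \Cref{prop:cdwapprox}; beyond that the argument is just linearity of expectation plus the observation that the sorted rows are i.i.d.\ copies of the earlier process.
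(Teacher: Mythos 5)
Your proof is correct and takes essentially the same route as the paper, whose entire argument is the one-line observation that the process in the lemma is an average over $m$ independent copies of the process in Lemma~\ref{prop:cdwapprox}; you simply spell out the linearity-of-expectation computation that cancels the $\tfrac{1}{m+1}$ prefactor against the $m$ rows. (You also correctly flag that the leading term in the definition of $\cdw_1(\quantMatrix)$ should read $2g_{\dist}(\quantMatrix^*[i,2])$, consistent with the stated weight $x_2=\tfrac{2}{m+1}$.)
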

\begin{proof}
    The process described in the lemma statement is equivalent to averaging over $m$ independent processes as described in \Cref{prop:cdwapprox}, which preserves the expected value. 
\end{proof}

\subsection{Revenue Characterization of the BSPA by Randomized Quantile Matrix}
\label{subsec:BSPA upper bound}
In this section, we construct function $\bspa_{1+m}(\quantMatrix)$ such that $\bspa_{1+m}(\dist) = \expect[\quantMatrix]{\bspa_{1+m}(\quantMatrix)}$ in \Cref{prop:bspa_min_max}. The construction of function $\bspa_{1+m}(\quantMatrix)$ relies on the two-player zero-sum game (\Cref{def:two player zero sum game}).

Before presenting the formal definition, we first provide the intuition behind the construction of the two-player game. Given any $m \times (m + 1)$ quantile matrix $\quantMatrix$, it can be interpreted as representing a setting where $m$ items are sold to $m + 1$ bidders. In this interpretation, the entry $\quantMatrix[i, j]$ denotes the value quantile of bidder $j$ for item $i$. That is, each row corresponds to an item, and each column corresponds to a bidder.\footnote{In contrast, in the upper bound analysis of the duality benchmark (\Cref{prop:cdwapprox,coro:cdwapprox}), we interpret the $m$ rows of the quantile matrix as $m$ bidders and the $m + 1$ columns as $m$ items, after randomly dropping one column.}
Since the revenue in the {\bspa} corresponds to the second-highest total value for the grand bundle, we model this using a two-player zero-sum game. In this game, an adversary (the min-player) first removes one column (i.e., one bidder) from the $m + 1$ columns. Then, an optimizer (the max-player) selects one of the remaining columns (i.e., one of the remaining bidders) and collects that bidder’s total value as the payment. Clearly, the optimal strategy of the adversary is to remove the column corresponding to the bidder with the highest total value, while the best response of the optimizer is to select the column corresponding to the second-highest total value. Note that we can convexify the action spaces of both players without loss of generality.

While this game construction provides a valid characterization of the revenue in {\bspa}, it is insufficient for establishing the desired comparison of $\bspa_{1 + m}(\quantMatrix) \geq \cdw_1(\quantMatrix)$ in \Cref{subsec:competition complexity via quantile matrix}. The issue is that, under this construction, there exist quantile matrices $\quantMatrix$ for which the revenue is too low.\footnote{Loosely speaking, this corresponds to value profiles where, for example, the second-highest bidder for the grand bundle has zero value, resulting in significantly lower ex post revenue in {\bspa} compared to the benchmark.}
To address this issue, we introduce additional randomness into the construction of $\bspa_{1 + m}(\quantMatrix)$. Similar to the approach in \Cref{prop:cdwapprox}, and leveraging the fact that our analysis is conducted in the quantile space, we apply random permutations to the entries of the quantile matrix $\quantMatrix$. However, these permutations must be constructed with care. The formal definition of our two-player zero-sum game follows below.

\begin{definition}[Zero-sum game formulation]
\label{def:two player zero sum game}
    Given any $m \in\naturals$ and $m\times (m + 1)$ quantile matrix $\quantMatrix\in[0, 1]^{m\times (m + 1)}$, define the following two-player zero-sum game:
    \begin{enumerate}
        \item The nature first applies a uniform random permutation to each row $i \in [m]$ of the quantile matrix $\quantMatrix$. Denote this row-wise permutation as $\sigma_i: [m+1] \rightarrow [m+1]$. Then, nature applies a uniform random permutation over the rows, denoted by $\tau: [m] \rightarrow [m]$. Let $\sigma = \prod_{i\in[m]} \sigma_i$ represent the combined effect of all row-wise permutations. (An illustration of the permutations $\sigma_i$ and the composite permutation $\sigma$ is provided in \Cref{fig:sigma}.)
        \item Given the realized permutations $\sigma, \tau$, the min-player (aka., adversary) picks a distribution over columns, denoted by $A \in \Delta([m+1])$.
        \item Given the realized permutations $\sigma, \tau$ and min-player's action $A \in \Delta([m+1])$, the max-player (aka., optimizer) picks a distribution over columns $B \in \Delta([m+1])$. 
    \end{enumerate}
    Given the realized permutations $\sigma, \tau$, and two players' actions $A,B$, the payoff is 
    \begin{align*}
        \Payoff{A, B, \sigma, \tau}
        \triangleq 
        \sum\nolimits_{j\in[m + 1]}
        \min\{B(j), 1 - A(j)\} \cdot \sum\nolimits_{i\in[m]} \dist_{i}^{-1}\left(\quantMatrix[\tau(i),\sigma_{\tau(i)}(j)]\right)
    \end{align*}
    Define function $\bspa_{1+m}(\quantMatrix)$ as the value of this two-player zero-sum game:
    \begin{align*}
        &\bspa_{1+m}(\quantMatrix) \triangleq 
        \expect[\sigma,\tau]{\max_{B\in\Delta([m+1])}
        \min_{A\in\Delta([m+1])}\Payoff{A, B, \sigma,\tau}}
    \end{align*}
\end{definition}

\begin{figure}
% \noindent
\begin{minipage}{0.48\textwidth}   
    \centering
\begin{tikzpicture}
    \def\m{4}  % Number of rows
    \def\n{4} % Number of columns
    \def\permrow{3} % Row to be permuted (1-based index)
    % \def\permrownum{i}
    % Define the matrix using the matrix of nodes approach
    \matrix[matrix of nodes,
        nodes={draw, minimum width=1.5cm, minimum height=1cm, anchor=center},
        column sep=-\pgflinewidth, row sep=-\pgflinewidth] (Q) 
    {
        $q_{1,1}$  & $q_{1,2}$ & $\cdots$  & $q_{1,m+1}$  \\
        $ $   & $\vdots$    &   $ $        & $\vdots$    \\
        $q_{i,1}$  & $q_{i,2}$  & $\cdots$  & $q_{i,m+1}$  \\
        $ $   & $\vdots$    &   $ $        & $\vdots$    \\
        $q_{m,1}$  & $q_{m,2}$  & $\cdots$  & $q_{m,m+1}$  \\
    };~~~

    % Draw permutation arrows within the selected row
    \foreach \j in {1,...,\n} {
        \pgfmathparse{int(mod(\j+1,\n)+1)} % Simple 
        \let\nextj\pgfmathresult
        \ifnum\nextj>1
            \draw[->, orange!80, thick] ([yshift=0.2cm]Q-\permrow-\nextj.south) -- ([yshift=0.2cm]Q-\permrow-\j.south);
        \fi
    }

    % Label the matrix
    \node[below] at ([xshift=-0.5cm]Q-5-3.south) {$Q$};
    
    % Label the permutation (less saturated color)
    \node[below, orange!90, font=\large] at ([yshift=0.05cm, xshift=-0.3cm]Q-\permrow-2.south east) {$\sigma_{i}$};

\end{tikzpicture}
% \captionof{figure}{Illustration of $\sigma_{i}$ (permutes $i^{th}$ row)} 
% \label{fig:sigmai}
\end{minipage}
\hfill
\begin{minipage}{0.48\textwidth} 
    \centering
\begin{tikzpicture}
    \def\m{5}  % Number of rows
    \def\n{4} % Number of columns
    
    % Define the matrix using the matrix of nodes approach
    \matrix[matrix of nodes,
        nodes={draw, minimum width=1.5cm, minimum height=1cm, anchor=center},
        column sep=-\pgflinewidth, row sep=-\pgflinewidth] (Q) 
    {
        $q_{1,1}$  & $q_{1,2}$  & $\cdots$  & $q_{1,m+1}$  \\
        $\vdots$   & $\vdots$   &           & $\vdots$     \\
        $q_{i,1}$  & $q_{i,2}$  & $\cdots$  & $q_{i,m+1}$  \\
        $\vdots$   & $\vdots$   &           & $\vdots$     \\
        $q_{m,1}$  & $q_{m,2}$  & $\cdots$  & $q_{m,m+1}$  \\
    };

    % Draw permutation arrows within every row
    \foreach \i in {1,...,\m} {
        \foreach \j in {1,...,\n} {
            \pgfmathparse{int(mod(\j+1,\n)+1)} % Simple cyclic shift
            \let\nextj\pgfmathresult
            \ifnum\nextj>1
                \draw[->, orange!80, thick] ([yshift=0.2cm]Q-\i-\nextj.south) -- ([yshift=0.2cm]Q-\i-\j.south);
            \fi
        }
    }
    
    % Label the matrix
    \node[below] at ([xshift=-0.5cm]Q-5-3.south) {$Q$};
    
    % Label the permutation (less saturated color)
    \node[below, orange!90, font=\large] at ([yshift=0.05cm, xshift=-0.3cm]Q-3-2.south east) {$\sigma$};

\end{tikzpicture}
% \captionof{figure}{Illustration of $\sigma$ (permutes all rows)}
% \label{fig:sigma}
\end{minipage}
\caption{Illustration of permutations $\sigma_{i}$ (permutes the $i$-th row) and $\sigma$ (permutes all rows) in \Cref{def:two player zero sum game}.} \label{fig:sigma}
\end{figure}
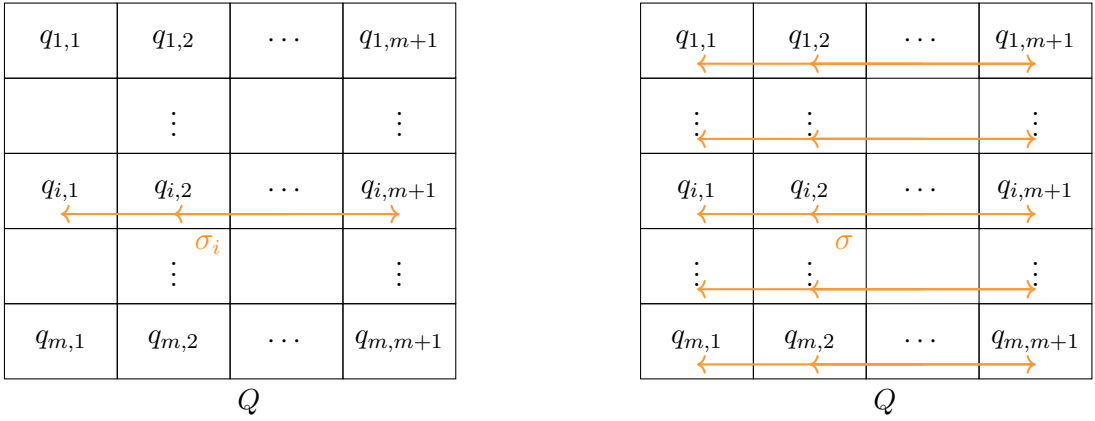
 
\begin{restatable}[Revenue characterization of {\bspa} by randomized quantile matrix]{lemma}{propBspaMinMax} \label{prop:bspa_min_max}
For any $m\in \naturals$, let $\dist = \Pi_{j\in[m]} \dist_j$ be an arbitrary value distribution that is independent over $m$ regular items, where $\dist_j$ is the value distribution for each item $j\in[m]$. 
The expected revenue of the {\bspa} with $m + 1$ bidders equals the expectation of $\bspa_{1 + m}(\quantMatrix)$ over $m\times (m + 1)$ randomized quantile matrix $\quantMatrix$, where each entry $\quantMatrix$ is drawn independently from the uniform distribution $U[0, 1]$:
\begin{align*}
    \bspa_{1 + m}(\dist) = \expect[\quantMatrix]{\bspa_{1 + m}(\quantMatrix)}
\end{align*}
\end{restatable}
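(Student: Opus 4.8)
The plan is to collapse the inner zero-sum game to a closed form for every fixed realization of the permutations $\sigma,\tau$, and then to exploit the fact that applying $\sigma$ and $\tau$ to $\quantMatrix$ merely relabels its (i.i.d.\ uniform) entries. Note in advance that regularity of the items plays no role in this lemma; only independence across items is used, which is exactly what makes the permuted matrix i.i.d.

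First I would evaluate the game. Fix permutations $\sigma=(\sigma_1,\dots,\sigma_m)$ and $\tau$, and for each column $j\in[m+1]$ write $W_j(\sigma,\tau):=\sum_{i\in[m]}\dist_i^{-1}\bigl(\quantMatrix[\tau(i),\sigma_{\tau(i)}(j)]\bigr)$, with $W_{(1)}\ge W_{(2)}\ge\cdots$ denoting the order statistics of $\{W_j(\sigma,\tau)\}_{j\in[m+1]}$. The claim is that the value of the game, $\min_{A}\max_{B}\Payoff{A,B,\sigma,\tau}$ (the adversary commits to $A$, the optimizer best-responds with $B$), equals $W_{(2)}(\sigma,\tau)$. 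For the upper bound, the adversary plays the point mass $A=e_{j^{\star}}$ on an index $j^{\star}$ attaining $W_{(1)}$; then $\min\{B(j^{\star}),1-A(j^{\star})\}=0$, so the optimizer must place its entire unit of mass on the remaining columns, and its best response puts it all on an index attaining $W_{(2)}$, yielding payoff exactly $W_{(2)}$. For the lower bound, fix an arbitrary $A\in\Delta([m+1])$ and let $j_1,j_2$ be indices of the two largest column sums; the optimizer plays $B(j_1)=1-A(j_1)$ and $B(j_2)=A(j_1)$. This is a valid distribution, and since $A(j_1)+A(j_2)\le 1$ forces $A(j_1)\le 1-A(j_2)$, both caps are respected, so the payoff equals $W_{j_1}(1-A(j_1))+W_{j_2}A(j_1)\ge W_{j_2}=W_{(2)}$. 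Hence $\bspa_{1+m}(\quantMatrix)=\E_{\sigma,\tau}\bigl[W_{(2)}(\sigma,\tau)\bigr]$.

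Next I would symmetrize over $\quantMatrix$. For each fixed $\sigma,\tau$, the map $(i,j)\mapsto(\tau(i),\sigma_{\tau(i)}(j))$ is a bijection of $[m]\times[m+1]$, so the permuted matrix $\bigl(\quantMatrix[\tau(i),\sigma_{\tau(i)}(j)]\bigr)_{i\in[m],\,j\in[m+1]}$ has the same joint law as $\quantMatrix$ itself, namely an $m\times(m+1)$ array of i.i.d.\ $U[0,1]$ variables. Reading its $m+1$ columns as bidders and its $m$ rows as items, $\{W_j(\sigma,\tau)\}_j$ is distributed exactly as the profile of grand-bundle values of $m+1$ i.i.d.\ bidders, so $W_{(2)}(\sigma,\tau)$ is distributed as the second-highest grand-bundle value, i.e.\ the per-sample revenue of the {\bspa} with $m+1$ bidders. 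Therefore $\E_{\quantMatrix}\bigl[W_{(2)}(\sigma,\tau)\bigr]=\bspa_{1+m}(\dist)$ for \emph{every} $\sigma,\tau$. Combining with the previous step and using Tonelli's theorem (all quantities nonnegative) to exchange expectations,
\begin{align*}
\E_{\quantMatrix}\bigl[\bspa_{1+m}(\quantMatrix)\bigr]
=\E_{\quantMatrix}\E_{\sigma,\tau}\bigl[W_{(2)}\bigr]
=\E_{\sigma,\tau}\E_{\quantMatrix}\bigl[W_{(2)}\bigr]
=\E_{\sigma,\tau}\bigl[\bspa_{1+m}(\dist)\bigr]
=\bspa_{1+m}(\dist),
\end{align*}
which is the desired identity.

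I expect the main obstacle to be the game evaluation in the first step: one must verify carefully that the optimizer's two-column response is feasible under the ``$\min\{B(j),1-A(j)\}$'' caps for an \emph{arbitrary} adversary mixed strategy, and that no adversary strategy spreading mass over several columns does better than simply zeroing out the single highest column. It is also worth being precise that the relevant game value is the one in which the adversary commits to $A$ before the optimizer chooses $B$, so that the identity with $W_{(2)}$ is exact rather than merely a one-sided bound. The remaining ingredients—the bijection/relabeling argument and the exchange of expectations—are routine.
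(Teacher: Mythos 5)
Your proof is correct and takes essentially the same approach as the paper's: for each fixed $\sigma,\tau$, reduce the zero-sum game value to the second-highest column sum, then use the relabeling bijection to identify $\E_{\quantMatrix}$ of that quantity with $\bspa_{1+m}(\dist)$, and exchange expectations. Your explicit attention to the $\min$/$\max$ order is well placed: the verbal description in \cref{def:two player zero sum game} (adversary commits to $A$ first) corresponds to $\min_A\max_B$, which is precisely what both your argument and the paper's lower-bound step establish equals $W_{(2)}$, and since the payoff is not bilinear the reversed order $\max_B\min_A$ is generically strictly smaller (e.g., it equals $s_1 s_2/(s_1+s_2)<s_2$ when $m+1=2$), so the displayed formula in that definition should indeed be read as $\min_A\max_B$.
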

\begin{proof}
Fix any realized permutations $\sigma,\tau$. Define auxiliary variable $s_j$ for each column $j\in[m + 1]$:
\begin{align*}   
    s_j \triangleq  \sum\nolimits_{i\in[m]} F_{i}^{-1}\left(\quantMatrix[\tau(i),\sigma_{\tau(i)}(j)]\right)
\end{align*}
% Since each entry in $\quantMatrix$ is drawn i.i.d.\ from the uniform distribution $U[0, 1]$, for any fixed $\sigma$ and $\tau$, $\quantMatrix[\tau(i),\sigma_{\tau(i)}(j)]$ is still drawn i.i.d from $U[0,1]$. 
Without loss of generality, relabel the columns so that $s_1 \;\ge\; s_2 \;\ge\; \cdots \;\ge\; s_{m+1}$. Note that $s_j$ could be viewed as the $j$-th highest bundle value among $m+1$ bidders, given that each bidder $j$'s value for item $i$ is $F_{i}^{-1}\left(\quantMatrix[\tau(i),\sigma_{\tau(i)}(j)]\right)$. 
By definition of the {\bspa}, its expected revenue can be expressed as 
\begin{align*}
    \bspa_{1+m}(\dist) = \expect[\sigma,\tau]{\expect[\quantMatrix]{s_2}}
    =
    \expect[\quantMatrix]{\expect[\sigma,\tau]{s_2}}
\end{align*}
where we view $s_2$ as a random variable correlated with $\sigma,\tau$ and $\quantMatrix$.
% Here we use the fact that each entry in $\quantMatrix$ is drawn i.i.d.\ from the uniform distribution $U[0, 1]$, and thus for any fixed $\sigma$ and $\tau$, $\quantMatrix[\tau(i),\sigma_{\tau(i)}(j)]$ is still drawn i.i.d.\ from $U[0,1]$, which further ensures that relabeling the columns does not change.

Next we claim that for any realized permutations $\sigma,\tau$, the value of the two-player zero-sum game is $s_2$, i.e., 
\begin{align*}
    s_2 \equiv \max_{B\in\Delta([m+1])}
        \min_{A\in\Delta([m+1])}\Payoff{A, B, \sigma,\tau}
\end{align*}
where the payoff in the right-hand side is defined (in \Cref{def:two player zero sum game}) as 
\begin{align*}
    \Payoff{A, B, \sigma, \tau}
        \triangleq 
        \sum\nolimits_{j\in[m + 1]}
        \min\{B(j), 1 - A(j)\} \cdot s_j
\end{align*}
Our argument contains two parts. First, we argue that $\max_B\min_A \Payoff{A, B, \sigma,\tau}$ is at most $s_2$. To see this, consider the strategy $A\primed$ of the adversary (min-player) which puts its entire probability mass on the column with the largest sum, i.e.,
\begin{align*}
  A\primed(1) = 1
  \;\;
  \mbox{and}
  \;\;
  A\primed(j) = 0 \text{ for } j\in[2:m+1],
\end{align*}
Consequently, column~1 is effectively ``removed'' from the optimizer (max-player)’s choices.  As a result, the maximum that the optimizer can get is achieved by placing all probability on the best \emph{remaining} column, whose sum is $s_2$, i.e., 
\begin{align*}  
  s_2 \equiv 
  \max_{B\in\Delta([m+1])}
        \sum\nolimits_{j\in[m + 1]}
        \min\{B(j), 1 - A\primed(j)\} \cdot s_j
\end{align*}
In summary, under the adversary’s ``pure'' strategy $A\primed$ that deletes the highest-sum column, the payoff is at most $s_2$.

Conversely, we argue that $\max_B\min_A \Payoff{A, B, \sigma,\tau}$ is at least $s_2$. To see this, note that the optimizer has a simple ``pure'' strategy to ensure a payoff of at least $s_2$.  If the adversary places all probability on column~1 (the largest sum), then column~1 is removed, so the optimizer picks column~2 and obtains $s_2$.  If the adversary places positive probability on any other column (with sum at most $s_2$), then the adversary's weight $A_1$ on column~1 is strict less than 1, i.e., $A(1) < 1$. In this case, the optimizer can select $B(1) = 1 - A(1)$ and $B(2) = A(1)$ weight on $s_2$ to achieve a payoff at least $s_2$. This proves our claim. 

Putting the two pieces together, we obtain 
\begin{align*}
     \bspa_{1+m}(\dist) 
    =
    \expect[\quantMatrix]{\expect[\sigma,\tau]{s_2}}
    =
    \expect[\quantMatrix]{\bspa_{1+m}(\quantMatrix)}
\end{align*}
which finishes the proof of \Cref{prop:bspa_min_max} as desired.
\end{proof}

\subsection{Competition Complexity Analysis per Quantile Matrix}
\label{subsec:competition complexity via quantile matrix}
In this section, we establish the ``competition complexity'' between functions $\bspa_{1 + m}(\quantMatrix)$ and $\cdw_1(\quantMatrix)$ developed in the previous two subsections for each realized quantile matrix $\quantMatrix$ separately.

\begin{restatable}[Competition complexity of {\bspa} against {\cdw} per quantile matrix]{lemma}{propBspaThree}\label{prop:cc_bspa_m_3}
For any $m\in \{2, 3\}$, let $\dist = \Pi_{j\in[m]} \dist_j$ be an arbitrary value distribution that is independent over $m$ regular items, where $\dist_j$ is the value distribution for each item $j\in[m]$. Fixing any $m \times (m+1)$ quantile matrix $\quantMatrix\in[0,1]^{m\times(m+1)}$, it satisfies that
\begin{align*}
    \bspa_{1 + m}(\quantMatrix) \geq \cdw_1(\quantMatrix)
\end{align*}
\end{restatable}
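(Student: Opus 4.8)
The plan is to reduce \Cref{prop:cc_bspa_m_3} to a self-contained inequality about order statistics of sums of randomly permuted tuples, and then dispatch that inequality for $m=2$ by a short argument and for $m=3$ by a finite case analysis.

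\textbf{Step A (normalization and reduction).} First I would observe that $\cdw_1(\quantMatrix)$ depends only on the sorted matrix $\quantMatrix^{\ast}$, and that the row permutations $\sigma_i$ in \Cref{def:two player zero sum game} make $\bspa_{1+m}(\quantMatrix)$ depend only on the multiset of entries in each row of $\quantMatrix$; hence I may assume each row $a_i=(a_{i,1}\ge\cdots\ge a_{i,m+1})$ of $\quantMatrix$ is already sorted. Next I would push the min--max computation from the proof of \Cref{prop:bspa_min_max} through a change of variables in the row-to-item matching: for a realized $(\sigma,\tau)$ let $\psi$ be the induced bijection from rows to items, set $v^{(i)}_k:=F_i^{-1}(a_{\psi^{-1}(i),k})$ (a non-negative, non-increasing tuple: item $i$'s transformed sorted values), and let $\rho_i$ denote the i.i.d.\ uniform column permutations. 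Then
\begin{align*}
\bspa_{1+m}(\quantMatrix)=\E_{\psi}\Big[\,\E_{\rho_1,\dots,\rho_m}\big[\text{2ndmax}\big(s_1,\dots,s_{m+1}\big)\big]\Big],\qquad s_j:=\sum_{i\in[m]}v^{(i)}_{\rho_i(j)} .
\end{align*}
On the other side, a counting argument --- each (row, item) pair is ``matched'' by exactly a $1/m$ fraction of the $m!$ bijections --- rewrites \Cref{coro:cdwapprox} as
\begin{align*}
\cdw_1(\quantMatrix)=\frac{1}{m+1}\,\E_{\psi}\Big[\sum_{i\in[m]}\Big(2v^{(i)}_2+\sum_{k=3}^{m+1}v^{(i)}_k\Big)\Big].
\end{align*}
Thus it suffices to prove, for every fixed $\psi$ (equivalently, for arbitrary non-negative non-increasing $(m+1)$-tuples $v^{(1)},\dots,v^{(m)}$), the elementary inequality
\begin{align*}
\E_{\rho_1,\dots,\rho_m}\Big[\text{2ndmax}\Big(\big(\textstyle\sum_{i}v^{(i)}_{\rho_i(j)}\big)_{j\in[m+1]}\Big)\Big]\ \ge\ \frac{1}{m+1}\sum_{i\in[m]}\Big(2v^{(i)}_2+\sum_{k=3}^{m+1}v^{(i)}_k\Big).
\end{align*}

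\textbf{Step B ($m=2$).} Here the second largest of three reals is at least the minimum of any two of them. For each realization, take the two columns $j_a=\rho_1^{-1}(1)$ and $j_b=\rho_1^{-1}(2)$ carrying row $1$'s largest and second-largest entries; then $\text{2ndmax}\ge\min(s_{j_a},s_{j_b})\ge v^{(1)}_2+\min\big(v^{(2)}_{\rho_2(j_a)},v^{(2)}_{\rho_2(j_b)}\big)$. Since $j_a,j_b$ depend only on $\rho_1$, the indices $\rho_2(j_a),\rho_2(j_b)$ are two distinct uniform elements of $[3]$, so $\E[\min(v^{(2)}_{\rho_2(j_a)},v^{(2)}_{\rho_2(j_b)})]=\tfrac13v^{(2)}_2+\tfrac23v^{(2)}_3$. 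Hence $\E[\text{2ndmax}]\ge v^{(1)}_2+\tfrac13v^{(2)}_2+\tfrac23v^{(2)}_3$, and the symmetric bound with the roles of the two rows swapped also holds; averaging the two bounds reproduces $\tfrac13\big(2(v^{(1)}_2+v^{(2)}_2)+(v^{(1)}_3+v^{(2)}_3)\big)$, which is exactly the required right-hand side (with equality).

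\textbf{Step C ($m=3$), the main obstacle.} The same ``anchored pair'' idea is the natural starting point, but it is genuinely too lossy by itself: when one row is much steeper than the others (e.g.\ $v^{(1)}=(1,1,1,0)$ and $v^{(2)}=v^{(3)}=\mathbf 0$), the second largest of the four column sums equals $1$ while the pair-averaged bound gives only $2/3$; likewise the symmetric ``uniform random pair'' bound $\E[\min(s_1,s_2)]$ and the crude bound $\text{2ndmax}\ge\tfrac1m(\sum_j s_j-\max_j s_j)$ both undershoot $\tfrac14\sum_i(2v^{(i)}_2+v^{(i)}_3+v^{(i)}_4)$. So for $m=3$ one must extract more from the second-order statistic of the four column sums --- for instance, also using that $\text{2ndmax}$ of four reals dominates the second largest of any three of them, with that triple taken to be the columns carrying one row's top three entries, and then branching on how the largest entries of the three rows fall across the four columns. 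I would therefore fix $\rho_1=\mathrm{id}$, enumerate the relevant configurations of $(\rho_2,\rho_3)\in S_4\times S_4$ up to the obvious symmetries, and in each configuration lower-bound $\text{2ndmax}$ by an expression linear in the $v^{(i)}_k$ chosen so that the configuration-averaged bound meets $\tfrac14\sum_i(2v^{(i)}_2+v^{(i)}_3+v^{(i)}_4)$. The hard part is exactly this bookkeeping: the slack is small (equality holds precisely when all $v^{(i)}$ are constant tuples), so the per-configuration lower bounds must be chosen with care to make every case add up.
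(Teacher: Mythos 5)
Your Step~A reduction is correct and is a genuinely cleaner reformulation than the paper's: by pushing the row permutations into sorted row tuples $v^{(i)}$ and averaging over the row-to-item bijection $\psi=\tau^{-1}$, you isolate a self-contained pointwise inequality (in $\psi$, with only the column permutations $\rho_i$ left random) whose averaged form is $\cdw_1(\quantMatrix)\le\bspa_{1+m}(\quantMatrix)$. Your Step~B for $m=2$ is also correct and, to my reading, tighter in spirit than the paper's: the paper constructs an explicit optimizer strategy and splits on whether the two smallest row entries collide in a column (probability $1/3$ vs.\ $2/3$), whereas you use the observation that the second largest of three reals dominates the minimum of any chosen pair, anchor on row~1's top-two columns so $\text{2ndmax}\ge v^{(1)}_2+\min\bigl(v^{(2)}_{\rho_2(j_a)},v^{(2)}_{\rho_2(j_b)}\bigr)$, and symmetrize. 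Both routes land on the same exact value $\tfrac13\bigl(2(v^{(1)}_2+v^{(2)}_2)+(v^{(1)}_3+v^{(2)}_3)\bigr)$, but your argument is shorter and avoids the probability bookkeeping.

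However, the statement you were asked to prove covers $m\in\{2,3\}$, and your Step~C is not a proof: it is a plan (``fix $\rho_1=\mathrm{id}$, enumerate the relevant configurations of $(\rho_2,\rho_3)$ up to symmetry, pick per-configuration linear lower bounds'') together with an honest and accurate diagnosis of why the $m=2$ idea is too lossy by itself --- your example $v^{(1)}=(1,1,1,0)$, $v^{(2)}=v^{(3)}=\mathbf 0$ correctly shows the averaged anchored-pair bound dips to $2/3<3/4$. That diagnosis is right, but the remaining case analysis is precisely where the work lies, and it is not trivial: the paper's own $m=3$ proof requires a four-way case split on the positions of the first and second row order statistics, a carefully tailored optimizer strategy in each case, and a delicate tie-breaking rule (toward columns avoiding any $\quantMatrix^*[\cdot,4]$) to get the weights on $\quantMatrix^*[\tau(i),3]$ and $\quantMatrix^*[\tau(i),4]$ to come out right. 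Your proposal does not carry out this bookkeeping, so the $m=3$ half of the lemma is not established. Until you (or someone) completes the enumeration and verifies that each configuration's lower bound sums to at least $\tfrac14\sum_i(2v^{(i)}_2+v^{(i)}_3+v^{(i)}_4)$, the proof is incomplete.
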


Recall that the function $\bspa_{1 + m}(\quantMatrix)$ is defined via a two-player zero-sum game (\Cref{def:two player zero sum game}) between an adversary (min-player) and an optimizer (max-player). We explicitly construct the optimizer's strategies for different cases. This construction yields a payoff for the optimizer that serves as a valid lower bound on $\bspa_{1 + m}(\quantMatrix)$. We then compare this lower bound with the benchmark $\cdw_1(\quantMatrix)$ to establish the desired result to prove \Cref{prop:cc_bspa_m_3}.

\begin{proof}[Proof of \Cref{prop:cc_bspa_m_3} with $m = 2$ bidders.]
    We defer a similar but more involved analysis for $m = 3$ in \Cref{app:sec:cc_bundle}.
    Suppose there are $m = 2$ bidder, quantile matrix $\quantMatrix$ has a size $2 \times 3$. Let $\quantMatrix^*$ be an associated matrix where $\quantMatrix^*[i, j]$ is the $j$-th highest order statistic among $\quantMatrix[i, 1], \cdots, \quantMatrix[i, m+1]$.

    Consider the two-player zero-sum game defined in \Cref{def:two player zero sum game}.
    For any row permutation $\sigma$, there are two possibilities: 1) $\quantMatrix^*[1, 3]$ and $\quantMatrix^*[2, 3]$ are in the same column, and 2) $\quantMatrix^*[1, 3]$ and $\quantMatrix^*[2, 3]$ are in different columns. Without loss of generality, we assume the adversary strategy $A$ is always an extreme point of $\Delta([m+1])$, since there always exists an optimal adversary strategy of this form (as witnessed in the proof of \Cref{prop:bspa_min_max}). We next compute the payoff that the optimizer gets from a specific strategy, which lower bounds $\bspa_{3}(\quantMatrix)$.
    
    In the first case, no matter which column the adversary selects, the optimizer will always be able to select a column $j$ where for all item $i \in [2]$, $\quantMatrix[\tau(i),\sigma_{\tau(i)}(j)] \geq \quantMatrix^*[\tau(i), 2]$. In the second case, no matter which column the adversary selects, the optimizer can selected each remaining column with $1/2$ probability. This means that for any item $i \in [2]$, the optimizer select a column $j$ with $\quantMatrix[\tau(i),\sigma_{\tau(i)}(j)] \geq \quantMatrix^*[\tau(i), 2]$ with at least $1/2$ probability. The probability (over $\sigma$) that the first case happens is $1/3$, therefore for any item $i \in [2]$, the probability of the player selecting a column~$j$ where  $\quantMatrix[\tau(i),\sigma_{\tau(i)}(j)] \geq \quantMatrix^*[\tau(i), 2]$ is at least $1 \cdot 1/3 + 1/2 \cdot 2/3 = 2/3$. 
    Consequently, the optimizer's expected payoff from the above described strategy is at least 
    \begin{align*}
        &\expect[\tau]{\sum\nolimits_{i\in[2]} \frac{2}{3} \cdot \dist_{i}^{-1}(\quantMatrix^*[\tau(i), 2]) + \frac{1}{3} \cdot \dist_{i}^{-1}(\quantMatrix^*[\tau(i), 3])}
        \\
        = {} & \sum\nolimits_{i\in[2]} \frac{2}{3} \cdot \expect[\tau]{\dist_{i}^{-1}(\quantMatrix^*[\tau(i), 2])} + \frac{1}{3} \cdot \expect[\tau]{\dist_{i}^{-1}(\quantMatrix^*[\tau(i), 3])} \\
        = {} & \sum\nolimits_{i\in[2]} \frac{2}{3} \cdot g_{\dist}(\quantMatrix^*[i, 2]) + \frac{1}{3} \cdot g_{\dist}(\quantMatrix^*[i, 3])
        \\
        = {} & \cdw_1(\quantMatrix) 
    \end{align*}
    where the last equality holds due to the definition of function $\cdw_1(\quantMatrix)$ in \Cref{coro:cdwapprox}. This completes the proof of the lemma statement with $m = 2$ bidders as desired.
\end{proof}

\section{Single Bidder Setting: Constant Approximation of Bundling} \label{sec:approx_bundle}

In this section, we establish constant-factor revenue approximations for selling the grand bundle under the assumptions that items are either regular or MHR. In \Cref{subsec:brev approx}, we study the optimal pricing of the grand bundle ({\brev}). In \Cref{subsec:bspa approx}, we study the bundle-based second-price auction ({\bspa}). Unlike the previous section, which focused on settings with a small number of items, all results in this section hold for an arbitrary number of items, i.e., for any $m \in \naturals$.

\subsection{Constant Approximation by the Optimal Grand Bundle Pricing}  
\label{subsec:brev approx}

In this section, we study the optimal pricing of the grand bundle ({\brev}) for a single bidder. We begin by focusing on the setting with regular items. Our main result is as follows.

\begin{restatable}[Approximation of {\brev} against {\rev} for regular items and a single bidder]{theorem}{crEighteenBrevGegRev}
\label{cor:18brev_geq_rev}
For $n = 1$ bidder and $m \geq 1$ asymmetric regular items, the optimal expected revenue from selling all items as a grand bundle is an 18-approximation to the Bayesian optimal revenue, i.e., $18\cdot \brev_1 \geq \rev_1$. 
\end{restatable}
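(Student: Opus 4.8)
The plan is to reduce the statement to two ingredients and then track constants. The first ingredient is the classical core--tail analysis for a single additive buyer with independent items: the optimal revenue is dominated by a linear combination of the separate-selling and grand-bundling revenues, $\rev_1 \le a\cdot\srev_1 + b\cdot\brev_1$ for explicit small constants $a,b$ (this is the quantitative content of the core--tail decomposition of \citet{BILW-14}, reproved via duality in \citet{CDW-16}). The second ingredient, which is where the regularity hypothesis is used, is the new comparison $\srev_1 \le 4\,\brev_1$ for independent regular items. Granting both, $\rev_1 \le a\cdot\srev_1 + b\cdot\brev_1 \le (4a+b)\,\brev_1$, and the specific constants coming out of the decomposition make $4a+b = 18$, proving $\rev_1 \le 18\,\brev_1$. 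So essentially all of the work is in establishing $\srev_1 \le 4\,\brev_1$ for regular items.

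To prove $\srev_1 \le 4\,\brev_1$ it suffices to exhibit a single posted price $P$ for the grand bundle earning at least $\tfrac14\srev_1$, since $\brev_1$ is at least the revenue of any fixed bundle price. Write $\rev_j$ for the optimal single-item revenue of item $j$, so $\srev_1=\sum_j\rev_j$. The argument rests on two facts about a regular item $j$. First, for every price $t$ one has $t\cdot\Pr[v_j\ge t]\le\rev_j$, i.e.\ $\Pr[v_j\ge t]\le\rev_j/t$ (holds for any distribution). Second, regularity means the revenue curve $R_j(q)=q\cdot\cdf_j^{-1}(1-q)$ is concave; since $R_j$ vanishes at $q=0$ and is maximized, with value $\rev_j$, at the monopoly quantile, concavity on the interval above the monopoly quantile gives a linear lower bound on $R_j$ there, and hence a threshold $r_j$ comparable to $\rev_j$ with $\Pr[v_j\ge r_j]\ge\tfrac12$ --- informally, a regular item is not heavy-tailed at the scale of its own revenue. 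I would then perform a core--tail split: choose thresholds $r_j$ so that the exceedance probabilities $\Pr[v_j>r_j]$ sum to a small constant, set $v_j'=\min\{v_j,r_j\}$ and $S'=\sum_j v_j'$. Using the two facts, regularity forces (i) the truncated means to account for a constant fraction of $\srev_1$, i.e.\ $\E[S']=\Omega(\srev_1)$, and (ii) each threshold to be small compared with $\E[S']$, so that $\mathrm{Var}(S')\le(\max_j r_j)\,\E[S']\ll\bigl(\E[S']\bigr)^2$. A Paley--Zygmund/Chebyshev bound then yields $\Pr[S'\ge\tfrac12\E[S']]=\Omega(1)$, so posting the bundle price $\tfrac12\E[S']$ earns $\Omega(\srev_1)$. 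The ``heavy'' items whose revenue sits above their threshold are instead captured directly via the first fact, by posting a bundle price near such an item's threshold. Optimizing the choices of $r_j$ and the case split brings the constant down to $4$.

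The main obstacle is precisely this last optimization: one must choose the per-item truncation levels, together with the cutoff separating the ``concentration regime'' from the ``one dominant item'' regime, so that \emph{both} regimes simultaneously deliver at least $\tfrac14\srev_1$, and then shave the resulting constant all the way to $4$ --- any slack here is amplified by the factor $a$ in the final bound. I would also stress that regularity is essential rather than cosmetic: without it the revenue curve need not be concave, $v_j$ can be arbitrarily heavy-tailed relative to $\rev_j$, the truncated bundle value need not concentrate, and pure bundling is known to fail to be a constant approximation to $\rev_1$ in general; the theorem is exactly the assertion that regularity rules this out. (The parallel $e$-factor statements for MHR items, mentioned in the introduction, instead replace the concavity step with the bound $\rev_j\ge\E[v_j]/e$ of \citet{DRY-15} together with closure of MHR under independent summation, but that lies outside the present statement.)
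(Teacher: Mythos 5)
Your high-level skeleton is exactly the paper's: invoke the core--tail bound $\rev_1 \le 4\,\srev_1 + 2\,\brev_1$ of \citet{BILW-14}, prove the new regularity lemma $\srev_1 \le 4\,\brev_1$, and combine to get $4\cdot 4 + 2 = 18$. You also correctly identify that regularity enters exactly through the concavity of the per-item revenue curve, and that the essential consequence is a bound of the form $\prob{v_j \ge r_j} \ge \tfrac12$ at a threshold comparable to $\rev_1(\dist_j)$.

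Where you diverge is the proof of the lemma $\srev_1 \le 4\,\brev_1$, and here your route is both heavier and incomplete. You propose a truncation plus a variance bound plus Paley--Zygmund/Chebyshev, followed by a case split to handle a dominant item, and you candidly flag that nailing the constant at $4$ is the unresolved part. That worry is well founded: the second-moment method with truncation at $r_j = \rev_1(\dist_j)$ gives only $\prob{S' \ge \tfrac12 \E[S']} \ge \tfrac14 \cdot \tfrac{1}{1 + \mathrm{Var}(S')/\E[S']^2}$, which in the single-dominant-item regime is about $\tfrac18$, yielding a constant far from $4$ before the case split is even optimized. The paper sidesteps all of this. It notes (the content of \Cref{lem:brev approx worst dist}, via concavity of the revenue curve) that for a regular item $\prob{v_j \ge \rev_1(\dist_j)} \ge \tfrac12$, hence $\dist_j$ first-order stochastically dominates the two-point distribution $\dist_j\primed$ with mass $\tfrac12$ at $\rev_1(\dist_j)$ and mass $\tfrac12$ at $0$. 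By revenue monotonicity of bundle pricing one may replace $\dist$ by $\dist\primed = \prod_j \dist_j\primed$. Under $\dist\primed$ the bundle value is $\sum_j \rev_1(\dist_j) X_j$ with $X_j$ i.i.d.\ symmetric Bernoulli; by symmetry around its mean $\tfrac12\srev_1$, the posted bundle price $\tfrac12\srev_1$ is accepted with probability at least $\tfrac12$, giving revenue $\ge \tfrac14\srev_1$ outright. No concentration inequality, no truncation thresholds, no two-regime optimization. So: your outer decomposition and the target lemma are right, but the inner proof of the lemma is a genuinely different and unnecessarily lossy route; the exact-symmetry trick after FOSD reduction to a two-point product distribution is the idea you are missing.
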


We first sketch the proof of \Cref{cor:18brev_geq_rev}.
Our analysis builds on a revenue approximation guarantee established by \citet{BILW-14} (\Cref{thm:six_approx}), which shows that combining two mechanisms--selling items separately ({\srev}) and selling all items as a grand bundle ({\brev})--yields a constant-factor approximation to the Bayesian optimal revenue. Importantly, this guarantee holds even when the value distributions are independent but irregular (i.e., not regular). Intuitively, {\srev} and {\brev} address different hard instances, and both are necessary to ensure a constant approximation in the presence of irregular distributions \citep{HN-12,BILW-14}.

In contrast, as shown in \Cref{cor:18brev_geq_rev}, when the value distributions are regular, selling items as a grand bundle ({\brev}) alone suffices to achieve a constant-factor approximation. Specifically, we prove in \Cref{pr:4brev_geq_srev} that under regular distributions, {\brev} is a 4-approximation to {\srev} via a simple analysis based on revenue curves. \Cref{cor:18brev_geq_rev} then follows directly from \Cref{thm:six_approx,pr:4brev_geq_srev}.

\begin{lemma}[\citealp{BILW-14}]\label{thm:six_approx}
For $n = 1$ bidder and $m \geq 1$ asymmetric items, 
the Bayesian optimal revenue is at most twice the optimal expected revenue from selling all items as a grand bundle plus four times the optimal expected revenue from selling all items separately, i.e., $2\cdot \brev_1 + 4\cdot \srev_1 \geq \rev_1$.
\end{lemma}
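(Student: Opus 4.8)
The plan is to reconstruct the core--tail decomposition argument of \citet{BILW-14}. First I would introduce, for each item $j\in[m]$, the quantity $r_j := \sup_{p\geq 0} p\cdot\prob{v_j \geq p}$, i.e.\ the optimal revenue from selling item $j$ alone to the single buyer (a monopoly posted price), so that $\srev_1 = \sum_{j} r_j$, and record the elementary Markov-type bound $\prob{v_j \geq x}\leq r_j/x$. Next I would fix per-item thresholds $t_j\geq 0$ calibrated so that $\prob{v_j\geq t_j}$ is proportional to $r_j/\srev_1$ up to a universal constant; this choice simultaneously forces every $t_j = O(\srev_1)$ and keeps the expected number of items exceeding their thresholds bounded by a small constant. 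I would then split each coordinate into a \emph{core} part $\min\{v_j,t_j\}$ and a \emph{tail} part $v_j\cdot\indicator{v_j\geq t_j}$; after the routine re-coupling that renders the core value vector and the tail value vector independent, the sub-additivity of optimal revenue under independent decompositions of the value vector (\citealp{HN-12,BILW-14}) gives $\rev_1 \leq \rev_1^{\mathrm{core}} + \rev_1^{\mathrm{tail}}$, where $\rev_1^{\mathrm{core}}$ and $\rev_1^{\mathrm{tail}}$ are the Bayesian optimal revenues of the core and tail instances.

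For the tail I would charge $\rev_1^{\mathrm{tail}}$ to $\srev_1$. Since in the tail instance every item is ``active'' only with small probability, selling items separately is already a constant-factor approximation there, and the target is $\rev_1^{\mathrm{tail}} \leq 4\,\srev_1$. Concretely one bounds $\rev_1^{\mathrm{tail}}$ by the tail instance's expected welfare $\sum_j\big(t_j\prob{v_j\geq t_j} + \expect{\plus{v_j - t_j}}\big)$, whose first summand is at most $r_j$ by definition of $r_j$; the items for which the residual $\expect{\plus{v_j-t_j}}$ fails to be $O(r_j)$---the heavy-tailed ones---have to be extracted beforehand and bounded individually against $r_j$, because for those the welfare bound is hopelessly weak (it can be infinite while the revenue is finite).

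For the core instance every coordinate is bounded by $t_j$. Let $\mu := \expect{\sum_j \min\{v_j,t_j\}}$ be the expected grand-bundle value. By independence and boundedness, $\mathrm{Var}\big(\sum_j\min\{v_j,t_j\}\big)\leq (\max_j t_j)\cdot\mu = O(\srev_1\cdot\mu)$, so after peeling off the (few) items whose threshold is a non-negligible fraction of $\mu$---handled as in the tail case---Chebyshev's inequality shows the core grand-bundle value is $\Omega(\mu)$ with constant probability. Posting that price on the grand bundle certifies $\mu = O(\brev_1)$, and since $\rev_1^{\mathrm{core}} \leq \mu$ (revenue is at most welfare on this bounded instance) one obtains $\rev_1^{\mathrm{core}}\leq 2\,\brev_1$. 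Combined with the tail bound, $\rev_1 \leq 2\,\brev_1 + 4\,\srev_1$.

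The hard part will be the joint calibration of the thresholds with the tail estimate: the $t_j$ must at once be small enough that the core grand bundle concentrates around its mean (needed for the $2\,\brev_1$ term) and large enough that the tail mass is controlled by $\srev_1$ (needed for the $4\,\srev_1$ term), and the heavy-tailed items---which obstruct one of these---must be peeled off and re-accounted for individually; pushing the bookkeeping through so that the constants come out exactly $2$ and $4$ is where the work concentrates, whereas each individual ingredient (marginal-mechanism sub-additivity, the Markov-type tail bound, Chebyshev concentration, and revenue $\leq$ welfare on bounded instances) is by now standard.
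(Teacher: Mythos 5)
The statement you are proving is a citation to \citet{BILW-14}; the paper reproduces no proof, so I judge your reconstruction on its own terms. Two load-bearing steps do not hold as written. First, the decomposition $\rev_1\leq\rev_1^{\mathrm{core}}+\rev_1^{\mathrm{tail}}$ ``after re-coupling to independence'' is not a valid lemma. The Hart--Nisan sub-additivity you invoke says that for independent $D_1,D_2$, the optimal revenue of $D_1+D_2$ is at most the optimal revenue of $D_1$ plus the expected \emph{welfare} of $D_2$; the asymmetry is essential, and the revenue-plus-revenue form is false even for $m=1$ (with $X,Y$ i.i.d.\ equal-revenue, $\rev(X+Y)>\rev(X)+\rev(Y)=2$). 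What \citet{BILW-14} actually do is not split each coordinate's value but condition on the random \emph{set} of items whose value exceeds the threshold, apply the marginal-mechanism inequality to that item partition, and average over the partition; the resulting inequality is $\rev_1\leq\mathrm{TAIL}+\mathrm{CORE}$, where the tail term is a revenue quantity and the core term is a welfare quantity, in that order, precisely because the core is the bounded side.

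Second, your tail estimate rests on a step you yourself flag as broken. You want $\rev_1^{\mathrm{tail}}\leq 4\,\srev_1$ by dominating the tail revenue with the tail welfare $\sum_j\bigl(t_j\prob{v_j\geq t_j}+\expect{\plus{v_j-t_j}}\bigr)$, but $\expect{\plus{v_j-t_j}}$ is infinite for the equal-revenue distribution, and it is infinite for \emph{every} item there, so ``extract the heavy-tailed items beforehand and bound them individually'' peels off the entire instance; there is no small exceptional set, and the welfare route to the tail bound simply does not exist. The actual BILW-14 tail bound is a probability argument, not a welfare argument: with the flat threshold $t=\srev_1$ one has $\sum_j\prob{v_j\geq t}\leq\sum_j r_j/t=1$, so the tail set is typically a singleton, and the tail-conditional revenues are controlled through the ratio $r_j/\prob{v_j\geq t}$, yielding $\mathrm{TAIL}\leq 2\,\srev_1$ without ever touching a welfare quantity. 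Your core half (Chebyshev on $\sum_j\min\{v_j,t_j\}$ using $\operatorname{Var}\leq(\max_j t_j)\cdot\mu$, revenue at most welfare on the bounded instance, posting a price near $\mu$) is the right idea, and the per-item calibration of $t_j$ is an unnecessary but harmless departure from the flat $t=\srev_1$; the breakage is in the direction of the core/tail decomposition and in the tail estimate.
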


\begin{restatable}[Approximation of {\brev} against {\srev} for regular items and a single bidder]{lemma}{prFourBrevGeqSRev}
\label{pr:4brev_geq_srev}
For $n = 1$ bidder and $m \geq 1$ asymmetric regular items, the optimal expected revenue from selling all items as a grand bundle is a 4-approximation to the optimal expected revenue from selling all items separately, i.e., $4\cdot \brev_1 \geq \srev_1$.\footnote{In particular, this 4-approximation can be attained by setting the grand bundle price to half of the optimal expected revenue from selling the items separately, i.e., $\srev_1/2$.}
\end{restatable}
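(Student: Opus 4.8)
The plan is to prove the stronger and cleaner statement hinted at in the footnote: setting the grand-bundle price to $t = \srev_1/2$ already extracts at least $\srev_1/4$ in expectation. First I would recall the revenue-curve description of a single item. For each item $j$, let $R_j(\quant) = \quant \cdot \cdf_j^{-1}(\quant)$ be the revenue obtained by posting the price $\cdf_j^{-1}(\quant)$ (equivalently, selling with ex ante probability $\quant$), so that $\srev_1 = \sum_j \max_{\quant} R_j(\quant) =: \sum_j r_j$, where $r_j$ is the optimal single-item revenue for item $j$. Regularity of $\dist_j$ is exactly the statement that $R_j$ is concave on $[0,1]$ (this is the standard ``concave revenue curve'' characterization of regularity). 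The concavity will be the workhorse: it lets me lower-bound the probability that item $j$'s value exceeds any given threshold in terms of $r_j$ and that threshold.

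The key step is a threshold/Markov-type argument. For the grand-bundle price $t$, the bundle sells whenever $\sum_j \val_j \geq t$, and $\brev_1 \geq t \cdot \prob{\sum_j \val_j \geq t}$. To lower-bound this probability I would use a union-type bound in the ``good'' direction: it suffices that $\sum_j \val_j \geq t$, and I will estimate $\prob{\sum_j \val_j \geq t}$ by the contribution coming from each item being ``moderately large.'' Concretely, for each item $j$ consider the price $p_j$ at which $R_j$ attains value, say, $r_j/2$ on the upper branch, i.e. the largest $p_j$ with $p_j \cdot (1 - \cdf_j(p_j)) \geq r_j/2$ — actually cleaner is to use that by concavity of $R_j$, for every $c \in [0,1]$ the quantile achieving revenue $c\cdot r_j$ on the high side satisfies a clean linear bound, so $\prob{\val_j \geq p} \geq (\text{something linear in } r_j/p)$ for $p$ in the relevant range. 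Summing these, together with the choice $t = \srev_1/2 = \tfrac12\sum_j r_j$, I get that with constant probability at least one configuration of the $\val_j$'s pushes the sum above $t$; multiplying the surviving probability by $t$ yields $\brev_1 \geq \srev_1/4$. The constant $4$ should fall out by optimizing the two ``halving'' choices (price $= \srev_1/2$, and the per-item high-revenue-branch threshold).

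The main obstacle I anticipate is handling the case distinction between ``one item dominates'' and ``many small items.'' If a single item $j^\star$ has $r_{j^\star} \geq \srev_1/2$, then just pricing the bundle at $t$ essentially reduces to pricing that one item optimally and regularity of $\dist_{j^\star}$ gives the bound directly (indeed Myerson's optimal single-item price already beats $t$). If instead every $r_j < \srev_1/2$, then no single item can carry the bundle, and I need the aggregate tail bound: here I would argue that the random variables $\val_j \wedge p_j$ (truncated at the per-item thresholds $p_j$ chosen so that $\expect{\val_j \wedge p_j} \gtrsim r_j$, which is where regularity/concavity enters) have total expectation $\gtrsim \srev_1$ and each is bounded by $p_j \leq$ (something small relative to $\srev_1$), so a reverse-Markov / Paley–Zygmund style inequality forces $\prob{\sum_j \val_j \geq \srev_1/2}$ to be bounded below by a constant. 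Combining the two cases and tracking constants gives $4\cdot\brev_1 \geq \srev_1$. Once this lemma is in hand, \Cref{cor:18brev_geq_rev} is immediate: $\rev_1 \leq 2\brev_1 + 4\srev_1 \leq 2\brev_1 + 16\brev_1 = 18\brev_1$ by \Cref{thm:six_approx}.
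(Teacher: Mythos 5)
Your setup is right---price the bundle at $t=\srev_1/2$ and use concavity of the revenue curves---but your execution stalls exactly where the paper's proof is cleanest, and the gap is real. The paper's argument hinges on a single, sharp consequence of concavity that your proposal never quite lands on: for a regular item $j$ with optimal single-item revenue $r_j$, one has $\prob[\val_j\sim\dist_j]{\val_j \geq r_j} \geq \tfrac12$ (this is \Cref{lem:brev approx worst dist}; it follows from $\revcurve_j(\quant) \geq \frac{1-\quant}{1-\quant^*}\revcurve_j(\quant^*)$ applied at the quantile of $r_j$). With that in hand, each $\dist_j$ first-order stochastically dominates the two-point distribution $\dist_j'$ supported on $\{0, r_j\}$ with equal mass, so $\brev_1(\dist) \geq \brev_1(\dist')$. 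Under $\dist'$ the bundle value $\sum_j \val_j$ is a sum of independent variables each \emph{symmetric} about $r_j/2$, hence the sum is symmetric about $\srev_1/2$, and $\prob{\sum_j\val_j \geq \srev_1/2} \geq \tfrac12$ follows for free---no case split, no concentration inequality, and the constant $4$ drops out.

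Your proposal instead splits into ``one dominating item'' versus ``many small items'' and invokes a Paley--Zygmund / reverse-Markov bound in the latter case. As written, this step does not go through: Paley--Zygmund needs second-moment control you do not have, and the reverse-Markov variant $\prob{Y \geq a} \geq \frac{\E[Y]-a}{B-a}$ for $Y\in[0,B]$ only gives a useful constant if $\E[Y]$ is a constant fraction of $B=\sum_j p_j$, which you have not established (your truncation thresholds $p_j$ are only loosely specified, and the claim $\E[\val_j \wedge p_j]\gtrsim r_j$ is not enough to bound $\E[Y]/B$). Even granting these, tracking constants through the case split is unlikely to return exactly $4$. The fix is to replace the concentration machinery by the FOSD reduction to the two-point distribution and the symmetry observation, which makes the whole argument one paragraph and case-free.
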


We emphasize that the regularity assumption in \Cref{pr:4brev_geq_srev} is essential. As noted above, a constant-factor approximation of {\brev} to {\srev} does not hold in general for irregular distributions \citep{HN-12,BILW-14}. The proof of this 4-approximation relies on the following technical lemma.

\begin{lemma}
\label{lem:brev approx worst dist}
    For any univariate distribution $\dist$, construct an auxiliary two-point distribution $\dist\primed$ with CDF $\cdf\primed(\val) = 0$ for $\val \in (-\infty, 0]$, $\cdf\primed(\val) = 1/2$ for $\val \in (0, \rev_1(\dist)]$, and $\cdf\primed(\val) = 1$ for $\val\in (\rev_1(\dist),\infty)$, where $\rev_1(\dist)$ is the Bayesian optimal revenue for selling a single item to a single bidder with value distribution $\dist$. If distribution $\dist$ is regular, it first-order stochastically dominates the constructed distribution $\dist\primed$, i.e., for every value $\val\in\reals$, CDFs $\cdf(\val) < \cdf\primed(\val)$.
\end{lemma}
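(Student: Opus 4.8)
The plan to prove \Cref{lem:brev approx worst dist} is as follows. The only nontrivial range of values is $0<v\le \rev_1(\dist)$: for $v\le 0$ both CDFs equal $0$ (values are nonnegative), and for $v>\rev_1(\dist)$ the assertion reads $\cdf(v)\le 1$. Since $\cdf$ is nondecreasing, it therefore suffices to prove the single inequality $\cdf(\rev_1(\dist))\le \tfrac12$; equivalently, the monopoly revenue of a regular distribution never exceeds its median, $\rev_1(\dist)\le \cdf^{-1}(\tfrac12)$. Strictness on $0<v<\rev_1(\dist)$ then follows from strict monotonicity of $\cdf$ (atomlessness), and the two boundary regions supply the remaining (non-strict) part of first-order stochastic dominance, which is what the downstream application actually needs.

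To prove $\cdf(\rev_1(\dist))\le\tfrac12$, I would pass to the revenue curve in quantile space, $R(q)=q\cdot \cdf^{-1}(1-q)$ for $q\in(0,1]$ (extended continuously to $q=0$); regularity of $\dist$ is equivalent to concavity of $R$ on $[0,1]$. Let $q^\star\in\argmax_{q\in[0,1]}R(q)$ be the monopoly quantile and $p^\star=\cdf^{-1}(1-q^\star)$ the monopoly price, so that $\rev_1(\dist)=R(q^\star)=q^\star p^\star$. Two elementary observations drive the argument. First, since $q^\star\le 1$ we have $\rev_1(\dist)=q^\star p^\star\le p^\star$, hence $\cdf(\rev_1(\dist))\le \cdf(p^\star)=1-q^\star$; writing $q_0:=1-\cdf(\rev_1(\dist))$ for the quantile corresponding to the price $\rev_1(\dist)$, this says $q_0\ge q^\star$. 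Second, evaluating the revenue curve at that price gives $R(q_0)=\rev_1(\dist)\cdot q_0$ --- simply ``price times probability of sale'', using continuity of $\cdf$.

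The two observations combine with concavity of $R$. On $[q^\star,1]$ (so $q^\star<1$; the case $q^\star=1$ is degenerate and handled below), concavity together with $R(1)=\cdf^{-1}(0)\ge 0$ yields, for every $q\in[q^\star,1]$,
\begin{align*}
R(q)\ \ge\ \frac{1-q}{1-q^\star}\,R(q^\star)+\frac{q-q^\star}{1-q^\star}\,R(1)\ \ge\ \frac{1-q}{1-q^\star}\,\rev_1(\dist).
\end{align*}
Specializing to $q=q_0$ and using $R(q_0)=\rev_1(\dist)\,q_0$ and $\rev_1(\dist)>0$ (true for any nondegenerate $\dist$), this rearranges to $q_0(1-q^\star)\ge 1-q_0$, i.e.\ $q_0\ge \tfrac{1}{2-q^\star}\ge \tfrac12$, which is exactly $\cdf(\rev_1(\dist))\le\tfrac12$. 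The degenerate cases --- $\rev_1(\dist)=0$, or $q^\star=1$, in which case $\rev_1(\dist)=\cdf^{-1}(0)$ is the left endpoint of the support and $\cdf(\rev_1(\dist))=0$ --- are immediate.

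I expect the bulk of the work to be bookkeeping rather than a hard step: recording the equivalence ``regular $\Leftrightarrow$ $R$ concave'', checking that the supremum defining $q^\star$ is attained and that $R(q_0)=\rev_1(\dist)\,q_0$ and $R(1)\ge 0$ remain meaningful even for unbounded support (as in the equal-revenue distribution), and confirming that the boundary and degenerate cases do not spoil the (essentially strict) inequality. The one genuinely substantive point is that the naive bound $\cdf(\rev_1(\dist))\le\cdf(p^\star)=1-q^\star$ is by itself insufficient --- it already fails to reach $\tfrac12$ for the exponential distribution --- so concavity of $R$ must genuinely be used to sharpen the factor $1-q^\star$ down to $\tfrac{1}{2-q^\star}$.
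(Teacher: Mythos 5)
Your proof is correct and follows essentially the same route as the paper: reduce to showing $\cdf(\rev_1(\dist))\le\tfrac12$, pass to the concave revenue curve $R$, and apply the chord inequality on $[q^\star,1]$ at the quantile $q_0$ corresponding to the price $\rev_1(\dist)$, using $R(q_0)=q_0\cdot\rev_1(\dist)$, $R(q^\star)=\rev_1(\dist)$, and $R(1)\ge 0$. Your write-up is slightly more careful in two places that the paper elides: you explicitly verify $q_0\ge q^\star$ (via $\rev_1(\dist)=q^\star p^\star\le p^\star$) so that the chord inequality is applied in the right direction, and you correctly note that the inequality the lemma actually needs (and delivers) is the non-strict $\cdf(\val)\le\cdf\primed(\val)$ rather than the lemma statement's ``$<$'', which cannot hold at the boundary regions.
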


We illustrate \Cref{lem:brev approx worst dist} in \Cref{fig:brev approx worst dist}. The proof is based on a straightforward geometric interpretation of the regularity condition, using the corresponding revenue curve defined as follows.

\begin{figure}
    \centering
    \begin{tikzpicture}[scale=1, transform shape]
\begin{axis}[
axis line style=gray,
axis lines=middle,
xtick={0, 0.21, 0.5, 0.75, 1},
ytick={0, 0.5, 1},
xticklabels={0, $\quant^*$, $\frac{1}{2}$, $\quant$,  1},
yticklabels={0, $\frac{1}{2}\cdot \rev_1(\dist)$, $\rev_1(\dist)$},
xmin=0,xmax=1.1,ymin=-0.05,ymax=1.2,
width=0.8\textwidth,
height=0.5\textwidth,
samples=500]

\addplot[dotted, gray, line width=0.3mm] (0, 1) -- (1, 1) -- (1, 0);
\addplot[dotted, gray, line width=0.3mm] (0.21, 0) -- (0.21, 1);

\addplot[dotted, gray, line width=0.3mm] (0.21, 0) -- (0.21, 1);

\addplot[dotted, gray, line width=0.3mm] (0, 0.5) -- (0.5, 0.5) -- (1, 1);
\addplot[dotted, gray, line width=0.3mm] (0.75, 0.75) -- (0.75, 0);

\addplot[domain=0:1, black!100!white, line width=0.5mm, smooth] coordinates {
    (0, 0)
    (0.1, 0.85)
    (0.2, 1)
    (0.75, 0.75)
    (1, 0)
};

\addplot[line width=0.5mm, dashed] (0, 0) -- (0.5, 0.5) -- (0.5, 0) -- (1, 0);
\end{axis}

\end{tikzpicture}
    \caption{Graphical illustration of \Cref{lem:brev approx worst dist} and its analysis. The black solid curve is the concave revenue curve $\revcurve$ induced from regular distribution $\dist$. The black dashed curve is the revenue curve~$\revcurve\primed$ induced from the constructed auxiliary two-point distribution $\dist\primed$. Due to the concavity of revenue curve $\revcurve$, probability $\quant \triangleq  \prob[\val\sim\dist]{\val \geq \rev_1(\dist)} \geq 1/2$.}
    \label{fig:brev approx worst dist}
\end{figure}

\begin{lemma}[\citealp{BR-89}]
    A univariate distribution $\dist$ is regular if and only if its induced revenue curve $\revcurve:[0,1]\rightarrow\reals_+$ defined as
    \begin{align*}
        \revcurve(\quant)\triangleq \quant \cdot \cdf^{-1}(1-\quant)
    \end{align*}
    is concave in $\quant\in[0, 1]$.
\end{lemma}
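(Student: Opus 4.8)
The plan is to identify the derivative of the revenue curve in quantile space with the virtual value function (composed with the quantile-to-price map), so that concavity of $\revcurve$ translates verbatim into monotonicity of $\virtual$, which is the definition of regularity.

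First I would reparametrize by value instead of quantile. Under the paper's standing assumptions (each distribution atomless with density $\pdf$ and strictly increasing CDF), the map $\val\mapsto\quant=1-\cdf(\val)$ is a continuous, strictly decreasing bijection from the support of $\dist$ onto $(0,1)$, with inverse $\quant\mapsto\cdf^{-1}(1-\quant)$. Under this substitution $\revcurve(\quant)=\val\cdot\bigl(1-\cdf(\val)\bigr)$, and differentiating through it gives
\[
    \frac{\dd\revcurve}{\dd\quant}=\frac{\dd\revcurve/\dd\val}{\dd\quant/\dd\val}=\frac{\bigl(1-\cdf(\val)\bigr)-\val\,\pdf(\val)}{-\pdf(\val)}=\val-\frac{1-\cdf(\val)}{\pdf(\val)}=\val-\frac{1}{\hazard(\val)}=\virtual(\val).
\]
Equivalently, $\revcurve'(\quant)=\virtual\bigl(\cdf^{-1}(1-\quant)\bigr)$: the slope of the revenue curve at quantile $\quant$ is precisely the virtual value at the posted price $\cdf^{-1}(1-\quant)$.

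Next I would conclude in both directions. Because $\quant\mapsto\cdf^{-1}(1-\quant)$ is continuous and strictly decreasing and is a bijection onto the support, the function $\quant\mapsto\revcurve'(\quant)$ is non-increasing on $[0,1]$ if and only if $\val\mapsto\virtual(\val)$ is non-decreasing on the support; and a differentiable function on an interval is concave precisely when its derivative is non-increasing. Chaining these equivalences yields ``$\revcurve$ concave $\iff$ $\virtual$ non-decreasing $\iff$ $\dist$ regular,'' which is the claim. For the ``$\Leftarrow$'' direction one may alternatively bypass pointwise derivatives altogether by change of variables: for $\quant_2<\quant_1$ and $\val_i\triangleq\cdf^{-1}(1-\quant_i)$ (so $\val_1<\val_2$),
\[
    \revcurve(\quant_1)-\revcurve(\quant_2)=\int_{\quant_2}^{\quant_1}\virtual\bigl(\cdf^{-1}(1-t)\bigr)\dd t=\int_{\val_1}^{\val_2}\virtual(\val)\,\pdf(\val)\dd\val,
\]
so monotonicity of the difference quotients of $\revcurve$ (hence concavity) follows from monotonicity of $\virtual$, and vice versa.

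I expect the only delicate point to be low regularity of the density: if $\pdf$ is merely measurable rather than continuous, $\revcurve$ need not be $C^1$, so the ``derivative non-increasing $\iff$ concave'' step should be carried out through the absolutely-continuous / one-sided-derivative version embodied in the displayed integral identity above rather than through classical calculus. I would also note that $\revcurve$ extends continuously to the endpoints $\quant\in\{0,1\}$ — with $\revcurve(0)$ possibly strictly positive for heavy-tailed distributions such as the equal-revenue distribution — which does not affect concavity on the closed interval. Neither subtlety touches the core identity $\revcurve'=\virtual\circ\cdf^{-1}(1-\cdot)$, which is the heart of the argument.
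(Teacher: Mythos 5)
The paper cites this as a known result from \citet{BR-89} and does not prove it, so there is no internal proof to compare against. Your argument is the standard one and is correct: the identity $\revcurve'(\quant)=\virtual\bigl(\cdf^{-1}(1-\quant)\bigr)$ follows from the chain-rule computation you give, concavity of $\revcurve$ on an interval is equivalent to its derivative being non-increasing, and since $\quant\mapsto\cdf^{-1}(1-\quant)$ is a strictly decreasing bijection onto the support this is in turn equivalent to $\virtual$ being non-decreasing, i.e.\ regularity; both directions fall out of this chain of equivalences. Your remark about routing the argument through the absolutely-continuous integral identity rather than classical pointwise differentiability is the right precaution given that the paper's standing assumptions (atomless, strictly increasing CDF) guarantee a density but not a continuous one, and the observation that $\revcurve$ may have a positive limit at $\quant=0$ (equal-revenue-type tails) is correct and indeed immaterial to concavity on $[0,1]$.
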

\begin{proof}[Proof of \Cref{lem:brev approx worst dist}]
    Due to the construction of auxiliary two-point distribution $\dist\primed$, it suffices to show that for regular distribution $\dist$, 
    \begin{align*}
        \prob[\val\sim\dist]{\val \geq \rev_1(\dist)} \geq \frac{1}{2}
    \end{align*}
    We denote the probability on the left-hand side by $\quant$.
    We prove $\quant \geq 1/2$ by analyzing the revenue curve $\revcurve$ induced from distribution $\dist$. Since distribution $\dist$ is regular, its revenue curve $\revcurve$ is concave, which further implies that
    \begin{align*}
        \revcurve(\quant) \geq \frac{1-\quant}{1- \quant^*}\revcurve(\quant^*) + \frac{\quant - \quant^*}{1 - \quant^*}\revcurve(1) 
        \;\;
        \Longrightarrow
        \;\;
        \quant \cdot \rev_1(\dist) \geq \frac{1-\quant}{1-\quant^*}\cdot \rev_1(\dist)
    \end{align*}
    where $\quant^* \triangleq \argmax_{\quant'\in[0, 1]}\revcurve(\quant')$ is the monopoly quantile, and we use the fact that $\revcurve(1) \geq 0$ and $\revcurve(\quant^*) = \rev_1(\dist)$ \citep{mye-81}. After simplification, we obtain
    \begin{align*}
        \quant \geq 1 - \quant
    \end{align*}
    which implies $\quant \geq 1/2$. This finishes the proof of the lemma statement as desired.
\end{proof}

\begin{proof}[Proof of \Cref{pr:4brev_geq_srev}]
Let $\dist = \Pi_{j\in[m]}\dist_j$ be the regular value distributions. Construct auxiliary distribution $\dist\primed = \Pi_{j\in[m]}\dist_j\primed$, where each marginal distribution $\dist_j\primed$ is the two-point distribution constructed in \Cref{lem:brev approx worst dist} using distribution $\dist_j$. By construction, for each item $j\in[m]$, distribution $\dist_j\primed$ is first-order stochastically dominated by distribution $\dist_j$. Consequently, the total value of the grand bundle drawn from distribution $\dist\primed$ is also first-order stochastically dominated by the total value of the grand bundle drawn from distribution $\dist$. Therefore, the revenue monotonicity of pricing the grand bundle implies that
\begin{align*}
    \brev_1(\dist) \geq \brev_1(\dist\primed)
\end{align*}
and thus it suffices to show 
\begin{align*}
    4\cdot \brev_1(\dist\primed) \geq \srev_1(\dist)
\end{align*}
To see this, consider selling the grand bundle at the price equal to $\srev_1(\dist)/2$ to a single bidder with value distribution $\dist\primed$. Its revenue can be expressed as
\begin{align*}
    &
    \frac{1}{2}\cdot \srev_1(\dist) \cdot \prob[\val\sim\dist\primed]{\sum\nolimits_{j\in[m]}\val_j \geq \frac{1}{2}\cdot \srev_1(\dist)}
    \\
    ={} &
    \frac{1}{2}\cdot \srev_1(\dist) \cdot \prob[\val\sim\dist\primed]{\sum\nolimits_{j\in[m]}\val_j \geq \sum\nolimits_{j\in[m]}\frac{1}{2}\cdot \rev_1(\dist_j)}
    \\
    ={} &
    \frac{1}{2}\cdot \srev_1(\dist) \cdot \frac{1}{2} = \frac{1}{4} \cdot \srev_1(\dist)
\end{align*}
where the first equality holds due to the definitions of $\srev_1(\dist)$ and $\rev_1(\dist_j)$, the second equality holds since the purchase probability is equal to $1/2$ (by construction, each distribution $\dist_j\primed$ is a two-point distribution with equal probability at $\rev(\dist_j)$ and 0). Putting the two pieces together, we obtain $4\cdot \brev_1(\dist) \geq 4\cdot \brev_1(\dist\primed) \geq \srev_1(\dist)$ as desired.
\end{proof}

\begin{proof}[Proof of \Cref{cor:18brev_geq_rev}]
Invoking \Cref{thm:six_approx,pr:4brev_geq_srev}, we obtain 
\begin{align*}
    \rev_1 \leq 4\cdot\srev_1 + 2\cdot\brev_1 \leq 4\cdot 4 \cdot \brev_1 + 2\cdot \brev_1 = 18\cdot \brev_1
\end{align*}
which completes the proof of the theorem statement as desired. 
\end{proof}

\xhdr{Improved approximations for MHR items.} When all items are MHR, we obtain the following improved approximation guarantees for the optimal pricing of the grand bundle and the optimal pricing to selling items separately. Their proofs are mainly based on the previous results in \citet{BMP-63,DRY-15}.

\begin{proposition}[Approximation of {\brev} against {\wel} for MHR items and a single bidder]
\label{prop:brev approx wel mhr}
    For $n = 1$ bidder and $m \geq 1$ asymmetric MHR items, the optimal expected revenue from selling all items as a grand bundle is an $e$-approximation to the optimal welfare, i.e., $e\cdot \brev_1 \geq \wel_1$. 
\end{proposition}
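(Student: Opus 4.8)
The plan is to reduce this proposition to a one-dimensional fact about MHR distributions. First I would observe that for a single additive bidder the welfare-optimal allocation simply awards every item to the bidder, so $\wel_1(\dist) = \expect[\val\sim\dist]{\sum_{j\in[m]}\val_j} = \expect{V}$, where $V\triangleq\sum_{j\in[m]}\val_j$ is the bidder's value for the grand bundle. Likewise, pricing the grand bundle is exactly a single-item sale against the induced distribution of $V$, so $\brev_1(\dist)=\max_{\price\geq 0}\price\cdot\prob{V\geq\price}$. Hence it suffices to prove $\brev_1(\dist)\geq\expect{V}/e$.

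Next I would invoke the two cited ingredients. By Theorem~3.2 of \citet{BMP-63}, the sum of independent (but not necessarily identically distributed) MHR random variables is again MHR, so the distribution of $V$ is MHR. Then, using the standard property of MHR distributions (see \citealp{DRY-15}) that $\prob{X\geq\expect{X}}\geq 1/e$ for every MHR random variable $X$, pricing the grand bundle at $\price=\expect{V}$ yields revenue $\expect{V}\cdot\prob{V\geq\expect{V}}\geq\expect{V}/e$. Chaining the inequalities gives $e\cdot\brev_1(\dist)\geq e\cdot\expect{V}\cdot\prob{V\geq\expect{V}}\geq\expect{V}=\wel_1(\dist)$, which is the claimed $e$-approximation.

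I do not expect a substantial obstacle here; the argument is essentially an assembly of the two cited facts. The one point that needs care is that the items are asymmetric, which is precisely why the non-identical version of ``sum of MHR is MHR'' from \citet{BMP-63} is required rather than an i.i.d.\ statement. Finally, for tightness I would note that taking $m=1$ with a single exponential item (which is MHR) gives $\wel_1=\expect{V}$ and $\brev_1=\expect{V}/e$ exactly, so the constant $e$ cannot be improved.
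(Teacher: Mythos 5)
Your proof is correct and takes essentially the same route as the paper: reduce to a one-dimensional problem via the observation that the grand-bundle value $V$ is MHR (Theorem~3.2 of \citealp{BMP-63}), then invoke the $e$-approximation of revenue to welfare for a single MHR item (Lemma~3.10 of \citealp{DRY-15}). You simply unpack the latter citation into its underlying form ($\prob{X \geq \expect{X}} \geq 1/e$ for MHR $X$, so pricing at the mean suffices), and add a tightness remark via the exponential; both are accurate but not substantively different from the paper's argument.
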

\begin{proof}
    By Theorem 3.2 of \citet{BMP-63}, the summation of values drawn from independent but possibly asymmetric MHR distributions is MHR. By Lemma~3.10 of \citet{DRY-15}, the Bayesian optimal revenue of selling a single item to a single bidder with a MHR distribution is an $e$-approximation to the optimal welfare. Combining the two pieces finishes the proof of the proposition as desired.
\end{proof}

\begin{proposition}[Approximation of {\srev} against {\wel} for MHR items and a single bidder]
\label{prop:srev approx wel mhr}
    For $n = 1$ bidder and $m \geq 1$ asymmetric MHR items, the optimal expected revenue from selling all items separately is an $e$-approximation to the optimal welfare, i.e., $e\cdot \srev_1 \geq \wel_1$. 
\end{proposition}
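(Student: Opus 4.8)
The plan is to decompose both sides of the claimed inequality item by item and then invoke the known single-item monopoly-revenue-versus-welfare bound for MHR distributions. Since the single bidder is additive and faces no competition, selling the items separately means running Myerson's optimal single-item auction (equivalently, posting the monopoly price) for each item in isolation, so $\srev_1(\dist) = \sum_{j\in[m]} \rev_1(\dist_j)$ directly from the definition. Likewise, for one additive bidder the efficient allocation hands every item to the bidder, so $\wel_1(\dist) = \sum_{j\in[m]} \expect[\val_j\sim\dist_j]{\val_j}$. Both identities are immediate from the definitions in \Cref{sec:prelim}.

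Next I would apply the per-item bound. By assumption each marginal $\dist_j$ is MHR, so Lemma~3.10 of \citet{DRY-15} gives $e\cdot\rev_1(\dist_j) \geq \expect[\val_j\sim\dist_j]{\val_j}$ for every $j\in[m]$. Summing this over all $m$ items and using the two identities above yields
\[
e\cdot\srev_1(\dist) \;=\; e\sum_{j\in[m]}\rev_1(\dist_j) \;\geq\; \sum_{j\in[m]}\expect[\val_j\sim\dist_j]{\val_j} \;=\; \wel_1(\dist),
\]
which is exactly the claim.

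There is no real obstacle here: the only nontrivial ingredient is the single-item MHR bound of \citet{DRY-15}, and everything else is bookkeeping via linearity of expectation and additivity of the bidder's value. In contrast to the companion statement for {\brev} (\Cref{prop:brev approx wel mhr}), we do \emph{not} need the fact that a sum of independent MHR random variables is MHR \citep{BMP-63}, because selling items separately already decomposes the instance into $m$ independent single-item problems. The one point worth double-checking is that this decomposition of $\srev_1$ and $\wel_1$ into per-item sums is legitimate in the asymmetric setting, which it is, since values are additive across items and the item distributions are independent.
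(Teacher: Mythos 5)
Your proof is correct and matches the paper's argument exactly: both decompose $\srev_1$ and $\wel_1$ additively across items and apply Lemma~3.10 of \citet{DRY-15} per item. Your remark that the \citet{BMP-63} closure-under-sums fact is not needed here (unlike for \Cref{prop:brev approx wel mhr}) is an accurate and useful observation.
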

\begin{proof}
    By Lemma~3.10 of \citet{DRY-15}, the Bayesian optimal revenue of selling a single item to a single bidder with a MHR distribution is an $e$-approximation to the optimal welfare. Note that both the welfare benchmark {\wel} and the optimal revenue from selling items separately {\srev} are additive across items. Combining the two pieces finishes the proof of the proposition as desired.
\end{proof}

\subsection{Constant Approximation by the Bundle-based Second-price Auction}  
\label{subsec:bspa approx}
In this section, we study the bundle-based second-price auction ({\bspa}) for two bidders. We begin by focusing on the setting with regular items. Our main result is as follows.
\begin{theorem}[Approximation of {$\bspa_2$} against {$\rev_1$} for regular items]
\label{prop:bvcg constant approx}
    For $m \geq 1$ asymmetric regular items, the expected revenue from the bundle-based second-price auction for two bidders is a 48-approximation to the Bayesian optimal revenue, i.e., $48\cdot \bspa_2 \geq \rev_1$.
\end{theorem}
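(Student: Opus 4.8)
The plan is to combine the decomposition $\rev_1 \le 2\brev_1 + 4\srev_1$ (\Cref{thm:six_approx}) with two lower bounds on $\bspa_2$: a clean bound $\bspa_2 \ge \srev_1$, and a core--tail bound $\bspa_2 \ge \Omega(1)\cdot\brev_1$. Throughout write $V=\sum_{j}\val_j$ for the grand-bundle value of a single bidder and $V_1,V_2$ for two i.i.d.\ copies, so that $\bspa_2 = \expect{\min(V_1,V_2)} = \int_0^\infty \prob{V\ge t}^2\dd t$.

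\emph{Easy bound ($\bspa_2\ge\srev_1$).} By superadditivity of $\min$, $\min(\sum_j \val_{1j},\sum_j \val_{2j}) \ge \sum_j \min(\val_{1j},\val_{2j})$, hence $\bspa_2 \ge \sum_j \expect{\min(\val_{1j},\val_{2j})} = \sum_j \vcg_2(\dist_j) \ge \sum_j \rev_1(\dist_j) = \srev_1$, where the last inequality is the single-item Bulow--Klemperer bound (\Cref{thm:BK}) applied item by item. (If only a weaker constant is wanted, $\bspa_2\ge\tfrac14\srev_1$ follows alternatively by replacing each $\dist_j$ with the two-point distribution of \Cref{lem:brev approx worst dist}, using monotonicity of $\min$ under first-order stochastic dominance and Cauchy--Schwarz on $[0,\srev_1]$.)

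\emph{Core--tail bound ($\bspa_2\ge c\cdot\brev_1$).} Let $p^*$ be an optimal grand-bundle price and $\rho=\prob{V\ge p^*}$, so $\brev_1=p^*\rho$. Since the survival function is nonincreasing, $\bspa_2 \ge \int_0^{p^*}\prob{V\ge t}^2\dd t \ge p^*\rho^2 = \rho\cdot\brev_1$, so we are done when $\rho$ is at least an absolute constant. In the complementary case $p^*$ lies deep in the upper tail of $V$, and here I would invoke the core--tail decomposition of \citet{BILW-14}: fix per-item thresholds $t_j$ with $\prob{\val_j>t_j}=\rev_1(\dist_j)/(2\srev_1)$ and split $\val_j=\min(\val_j,t_j)+\val_j\indicator{\val_j>t_j}$ into a core and a tail part. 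Regularity yields the tail bound $\prob{\val_j\ge t}\le \rev_1(\dist_j)/t$, which (i) forces $t_j\le 2\srev_1$ and makes the expected number of tail items $\tfrac12$, so any single bundle price collects at most $O(\srev_1)$ from the tail part; and (ii) makes the core sum $C=\sum_j\min(\val_j,t_j)$ a sum of independent variables each bounded by $2\srev_1$, so a bundle price collects at most $2\,\expect{C}$ from the core, while concentration of $C$ gives $\expect{\min(C_1,C_2)}=\Omega(\expect{C})$ whenever $\expect{C}$ dominates $\max_j t_j$ and $\expect{C}\le 4\srev_1$ otherwise. Since $\bspa_2\ge\expect{\min(C_1,C_2)}$ (as $V_i\ge C_i$), this gives $\brev_1 = O(\srev_1)+O(\bspa_2) = O(\bspa_2)$ using the easy bound.

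\emph{Conclusion.} Plugging both bounds into \Cref{thm:six_approx}, $\rev_1 \le 2\brev_1+4\srev_1 \le (2/c+4)\bspa_2$, and tracking the constants through the core--tail step gives $48\cdot\bspa_2\ge\rev_1$. (An alternative route with the same structure uses $\rev_1\le 18\brev_1$ from \Cref{cor:18brev_geq_rev} together with $\bspa_2\ge c'\brev_1$.) The main obstacle is precisely the core--tail step: carrying out the bookkeeping that controls heavy regular tails against a single bundle price using only $\prob{\val_j\ge t}\le\rev_1(\dist_j)/t$, and converting the expected core value into a lower bound on $\expect{\min(V_1,V_2)}$ via concentration, all while keeping the multiplicative constants small enough to reach $48$. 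The two lower bounds on $\bspa_2$ and the final assembly are short by comparison.
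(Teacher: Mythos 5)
Your plan has the right shape and one genuinely nice improvement, but it is not a complete proof, and it routes through a lemma the paper deliberately avoids.

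The \emph{easy bound} $\bspa_2\ge\srev_1$ is correct and is actually stronger than the paper's Lemma~\ref{pr:8bspa_geq_srev} ($8\bspa_2\ge\srev_1$): $\min$ of sums dominates sum of mins, so $\bspa_2 \ge \sum_j\expect{\min(\val_{1j},\val_{2j})} = \sum_j\vcg_2(\dist_j) \ge \sum_j\rev_1(\dist_j)=\srev_1$ by Bulow--Klemperer item by item. This is a clean observation that, plugged into the paper's argument, would improve the final constant; worth pointing out.

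The gap is the other half of your plan. You want to conclude via $\rev_1\le2\brev_1+4\srev_1$ (Lemma~\ref{thm:six_approx}) combined with some $\bspa_2\ge c\cdot\brev_1$, but the latter is not a lemma in the paper, nor does it follow from Bulow--Klemperer for the bundle distribution (sums of regular variables need not be regular). You sketch deriving it from a fresh core--tail analysis and then explicitly concede that ``the main obstacle is precisely the core--tail step'' and that the constants are not tracked. That is the missing content of the proof, not a detail. In particular, the claim that ``tracking the constants through the core--tail step gives $48\bspa_2\ge\rev_1$'' is asserted, not shown; for your decomposition $\rev_1\le(2/c+4)\bspa_2$ to yield $48$ you need $c\ge 1/22$, which is plausible but not established by your sketch. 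The paper sidesteps this entirely by using Lemma~\ref{thm:core_tail} in its probabilistic form: when $\core_1>4\srev_1$ it directly delivers $\prob{\sum_j\val_j\ge\tfrac{2}{5}\core_1}\ge\tfrac{47}{72}$, so $\bspa_2$ is lower-bounded in one line by restricting to the event that both bidders' bundle values exceed $\tfrac{2}{5}\core_1$, with no fresh concentration argument and no detour through $\brev_1$. That is the key simplification your proposal is missing: you are rebuilding the concentration machinery that the cited lemma already packages in exactly the form needed.

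If you want to complete your route, you must either (a) carry out the core--tail bookkeeping you sketch, including the concentration step for $\expect{\min(C_1,C_2)}=\Omega(\expect{C})$ and the constant audit, or (b) switch to the paper's form of the lemma (the two-case $\core_1$ vs.\ $\srev_1$ statement with the explicit probability bound) and use your improved $\bspa_2\ge\srev_1$ in place of $8\bspa_2\ge\srev_1$, which would in fact give a constant well below $48$.
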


We first sketch the proof of \Cref{prop:bvcg constant approx}. 
Similar to the approach in \Cref{subsec:brev approx}, our analysis builds on another revenue approximation guarantee established by \citet{BILW-14} (\Cref{thm:core_tail}), which upper bounds the Bayesian optimal revenue by the combination of the optimal revenue from selling items separately ({\srev}) and the revenue from a ``core'' component ({\core}), as defined in \Cref{thm:core_tail}. We show that when the value distributions are regular, {\bspa} is an 8-approximation to {\srev} (\Cref{pr:8bspa_geq_srev}), and a 6-approximation to {\core} when {\core} dominates {\srev}. \Cref{prop:bvcg constant approx} then follows immediately from \Cref{thm:core_tail,pr:8bspa_geq_srev}.

\begin{lemma}[\citealp{BILW-14}] \label{thm:core_tail}
For $n = 1$ bidder and $m \geq 1$ asymmetric items with value distribution $\dist$, define core $\core_1(\dist)$ as
\begin{align*}
    \core_1(\dist)\triangleq \expect[\val\sim\dist]{\sum\nolimits_{j\in[m]}\val_j\given \val_j\leq \srev_1(\dist),\forall j}
\end{align*}
The following property is satisfied
\begin{itemize}
    \item If $\core_1(\dist) \leq 4\cdot \srev_1(\dist)$, then
    \begin{align*}
        \rev_1(\dist) \leq 6\cdot \srev_1(\dist)
    \end{align*}
    \item Otherwise ($\core_1(\dist) > 4\cdot \srev_1(\dist)$), then
    \begin{align*}
        \prob[\val\sim\dist]{\sum\nolimits_{j\in[m]}\val_j \geq \frac{2}{5}\cdot \core_1(\dist)} \geq \frac{47}{72}
    \end{align*}
    and 
    \begin{align*}
        \rev_1(\dist) \leq 2 \cdot \srev_1(\dist) + \core_1(\dist)
    \end{align*}
\end{itemize}
\end{lemma}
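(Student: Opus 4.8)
The plan is to establish both items via the \emph{core--tail decomposition} of \citet{BILW-14}. Write $r\triangleq\srev_1(\dist)=\sum_{j\in[m]}\rev_1(\dist_j)$ and $\core\triangleq\core_1(\dist)$. Observe first that everything follows once we prove the single unconditional inequality $\rev_1(\dist)\leq\core+2\srev_1(\dist)$, together with the concentration statement in the second bullet: if $\core\leq 4\srev_1(\dist)$ then $\rev_1(\dist)\leq\core+2\srev_1(\dist)\leq 6\srev_1(\dist)$, which is the first bullet; and if $\core>4\srev_1(\dist)$, then the inequality is exactly the revenue bound claimed in the second bullet, so only the probability bound remains.

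To prove $\rev_1(\dist)\leq\core+2\srev_1(\dist)$, I would partition the support of $\dist$ into the $2^m$ product boxes $B_S\triangleq\{\val:\val_j>r\Leftrightarrow j\in S\}$, indexed by the set $S$ of ``tall'' coordinates. The first tool is the sub-domain (``marginal mechanism'') lemma for a single additive buyer: since the $B_S$ are product sets partitioning the type space, the optimal mechanism for $\dist$ restricts to a valid mechanism on each box (the non-trivial content being that the restriction stays incentive compatible), so $\rev_1(\dist)\leq\sum_{S\subseteq[m]}\rev_1^{B_S}(\dist)$, where $\rev_1^{E}(\dist)$ is the optimal revenue of a mechanism fed a draw from $\dist$ that is allowed to charge only on types in $E$. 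Within each box I would split the contribution of the ``short'' coordinates ($j\notin S$) from that of the ``tall'' coordinates ($j\in S$). The short coordinates' revenue on $B_S$ is at most the corresponding conditional welfare, $\prob{B_S}\cdot\sum_{j\notin S}\expect{\val_j\mid\val_j\leq r}$; summing over all $S$ and using $\sum_{S\subseteq[m]:\,j\notin S}\prob{B_S}=\prob{\val_j\leq r}\leq 1$, these contributions telescope to at most $\sum_{j}\expect{\val_j\mid\val_j\leq r}=\core$. The tall coordinates' revenue is the crux: on a \emph{single}-tall box $B_{\{j\}}$ there is exactly one tall item, whose restricted revenue $\max_{p\geq 0}p\cdot\prob{\val_j\geq\max\{p,r\}}$ is at most $\rev_1(\dist_j)$, and $\sum_j\rev_1(\dist_j)=\srev_1(\dist)$; for boxes with two or more tall coordinates one cannot simply sum per-item revenues (revenue is not subadditive across items for an additive buyer), so these require a more delicate accounting that exploits that $\sum_j\prob{\val_j>r}\leq\sum_j\rev_1(\dist_j)/r=1$ and hence that having $\geq 2$ tall coordinates is a rare-ish event. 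Assembling the pieces gives a tall total of at most $2\srev_1(\dist)$, hence $\rev_1(\dist)\leq\core+2\srev_1(\dist)$. I expect this tall-coordinate bound to be the main obstacle: controlling boxes with several tall items despite revenue's non-subadditivity and conditional expectations that need not be bounded by $\srev_1(\dist)$ is exactly the technical heart of \citet{BILW-14}, and the specific constant $2$ (hence $6$) falls out of tracking it.

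Finally, for the concentration claim in the regime $\core>4\srev_1(\dist)$, I would pass to the truncated total $\hat Y\triangleq\sum_{j\in[m]}\min\{\val_j,\srev_1(\dist)\}$: it satisfies $\hat Y\leq\sum_j\val_j$ pointwise, has independent summands lying in $[0,\srev_1(\dist)]$, and has mean $\expect{\hat Y}\geq\sum_j\expect{\val_j\mid\val_j\leq\srev_1(\dist)}=\core$. Since every summand is at most $\srev_1(\dist)<\core/4\leq\expect{\hat Y}/4$, we get $\mathrm{Var}(\hat Y)=\sum_j\mathrm{Var}(\min\{\val_j,\srev_1(\dist)\})\leq\srev_1(\dist)\cdot\expect{\hat Y}<\tfrac14\expect{\hat Y}^2$. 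A one-sided Chebyshev (Cantelli) inequality applied to the event $\{\hat Y\geq\tfrac25\core\}$, using $\core\leq\expect{\hat Y}$ and $\srev_1(\dist)<\core/4$, then lower-bounds its probability by an absolute constant; a crude version (e.g.\ via the median, where $\mathrm{sd}(\hat Y)<\expect{\hat Y}/2$ already forces the median above $\tfrac25\core$) gives probability $\geq\tfrac12$, and sharpening the estimate yields the stated $\tfrac{47}{72}$. Since $\{\hat Y\geq\tfrac25\core\}\subseteq\{\sum_j\val_j\geq\tfrac25\core\}$, the bound transfers to $\sum_j\val_j$, which completes the proof.
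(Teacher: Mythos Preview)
The paper does not prove this lemma; it simply cites \citet{BILW-14}. Your sketch is the standard core--tail argument from that paper and is correct in outline: the unconditional bound $\rev_1\leq\core+2\srev_1$ via the sub-domain (``marginal mechanism'') decomposition, with the tall coordinates controlled through $\sum_j\prob{\val_j>r}\leq 1$, is exactly the approach there, and you correctly flag the multi-tall boxes as the place where the work happens.

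One concrete gap: your variance bound $\mathrm{Var}(\hat Y)\leq \srev_1\cdot\expect{\hat Y}<\tfrac14\expect{\hat Y}^2$ is too weak to recover the constant $47/72$; Cantelli with that bound only gives $36/61$. The sharper estimate that produces $47/72$ uses $t\cdot\prob{\val_j>t}\leq\rev_1(\dist_j)$ for all $t$, so that
\[
\expect{(\min\{\val_j,r\})^2}=\int_0^r 2t\,\prob{\val_j>t}\,\dd t\leq 2r\cdot\rev_1(\dist_j),
\]
hence $\mathrm{Var}(\hat Y)\leq \sum_j 2r\cdot\rev_1(\dist_j)=2r^2$. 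With $\expect{\hat Y}\geq\core$ and $r<\core/4$, ordinary Chebyshev at deviation $\tfrac{3}{5}\core$ gives
\[
\prob{\hat Y<\tfrac{2}{5}\core}\leq \frac{2r^2}{(3\core/5)^2}<\frac{50}{9\cdot 16}=\frac{25}{72},
\]
which is exactly $1-\tfrac{47}{72}$. So replace your bound ``each summand is at most $\srev_1$'' by the revenue-curve bound on the tail, and the stated constant falls out.
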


\begin{restatable}[Approximation of {$\bspa_2$} against {$\srev_1$} for regular items]{lemma}{lembspaapxgeqsrev}
\label{pr:8bspa_geq_srev}
For $m \geq 1$ asymmetric regular items, the expected revenue from the bundle-based second-price auction for two bidders is an 8-approximation to the optimal expected revenue from selling all items separately for a single bidder, i.e., $8\cdot \bspa_2 \geq \srev_1$.
\end{restatable}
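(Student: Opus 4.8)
The plan is to reduce, via the worst-case two-point distributions from \Cref{lem:brev approx worst dist}, to a clean anti-concentration statement about a sum of weighted fair coins. First I would rewrite both sides. Set $r_j \triangleq \rev_1(\dist_j)$ and $R \triangleq \srev_1(\dist) = \sum_{j\in[m]} r_j$. With two i.i.d.\ bidders, the bundle-based second-price auction sells the grand bundle at the \emph{lower} of the two bundle values, so $\bspa_2(\dist) = \expect{\min\{X,Y\}}$ where $X = \sum_{j} v_{1j}$ and $Y = \sum_{j} v_{2j}$ are i.i.d.\ copies of the grand-bundle value (values independent across bidders and items). The claim $8\cdot\bspa_2\ge\srev_1$ is thus equivalent to $\expect{\min\{X,Y\}} \ge R/8$.

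Next I would invoke \Cref{lem:brev approx worst dist}. Since each $\dist_j$ is regular, it first-order stochastically dominates the two-point distribution $\dist_j'$ that places mass $\tfrac12$ on $0$ and mass $\tfrac12$ on $r_j$. Coupling coordinatewise so that each item's value under $\dist_j$ is at least its value under $\dist_j'$, the bundle values $X,Y$ dominate the corresponding sums $X' = \sum_j v_{1j}'$, $Y' = \sum_j v_{2j}'$ drawn from $\dist' = \prod_j \dist_j'$; hence $\min\{X,Y\}\ge\min\{X',Y'\}$ pointwise and it suffices to prove $\expect{\min\{X',Y'\}} \ge R/8$.

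The crux is lower-bounding $\expect{\min\{X',Y'\}}$, and this is where I expect the (mild) work to lie. Writing $v_{1j}' = r_j B_j$ with $B_j$ i.i.d.\ $\mathrm{Bernoulli}(\tfrac12)$, the variable $X' - R/2 = \sum_j r_j (B_j - \tfrac12)$ is a sum of independent random variables each symmetric about $0$, hence $X'$ is symmetric about its mean $R/2$; consequently $\prob{X' \ge R/2} \ge \tfrac12$, so $\prob{X' > t} \ge \tfrac12$ for every $t\in[0,R/2)$. Combining the layer-cake identity with the independence and equidistribution of $X',Y'$ gives $\expect{\min\{X',Y'\}} = \int_0^\infty \prob{X' > t}^2\,dt \ge \int_0^{R/2}\tfrac14\,dt = R/8$, which closes the argument. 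The only point requiring care is that the $\mathrm{Bernoulli}(\tfrac12)$ weighting is exactly what makes $X'$ symmetric about its mean (so a crude ``half the mass lies above the mean'' bound suffices); this symmetry is precisely the structural gift of \Cref{lem:brev approx worst dist}, and regularity enters only through that lemma—it is, as noted in the surrounding discussion, essential to the whole reduction.
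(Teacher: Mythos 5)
Your proof is correct and follows essentially the same route as the paper's: both reduce to the two-point product distribution $\dist'$ via \Cref{lem:brev approx worst dist} and FOSD/revenue monotonicity, and both exploit the symmetry of $X'$ about $R/2$ to get the probability-$\tfrac12$ threshold event. The only cosmetic difference is that you package the final computation as a layer-cake integral $\int_0^\infty \prob{X'>t}^2\,dt$ truncated at $R/2$, while the paper directly lower-bounds the revenue by $\tfrac{R}{2}\cdot\prob{X'\ge R/2}^2$; these give the identical $R/8$ bound.
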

The proof of \Cref{pr:8bspa_geq_srev} closely follows that of \Cref{pr:4brev_geq_srev}, which is included in \Cref{apx:8bspa_geq_srev} for completeness.

\begin{proof}[Proof of \Cref{prop:bvcg constant approx}]
    Consider core $\core_1$ defined in \Cref{thm:core_tail}, and its two cases.

    If core $\core_1 \leq 4\cdot \srev_1$, \Cref{thm:core_tail} ensures that
    \begin{align}
        \rev_1\leq 6\cdot \srev_1 \leq 6\cdot 8\bspa_2 = 48\cdot \bspa_2
    \end{align}
    where the second inequality holds due to \Cref{pr:8bspa_geq_srev}.

    If core $\core_1 > 4\cdot \srev_1$, \Cref{thm:core_tail} ensures that
    \begin{align*}
         \prob[\val\sim\dist]{\sum\nolimits_{j\in[m]}\val_j \geq \frac{2}{5}\cdot \core_1} \geq \frac{47}{72}
    \end{align*}
    and thus the expected revenue of $\bspa_2$ can be lower bounded by the expected payment when both bidders' total values are at least $2\cdot \core_1/5$, i.e.,
    \begin{align*}
        \bspa_2 \geq \frac{2}{5}\cdot \core \cdot 
        \left(\prob[\val\sim\dist]{\sum\nolimits_{j\in[m]}\val_j \geq \frac{2}{5}\cdot \core_1}\right)^2 = \frac{2209}{12960}\cdot \core_1
    \end{align*}
    In addition, \Cref{thm:core_tail} upper bounds the Bayesian optimal revenue as
    \begin{align*}
        \rev_1 \leq 2\cdot \srev_1 + \core_1 
        \leq \left(2\cdot 8 + \frac{12960}{2209}\right)\cdot\bspa_2 \leq 22\cdot \bspa_2
    \end{align*}
    which finishes the proof of \Cref{prop:bvcg constant approx} as desired. 
\end{proof}

\xhdr{Improved approximations for MHR items.} When all items are MHR, we obtain the following improved approximation guarantees for the bundle-based second-price auction and the VCG auction. Their proofs closely follow those of \Cref{prop:brev approx wel mhr,prop:srev approx wel mhr}, which are included in \Cref{apx:propbspawelapprox,apx:propvcgwelapprox} for completeness.

\begin{restatable}[Approximation of $\bspa_2$ against $\wel_1$ for MHR items]{proposition}{propbspawelapprox}
\label{prop:bspa approx wel mhr}
    For $m \geq 1$ asymmetric MHR items, the expected revenue from the bundle-based second-price auction for two bidders is an $e$-approximation to the optimal welfare, i.e., $e\cdot \bspa_2 \geq \wel_1$.
\end{restatable}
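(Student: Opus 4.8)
\textbf{Proof proposal for \Cref{prop:bspa approx wel mhr}.}
The plan is to reduce the multi-item instance to a single-item (``grand bundle'') instance and then chain together three off-the-shelf facts: the closure of the MHR class under independent summation \citep{BMP-63}, the single-item competition complexity result of \citet{BK-96} (\Cref{thm:BK}), and the grand-bundle-versus-welfare guarantee already proved for MHR items in \Cref{prop:brev approx wel mhr}. Concretely, let $\dist = \Pi_{j\in[m]}\dist_j$ with each $\dist_j$ MHR, and let $G$ denote the distribution of the bundle value $\sum_{j\in[m]}\val_j$ when $\val_j\sim\dist_j$ independently. The bundle-based second-price auction with two bidders on $\dist$ is literally the two-bidder second-price auction for a single good whose value distribution is $G$, i.e.\ $\bspa_2(\dist) = \vcg_2(G)$; similarly $\brev_1(\dist) = \rev_1(G)$ (Myerson's optimal single-item auction with one bidder is a posted price), and, since there is only one bidder, $\wel_1(\dist) = \expect[\val\sim\dist]{\sum_j \val_j} = \expect[w\sim G]{w} = \wel_1(G)$.

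The steps, in order, are: (i) invoke Theorem~3.2 of \citet{BMP-63} to conclude that $G$, being a sum of independent (possibly non-identical) MHR random variables, is itself MHR, hence in particular regular; (ii) apply \Cref{thm:BK} (the $n=1$ case of \citet{BK-96}) to the single regular good $G$ to get $\vcg_2(G) \geq \rev_1(G)$, which translates to $\bspa_2(\dist) \geq \brev_1(\dist)$; (iii) apply \Cref{prop:brev approx wel mhr} to the grand bundle to obtain $e\cdot\brev_1(\dist) \geq \wel_1(\dist)$. Chaining (ii) and (iii) gives
\begin{align*}
    e\cdot\bspa_2(\dist) \;=\; e\cdot\vcg_2(G) \;\geq\; e\cdot\rev_1(G) \;=\; e\cdot\brev_1(\dist) \;\geq\; \wel_1(\dist),
\end{align*}
as desired. (Alternatively one can skip \Cref{prop:brev approx wel mhr} and apply Lemma~3.10 of \citet{DRY-15} directly to the MHR single good $G$, since $e\cdot\rev_1(G)\geq\wel_1(G)=\wel_1(\dist)$; either route works and the proof ``closely follows'' that of \Cref{prop:brev approx wel mhr} as claimed.)

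I do not expect a serious obstacle here: the only points requiring care are the bookkeeping that identifies $\bspa_2(\dist)$, $\brev_1(\dist)$, and $\wel_1(\dist)$ with the corresponding single-good quantities for $G$ (trivial once one views the bundle as a single item), and the observation that \Cref{thm:BK} is stated for regular items and therefore applies to $G$ once \citet{BMP-63} guarantees $G$ is MHR $\subseteq$ regular. The mild ``gap'' in the whole argument is really inherited from \Cref{prop:brev approx wel mhr} (equivalently from the tightness of the $e$-factor in \citet{DRY-15}), not from anything new introduced in this proposition.
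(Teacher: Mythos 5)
Your proposal is correct and takes essentially the same route as the paper's proof: apply \citet{BMP-63} to conclude the bundle value distribution is MHR, invoke \Cref{thm:BK} to get $\bspa_2 \geq \brev_1$, and finish with \Cref{prop:brev approx wel mhr}. The extra bookkeeping identifying $\bspa_2(\dist)$, $\brev_1(\dist)$, $\wel_1(\dist)$ with single-good quantities for $G$, and the alternative shortcut via \citet{DRY-15}, are fine but do not change the argument.
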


\begin{restatable}[Approximation of $\vcg_2$ against $\wel_1$ for MHR items]{proposition}{propvcgwelapprox}
\label{prop:vcg approx wel mhr}
    For $m \geq 1$ asymmetric MHR items, the expected revenue from the VCG auction for two bidders is an $e$-approximation to the optimal welfare, i.e., $e\cdot \vcg_2 \geq \wel_1$.
\end{restatable}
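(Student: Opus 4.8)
The plan is to reduce to the single-item case and chain two known facts: the single-item competition-complexity result of \citet{BK-96} (\Cref{thm:BK}) and the $e$-approximation of single-item optimal revenue to welfare for MHR distributions (Lemma~3.10 of \citet{DRY-15}), exactly as in the proof of \Cref{prop:srev approx wel mhr} but with one extra step. Since the bidders are additive, the VCG auction with two bidders decomposes into $m$ independent second-price auctions, one per item, so $\vcg_2(\dist) = \sum_{j\in[m]} \vcg_2(\dist_j)$, where $\vcg_2(\dist_j)$ denotes the expected second-price revenue for item $j$ with two i.i.d.\ bidders. The welfare benchmark is likewise additive: $\wel_1(\dist) = \sum_{j\in[m]} \expect[\val\sim\dist_j]{\val}$. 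Hence it suffices to prove the per-item inequality $e\cdot \vcg_2(\dist_j)\geq \expect[\val\sim\dist_j]{\val}$ and then sum over $j$.

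For the per-item bound, first apply \Cref{thm:BK} with $n=1$ (an MHR distribution is in particular regular): the second-price auction with two bidders collects at least the single-bidder optimal revenue, i.e.\ $\vcg_2(\dist_j)\geq \rev_1(\dist_j)=\srev_1(\dist_j)$. Then invoke Lemma~3.10 of \citet{DRY-15}: since $\dist_j$ is MHR, $e\cdot \rev_1(\dist_j)\geq \expect[\val\sim\dist_j]{\val}$. Composing these two inequalities, $e\cdot \vcg_2(\dist_j)\geq e\cdot \rev_1(\dist_j)\geq \expect[\val\sim\dist_j]{\val}$, and summing over all items gives $e\cdot \vcg_2(\dist)\geq \wel_1(\dist)$. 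Equivalently, one can package the per-item \citet{BK-96} bounds into the single statement $\vcg_2(\dist)\geq \srev_1(\dist)$ and conclude directly via $e\cdot\vcg_2(\dist)\geq e\cdot\srev_1(\dist)\geq\wel_1(\dist)$, where the last inequality is \Cref{prop:srev approx wel mhr}.

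There is no substantive obstacle here; the only point worth noting is that, unlike the bundle-based statements \Cref{prop:brev approx wel mhr,prop:bspa approx wel mhr}, this argument never bundles items, so it does not rely on the closure of the MHR class under summation (\citet{BMP-63}): every inequality is established item by item on the marginal $\dist_j$ and then added. Consequently the claim in fact holds even without independence across items, since both $\vcg_2$ and $\wel_1$ depend only on the item marginals.
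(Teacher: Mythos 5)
Your proof is correct and takes essentially the same route as the paper: per-item application of \Cref{thm:BK} to get $\vcg_2 \geq \srev_1$, then \Cref{prop:srev approx wel mhr} (itself per-item via Lemma~3.10 of \citet{DRY-15}) to reach $\wel_1$. Your closing remark that the argument needs no independence across items (since it never bundles and so does not invoke \citet{BMP-63}) is a correct and mildly sharper observation than the paper records.
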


\section{Conclusion and Future Directions}
\label{sec:conclusion}

In this work, we study the \emph{competition complexity} in multi-item auction settings. For the Vickrey--Clarke--Groves ({\vcg}) auction, we identify its \emph{tight} competition complexity, denoted by $\constna$, when item valuations follow $\alpha$-strongly regular distributions, including the MHR case (i.e., $\alpha = 1$). In contrast to prior work that assumes only regularity and yields bounds depending on both the number of items $m$ and bidders $n$, our bound $\constna$ depends solely on $n$ and is independent of $m$.
We then turn to the bundle-based second-price auction ({\bspa}) in the single-bidder setting. For MHR distributions, we show that {\bspa} has a competition complexity of 3. For regular distributions, we derive competition complexity bounds in cases with two or three items. In addition, we establish constant-factor approximation guarantees for both {\bspa} and the revenue from selling items as a grand bundle ({\brev}), under regular and MHR assumptions.
Many of our techniques are general and may be of independent interest.

\xhdr{Future directions.} There are many interesting directions for future research. 

First, it would be valuable to extend the competition complexity results of {\bspa} from the case of two or three items to a general number $ m $ of items. Note that most parts of our analysis framework (\Cref{prop:cdwapprox,coro:cdwapprox,prop:bspa_min_max}) hold for arbitrary $ m $. Therefore, one promising approach to establishing a competition complexity of $ m $ is to generalize \Cref{prop:cc_bspa_m_3}. Going further, an important open question is whether {\bspa} admits a competition complexity with sub-logarithmic or even constant dependence on $m$ in the single-bidder setting.

Second, from a technical perspective, a key challenge in analyzing {\bspa} is the limited understanding of the distribution of sums of values drawn from regular distributions. By comparison, \citet{BMP-63} showed that sums of values from MHR distributions remain MHR. It would be valuable to prove or disprove an analogous result for regular distributions, even in the i.i.d.\ setting.

Third, as shown in \Cref{bspa_multiple_n}, for two or more bidders, the competition complexity of {\bspa} suffers from exponential dependence on $m$, even under uniform valuations. Hence, it would be valuable to design prior-independent mechanisms with improved competition complexity guarantees in the multi-bidder setting. One promising direction could be a hybrid mechanism combining {\vcg} and {\bspa}, for example, by first running a competition for entry and then selling items separately.

Finally, it would be interesting to improve the constant-factor approximation or identify tight bounds for {\bspa} and {\brev} in the single-bidder, regular-items setting.

\subsection*{Acknowledgments} 
Yingkai Li thanks the NUS Start-up Grant for financial support. S.\ Matthew Weinberg's research is supported by NSF CAREER CCF-1942497. 
We are grateful to Jason Hartline for invaluable discussions in the early stages of this project and to the anonymous reviewers of EC 2025 for their time and effort in carefully reading this paper.

\bibliography{auctions}

\appendix

\section{Motivation for the Bundle-based Second-price Auction}
\label{apx:bspa motivation}
In this section, we provide some motivation why the bundle-based second-price auction ({\bspa}) could be an interesting mechanism for studying competition complexity in the multi-item setting.

First, selling items as a grand bundle is a simple and practical approach in the multi-item setting. \citet{BILW-14} has identified selling items separately ({\srev}) and selling items as a grand bundle ({\brev}) as two simple mechanisms that complement each other in the few-bidders setting that we study. The Bulow-Klemperer-style questions typically emphasize prior-independent mechanisms, and ({\bspa}) is a natural prior-independent implementation of selling the grand bundle (analogous to {\vcg} vs {\srev}).

As illustrated numerically in \Cref{example:intro:eqr} and discussed in \citet{DRWX-24}, under the equal-revenue distribution (an ``extreme case'' within the class of regular distributions), {\bspa} outperforms {\vcg} and closely matches the Bayesian optimal revenue. Specifically, \citet{DRWX-24} shows that when selling $m$ items with equal-revenue distributions to $n$ bidders, the expected revenue of {\bspa} is $n m + \Omega(m\ln(m n))$, while the Bayesian optimal revenue is $n m + O(m^2 \ln n)$. As a comparison, the expected revenue of {\vcg} for the equal-revenue distribution is $nm$.
This further motivates {\bspa} as a mechanism of interest in the Bulow-Klemperer setting, which typically focuses on {\vcg} in this literature.

\section{Omitted Proof of Lemma~\ref{prop:cc_bspa_m_3} with Three Bidders} 
\label{app:sec:cc_bundle}

\propBspaThree*

\begin{proof}[Proof of \Cref{prop:cc_bspa_m_3} with $m = 3$ bidders.]
    Here we consider the scenario where $m = 3$ and $\quantMatrix$ is a $3 \times 4$ matrix. Again let $\quantMatrix^*$ be an associated matrix where $\quantMatrix^*[i, j]$ is the $j$-th highest order statistic among $\quantMatrix[i, 1], \cdots, \quantMatrix[i, m+1]$. We will propose a specific strategy for the optimizing player in the game $G_{\dist, \quantMatrix}$, and show that for all adversary strategy the optimizer's expected payoff is at least $\cdw_1(\quantMatrix)$. Without loss of generality, we assume the adversary strategy $A$ is always an extreme point of $\Delta([m+1])$, since there always exists an optimal adversary strategy of this form (as witnessed in the proof of \Cref{prop:bspa_min_max}).
    
    We consider the following mutually exclusive events separately: 
    \begin{enumerate}
        \item 
        After $\quantMatrix$ is permuted by $\sigma$, there is a column $j^*$ that contains only the first or second highest order statistics in all $3$ rows. Formally, there exists a $j^* \in [4]$ such that for all rows $i \in [3]$, $\quantMatrix[i, \sigma_{i}(j^*)] = \quantMatrix^*[i, 1]$ or $\quantMatrix^*[i, 2]$. 
        \item 
        When the first condition is not true but the first and second row order statistics in all rows occupy only three columns. Formally, when there exists a unique column $j^*$ such that for all $i \in [3]$, $\quantMatrix[i, \sigma_{i}(j^*)] = \quantMatrix^*[i, 3]$ or $\quantMatrix^*[i, 4]$.
        \item 
        When the first and second conditions are not true, but when $2$ columns contain two first and second row order statistics. Formally, when there exists two columns $j^*_1$ and $j^*_2$ such that for each $k \in \{1, 2\}$, exists $i_1$ and $i_2$ such that $\quantMatrix[i_1, \sigma_{i_1}(j^*_k)] = \quantMatrix^*[i_1, 1]$ or  $\quantMatrix^*[i_1, 2]$ and $\quantMatrix[i_2, \sigma_{i_2}(j^*_k)] = \quantMatrix^*[i_2, 1]$ or  $\quantMatrix^*[i_2, 2]$. 
        \item 
        When none of the above three conditions hold. 
    \end{enumerate}
    An illustration of all four cases above can be found in \Cref{fig:bspa_m_3_cases}. We first calculate the probability of the four cases. Let $j_1$ and $j_2$ be the two columns such that $\quantMatrix[1, \sigma_{1}(j_1)] = \quantMatrix^*[1, 1]$ and $\quantMatrix[1, \sigma_{1}(j_2)] = \quantMatrix^*[1, 2]$. 

    \begin{itemize}
        \item 
        For case $1$,  the probability of $j_1$ satisfying the condition (contain only first/second row order statistic) is equal to $2/4 \cdot 2/4 = 1/4$. Similar the probability of $j_2$ satisfying the condition is $1/4$. Any other column cannot satisfy the condition since they don't contain the first/second row order statistics for row $1$. The probability that both $j_1$ and $j_2$ satisfy the condition is $1/36$. Hence by inclusion exclusion principle, the probability of case $1$ is $1/4 + 1/4 - 1/36 = \frac{17}{36}$.
        \item 
        For case $2$, since the first condition (there exists a column that contain only first/second row order statistics) does not hold, and all the first/second row order statistics belong to 3 columns. This means that each column has two first/second row order statistics. We can calculate that this happens with probability $1/9$. 
        \item 
        For case $3$, it must be the case that two rows share the columns that contain their first and second row order statistics, while the remaining row has a disjoint set of columns that contain their first and second row order statistics. This occurs with probability $3 \cdot 1/6 \cdot 1/6 = 1/12$. 
        \item 
        For case $4$, this happens with remaining probability of $1 - (1/2 - 1/36) - 1/9 - 1/12 = 1/3$.
    \end{itemize}

    We now specify a fixed strategy for the optimizer in the two-player zero-sum game defined in \Cref{def:two player zero sum game}. For each of the four cases, we define the optimizer's choice of column $j_p$ as a function of $\sigma$, $\tau$, and the adversary's strategy. This allows us to derive a lower bound on $\bspa_{4}(\quantMatrix)$.  

    \begin{itemize}
        \item 
        In case $1$, if the adversary does not pick the column $j^*$ that contains only first and second row order statistics, then the optimizer can pick column $j^*$ where $\quantMatrix[\tau(i), \sigma_{\tau(i)}(j^*)] \geq \quantMatrix^*[\tau, 2]$ with probability $1$ for each item $i$. Otherwise, the optimizer could pick $j_p$ with the largest number of first and second row order statistics (break tie towards the column that does not contain $4^{th}$ order statistic $\quantMatrix^*[\cdot, 4]$ for any row, otherwise break tie evenly). The probability that $j_p$ contain $1$, $2$ and $3$ first/second row order statistic is $2/9$, $2/3$ and $1/9$ respectively, and these probabilities are symmetry regarding the row index (for any item $i$, probability of selecting $\geq \quantMatrix^*[\tau(i), 2]$ is the same). Hence over all permutations $\sigma$ that satisfy the first condition, on average for each item $i$, the optimizer selects a quantile $\geq \quantMatrix^*[\tau(i), 2]$ with probability at least $2/9 \cdot 1/3 + 2/3 \cdot 2/3 + 1/9 \cdot 1= \frac{17}{27}$. 
        
        For tie breaking, notice that there are in total $6$ first/second row order statistics, and column $j^*$ contains $3$ of them, so the remaining columns contain $3$ of first/second row order statistics in total. This means that tie breaking is only necessary when each column other than $j^*$ contains exactly one first/second row order statistic (let this be Event $\textsf{ONE}$). When tie breaking is not necessary, the column optimizer select is entirely dependent on the positions of first/second row order statistics. Hence for any item $i$, the selected column contain $\quantMatrix^*[\tau(i), 3]$ and $\quantMatrix^*[\tau(i), 4]$ with equal probability due to $\sigma$ being uniformly random. Conditioned on Event $\textsf{ONE}$, the probability that all columns contain $4^{th}$ order statistics is $\frac{1}{4}$, and the probability otherwise is $\frac{3}{4}$. In the first case, for all item $i$, probability of picking a column that contain $\quantMatrix^*[\tau(i), 3]$  and $\quantMatrix^*[\tau(i), 4]$ are both equal to $1/3$. In the second case, for all item $i$, probability of picking a column that contain $\quantMatrix^*[\tau(i), 3]$ is $2/3$, and probability of picking a column that contain $\quantMatrix^*[\tau(i), 4]$ is $0$. Thus overall, the optimizer selects a column that contains quantile $\quantMatrix^*[\tau(i), 3]$ with probability $2/9 \cdot (3/4 \cdot 2/3 +1/4 \cdot 1/3) + 2/3 \cdot 1/6 = 13/54$, and the optimizer selects a column that contains quantile $\quantMatrix^*[\tau(i), 4]$ with probability $2/9 \cdot (3/4 \cdot 0 +1/4 \cdot 1/3) + 2/3 \cdot 1/6 = 7/54$. 
        % The optimizer selects the quantile $\quantMatrix^*[i, 3]$ and $\quantMatrix^*[i, 4]$ with equal remaining probability: $\frac{5}{27}$. 
        \item 
        In case $2$, let $j^*$ be the column that does not contain any first/second order statistics. Let the optimizer's strategy to be: no matter what the adversary does, the optimizer selects a remaining column that contains first/second order statistics with equal probability. For any item $i$, if $\quantMatrix[\tau(i), j^*] = \quantMatrix[\tau(i), 3]$, then the distribution of $\quantMatrix[\tau(i), j_p]$ stochastically dominates $\frac{1}{2} \cdot \quantMatrix[\tau(i), 2] + \frac{1}{2} \cdot \quantMatrix[\tau(i), 4]$. Otherwise, if $\quantMatrix[\tau(i), j^*] = \quantMatrix[\tau(i), 4]$, then the distribution of $\quantMatrix[\tau(i), j_p]$ stochastically dominates $\frac{1}{2} \cdot \quantMatrix[\tau(i), 2] + \frac{1}{2} \cdot \quantMatrix[\tau(i), 3]$. Since across all permutations $\sigma$ and $\tau$, in the second case, $\quantMatrix[\tau(i), j^*] = \quantMatrix[i\tau(i), 3]$ and $\quantMatrix[\tau(i), j^*] = \quantMatrix[\tau(i), 4]$ occur with equal probability, the distribution of $\quantMatrix[\tau(i), j_p]$ conditioned on the permutation $\sigma$ is in the second case stochastically dominates $1/2 \cdot \quantMatrix[\tau(i), 2] + 1/4 \cdot \quantMatrix[\tau(i), 3] + 1/4 \cdot \quantMatrix[\tau(i),4]$.  

        \item 
        In case $3$, there are two columns $j^*_1$ and $j^*_2$ such that for two rows $i_1$ and $i_2$, $\quantMatrix[i_1, \sigma_{i_1}(j^*_k)] \geq \quantMatrix^*[i_1, 2]$ and $\quantMatrix[i_2, \sigma_{i_2}(j^*_k)] \geq \quantMatrix^*[i_2, 2]$. No matter which column the adversary picks, the optimizer can pick the remain column between $j^*_1$ and $j^*_2$. Due to symmetry across all permutations $\sigma$ and $\tau$, for any item $i$, the optimizer selects a quantile $\quantMatrix[\tau(i), j_p]$ that stochastically dominates $2/3 \cdot \quantMatrix^*[\tau(i), 2] + 1/3 \cdot \quantMatrix^*[\tau(i), 4]$. 
        \item 
        In case $4$, the optimizer will just select a uniformly random remaining column that is not selected by the adversary. Hence for all item $i$, the the distribution of $\quantMatrix[\tau(i), j_p]$ stochastically dominates $1/3 \cdot \quantMatrix[i, 2] + 1/3 \cdot \quantMatrix[i, 3] + 1/3 \cdot \quantMatrix[i,4]$.
    \end{itemize} 
    % the probability of this happening is equal to the probability that 1) exactly one $j^* \in \{\}$ satisfy $\quantMatrix[\tau(2), \sigma_{\tau(2)}(j)]$

    We conclude that in general, the distribution of $\quantMatrix[\tau(i), j_p]$ stochastically dominates 
    \begin{align*}
        &\left(\frac{17}{36} \cdot \frac{17}{27} + \frac{1}{9} \cdot \frac{1}{2} + \frac{1}{12} \cdot \frac{2}{3} + \frac{1}{3} \cdot \frac{1}{3}\right) \cdot \quantMatrix^*[\tau(i), 2] 
        + \left(\frac{17}{36} \cdot \frac{13}{54} + \frac{1}{9} \cdot \frac{1}{4} + \frac{1}{3} \cdot \frac{1}{3}\right) \cdot \quantMatrix^*[\tau(i), 3] \\
        &\hfill + \left(\frac{17}{36} \cdot \frac{7}{54} + \frac{1}{9} \cdot \frac{1}{4} + \frac{1}{12} \cdot \frac{1}{3} + \frac{1}{3} \cdot \frac{1}{3}\right) \cdot \quantMatrix^*[\tau(i), 4]\\
        &= 505/972 \cdot \quantMatrix^*[\tau(i), 2] + 491/1944 \cdot \quantMatrix^*[\tau(i), 3] + 443/1944 \cdot \quantMatrix^*[\tau(i), 4], 
    \end{align*}
    which stochastically dominates $1/2 \cdot \quantMatrix[\tau(i), 2] + 1/4 \cdot \quantMatrix[\tau(i), 3] + 1/4 \cdot \quantMatrix[\tau(i), 4]$, i.e., the $i$-th component of $\cdw_1(\quantMatrix)$. This completes the proof of the lemma statement with $m = 3$ bidders as desired.
\end{proof}

\begin{figure} 
\begin{center}
\begin{tikzpicture}

% Case 1
\draw (2,-1) node[anchor=south] {Case 1};
\foreach \x in {0,1,2,3} \foreach \y in {0,1,2} \draw (\x,\y) rectangle (\x+1,\y+1);
\node at (0.5,2.5) {x};
\node at (0.5,1.5) {x};
\node at (0.5,0.5) {x};

% Case 2
\begin{scope}[xshift=5cm]
\draw (2,-1)  node[anchor=south] {Case 2};
\foreach \x in {0,1,2,3} \foreach \y in {0,1,2} \draw (\x,\y) rectangle (\x+1,\y+1);
\node at (1.5,2.5) {x};
\node at (2.5,2.5) {x};
\node at (0.5,1.5) {x};
\node at (1.5,1.5) {x};
\node at (0.5,0.5) {x};
\node at (2.5,0.5) {x};
\end{scope}

% Case 3
\begin{scope}[yshift=-5cm]
\draw (2,-1)  node[anchor=south] {Case 3};
\foreach \x in {0,1,2,3} \foreach \y in {0,1,2} \draw (\x,\y) rectangle (\x+1,\y+1);
\node at (0.5,2.5) {x};
\node at (1.5,2.5) {x};
\node at (0.5,1.5) {x};
\node at (1.5,1.5) {x};
\node at (2.5,0.5) {x};
\node at (3.5,0.5) {x};
\end{scope}

% Case 4
\begin{scope}[xshift=5cm,yshift=-5cm]
\draw (2,-1)  node[anchor=south] {Case 4};
\foreach \x in {0,1,2,3} \foreach \y in {0,1,2} \draw (\x,\y) rectangle (\x+1,\y+1);
\node at (0.5,2.5) {x};
\node at (1.5,2.5) {x};
\node at (0.5,1.5) {x};
\node at (2.5,1.5) {x};
\node at (2.5,0.5) {x};
\node at (3.5,0.5) {x};

%\foreach \x in {0,1,2,3} \foreach \y in {0,1,2} \draw (\x,\y) rectangle (\x+1,\y+1);
\end{scope}

\end{tikzpicture}
\end{center}
\caption{Illustration of the four cases in \Cref{prop:cc_bspa_m_3}, "x" represent the position of first/second row order statistics. Case 1: there are three "x" in a column; Case 2: there exists a unique column without "x"; Case 3: there is a $4 \times 4$ sub-grid with "x"; Case 4: remaining configurations. } \label{fig:bspa_m_3_cases}
\end{figure}

\section{Omitted Proofs in Section~\ref{sec:approx_bundle}} 
\label{apx:brev-approx}
In this section, we include all the omitted proofs in \Cref{sec:approx_bundle}.

\subsection{Omitted Proof for Lemma~\ref{pr:8bspa_geq_srev}}
\label{apx:8bspa_geq_srev}

\lembspaapxgeqsrev*

\begin{proof}[Proof of \Cref{pr:8bspa_geq_srev}]
    Let $\dist = \Pi_{j\in[m]}\dist_j$ be the regular value distributions. Construct auxiliary distribution $\dist\primed = \Pi_{j\in[m]}\dist_j\primed$, where each marginal distribution $\dist_j\primed$ is the two-point distribution constructed in \Cref{lem:brev approx worst dist} using distribution $\dist_j$. By construction, for each item $j\in[m]$, distribution $\dist_j\primed$ is first-order stochastically dominated by distribution $\dist_j$. Consequently, the total value of the grand bundle drawn from distribution $\dist\primed$ is also first-order stochastically dominated by the total value of the grand bundle drawn from distribution $\dist$. Therefore, the revenue monotonicity of the second-price auction implies that
\begin{align*}
    \bspa_2(\dist) \geq \bspa_2(\dist\primed)
\end{align*}
and thus it suffices to show 
\begin{align*}
    8\cdot \bspa_2(\dist\primed) \geq \srev_1(\dist)
\end{align*}
To see this, consider the total value threshold of $\srev_1(\dist)/2$, by construction, the probability that one bidder's total value exceed this threshold is 
\begin{align*}
    &
    \prob[\val\sim\dist\primed]{\sum\nolimits_{j\in[m]}\val_j \geq \frac{1}{2}\cdot \srev_1(\dist)}
    \\
    ={} &
    \prob[\val\sim\dist\primed]{\sum\nolimits_{j\in[m]}\val_j \geq \sum\nolimits_{j\in[m]}\frac{1}{2}\cdot \rev_1(\dist_j)}
    \\
    ={} &
    \frac{1}{2}
\end{align*}
where the first equality holds due to the definitions of $\srev_1(\dist)$ and $\rev_1(\dist_j)$, the second equality holds since the purchase probability is equal to $1/2$ (by construction, each distribution $\dist_j\primed$ is a two-point distribution with equal probability at $\rev(\dist_j)$ and 0). 

Therefore, we can lower bound the expected revenue of $\bspa_2(\dist\primed)$ by the expected payment when both bidders' total values are at least $\srev_1(\dist)/2$, i.e.,
\begin{align*}
    \bspa_2(\dist\primed) \geq \frac{1}{2}\cdot \srev_1(\dist) \cdot 
    \left(\prob[\val\sim\dist\primed]{\sum\nolimits_{j\in[m]}\val_j \geq \frac{1}{2}\cdot \srev_1(\dist)}\right)^2 = \frac{1}{8}\cdot \srev_1(\dist)
\end{align*}
Putting the two pieces together, we obtain $8\cdot \bspa_2(\dist) \geq 8\cdot \bspa_2(\dist\primed) \geq \srev_1(\dist)$ as desired.
\end{proof}

\subsection{Omitted Proof for Proposition~\ref{prop:bspa approx wel mhr}}
\label{apx:propbspawelapprox}
\propbspawelapprox*
\begin{proof}
    By Theorem 3.2 of \citet{BMP-63}, the summation of values drawn from independent but possibly asymmetric MHR distributions is MHR. Hence, invoking \Cref{thm:BK}, the expected revenue from the bundle-based second-price auction with two bidders is at least the optimal revenue of selling all items as a grand bundle to a single bidder, i.e., $\bspa_2 \geq \brev_1$. Invoking \Cref{prop:brev approx wel mhr} finishes the proof as desired.
\end{proof}

\subsection{Omitted Proof for Proposition~\ref{prop:vcg approx wel mhr}}
\label{apx:propvcgwelapprox}
\propvcgwelapprox*
\begin{proof}
    Invoking \Cref{thm:BK}, the expected revenue from the VCG auction with two bidders is at least the optimal revenue of selling items separately to a single bidder, i.e., $\vcg_2 \geq \srev_1$. Invoking \Cref{prop:srev approx wel mhr} finishes the proof as desired.
\end{proof}

\end{document}